\newtheorem{lemma}{Lemma}[section]
\newtheorem{corollary}[lemma]{Corollary}
\newtheorem{theorem}[lemma]{Theorem}
\newcommand{\create}{\ensuremath{\textsc{Create}}}
\newcommand{\join}{\ensuremath{\textsc{Link}}}
\newcommand{\cut}{\ensuremath{\textsc{Cut}}}
\newcommand{\getnodevalue}{\ensuremath{\textsc{GetNodeValue}}}
\newcommand{\addsubtree}{\ensuremath{\textsc{AddSubtree}}}
\newcommand{\minpath}{\ensuremath{\textsc{MinPath}}}
\newcommand{\addpath}{\ensuremath{\textsc{AddPath}}}
\newcommand{\getdartvalue}{\ensuremath{\textsc{GetDartValue}}}
\newcommand{\junction}{\ensuremath{\textsc{Junction}}}
\newcommand{\lca}{\ensuremath{\textsc{LCA}}}
\newcommand{\evert}{\ensuremath{\textsc{Evert}}}
\newcommand{\GroveJoin}{\ensuremath{\textsc{GroveLink}}}
\newcommand{\GroveCut}{\ensuremath{\textsc{GroveCut}}}
\def\dart#1#2{#1\mathord\shortrightarrow#2}
\def\head{\mathit{head}}
\def\tail{\mathit{tail}}
\def\lsh{\mathit{left}}
\def\rsh{\mathit{right}}
\def\val{\mathit{val}}
\def\dist{\mathit{dist}}
\def\pred{\mathit{pred}}
\def\slack{\mathit{slack}}
\def\excess{\mathit{excess}}
\def\dvec#1{\vec{#1}\,^*}	   % dual dart
\def\hw{\widehat{w}}
\def\concat{\cdot}
\def\Gun{G_{\rm un}}
\title{Multiple-Source Shortest Paths in Embedded Graphs%
\thanks{A preliminary version of this work, without the third author's contributions, was presented at the 18th Annual ACM-SIAM Symposium on Discrete Algorithms~\cite{cc-07}.}}
\author{
  Sergio Cabello%
	\thanks{Department of Mathematics, IMFM, and Department of Mathematics, FMF, University of Ljubljana, Slovenia, \protect\url{sergio.cabello@fmf.uni-lj.si}. Research partially supported by the European Community Sixth Framework Programme under a Marie Curie Intra-European Fellowship, and by the Slovenian Research Agency, projects P1-0297, J1-7218 and J1-4106. 
}
  \and
  Erin W. Chambers%
  	\thanks{
Department of Mathematics and Computer Science, Saint Louis University, \protect\url{echambe5@slu.edu}.  Portions of this work were done while the author was affiliated with the University of Illinois, Urbana-Champaign.  Research partially supported by an NSF Graduate Research Fellowship and by NSF grants DMS-0528086 and CCF-1054779.}
  \and
  Jeff Erickson%
  	\thanks{Department of Computer Science, University of Illinois, Urbana-Champaign, \protect\url{jeffe@cs.uiuc.edu}.  Research partially supported by NSF grants DMS-0528086 and CCF-0915519.}
}
\begin{document}
\begin{titlepage}
\maketitle

\begin{abstract}
Let $G$ be a directed graph with $n$ vertices and non-negative weights in its directed edges, embedded on a surface of genus~$g$, and let $f$ be an arbitrary face of $G$.  We describe a randomized algorithm to preprocess the graph in $O(gn \log n)$ time with high probability, so that the shortest-path distance from any vertex on the boundary of~$f$ to any other vertex in~$G$ can be retrieved in $O(\log n)$ time.  Our result directly generalizes the $O(n\log n)$-time algorithm of Klein [Multiple-source shortest paths in planar graphs. In \emph{Proc.\ 16th Ann.\ ACM-SIAM Symp.\ Discrete Algorithms}, 2005] for multiple-source shortest paths in planar graphs.  Intuitively, our preprocessing algorithm maintains a shortest-path tree as its source point moves \emph{continuously} around the boundary of $f$.  As an application of our algorithm, we describe algorithms to compute a shortest non-contractible or non-separating cycle in embedded, undirected graphs in $O(g^2 n\log n)$ time with high probability.  Our high-probability time bounds hold in the worst-case for generic edge weights, or with an additional $O(\log n)$ factor for arbitrary edge weights.

\bigskip\noindent
\textbf{Keywords:}
computational topology, topological graph theory, parametric shortest paths, dynamic data structures
\end{abstract}

\thispagestyle{empty}
\setcounter{page}{0}
\end{titlepage}

%%%%%%%%%%%%%%%%%%%%%%%%%%%%%%%%%%%%%%%%%%%%%%%%%%%%%%%%%%%%%%%%%%%%%%%%%%
\section{Introduction}
\label{sec:intro}

Let $G$ be a directed graph with $n$ vertices and non-negative weights in the directed edges, embedded on a surface of genus $g$, and let $f$ be an arbitrary face of $G$.  In this paper, we describe an algorithm to preprocess the graph in $O(g n \log n)$ time and space, so that later the shortest-path distance from any vertex on the boundary of~$f$ to any other vertex in $G$ can be retrieved in $O(\log n)$ time.  Our preprocessing algorithm constructs an implicit representation of all shortest-path trees rooted at vertices of $f$.  Storing all these shortest-path trees explicitly would require $\Theta(n^2)$ space in the worst case. 

Euler's formula implies that any graph with $n$ vertices embedded in a surface of genus $g$ has $O(g+n)$ edges.  Our problem can be solved by solving the all-pairs shortest-path problem in $G$ in $O(gn + n^2\log n)$ time using Dijkstra's algorithm \cite{d-ntpcg-59} with Fibonacci heaps~\cite{ft-fhtui-87}.  When $g=O(1)$, the running time can be reduced to $O(n^2)$ using the faster shortest-path algorithm of Henzinger \etal\ \cite{hkrs-fspap-97}; see also~\cite{tm-spltm-09}.  Our algorithm improves both of these running times if and only if $g = o(n)$.  Thus, for the rest of the paper, we assume $g=o(n)$, which implies that $G$ has $O(n)$ directed edges.

Our results generalize (and were inspired by) an algorithm of Klein \cite{k-msspp-05} that solves the corresponding multiple-source shortest-path problem in \emph{planar} graphs in $O(n\log n)$ time.  Other noteworthy results on multiple-source shortest paths for planar graphs include Frederickson's all-pairs shortest-path representation \cite{f-faspp-87}, Lipton and Tarjan's planar separator theorem \cite{lt-stpg-79}, and Schmidt's $O(n \log n)$ algorithm that supports distance queries for specific subsets of vertices on a grid \cite{284959}.

Let $v_0, v_1, \dots, v_k$ be the sequence of vertices around the specified face $f$.  Klein's algorithm begins with a shortest-path tree $T$ rooted at $v_0$ and then proceeds in $k$ phases; at the end of the $i$th phase, $T$ is the shortest-path tree rooted at $v_i$.  At the beginning of the $i$th phase, the algorithm deletes the directed edge in $T$ leading into $v_i$, adds the directed edge from $v_i$ to $v_{i-1}$, and declares $v_i$ to be the new root.  The algorithm then transforms $T$ back into a shortest-path tree through a series of \emph{pivot} operations; each pivot changes the predecessor of some vertex, replacing its previous incoming edge with a new directed edge, just as in the classical shortest-path algorithms of Dijkstra \cite{d-ntpcg-59} and Bellman and Ford~\cite{f-nft-56}.  Specifically, at each iteration, Klein's algorithm pivots the \emph{leafmost unrelaxed directed edge} into~$T$.  This characterization exploits the classical observation, originally by von Staudt \cite{vs-bg-1847}, that for any spanning tree $T$ of any planar graph~$G$, the edges not in $T$ comprise a spanning tree of the dual graph $G^*$.  Klein describes how to find the leafmost unrelaxed directed edge and pivot it into $T$ in $O(\log n)$ time, using a dynamic tree data structure \cite{st-dsdt-83, tw-satt-05}.  He also proves that each directed edge pivots into $T$ at most a constant number of times; the $O(n \log n)$ running time follows immediately.  To allow for future shortest-path queries, Klein maintains $T$ in a \emph{persistent} data structure~\cite{dsst-mdsp-89}.

The main obstacle to extend Klein's algorithm to higher-genus graphs is that the complement of a spanning tree is no longer a dual spanning tree, so there is no `leafmost' unrelaxed directed edge.  Our algorithm follows Klein's basic approach, but uses a different strategy for choosing which directed edges to pivot.  Conceptually, we move the root of $T$ \emph{continuously} around the boundary of $f$ and maintain the correct shortest-path tree at all times.  Even though the source vertex moves continuously, the combinatorial structure of the shortest-path tree changes only at certain critical events, when one or more directed edges pivot into the tree.  Our approach can be seen both as an example of the \emph{kinetic data structure} paradigm proposed by Basch, Guibas, and Hershberger \cite{bgh-dsmd-99, g-kdssar-98} and as a special case of the \emph{parametric shortest-path} problem introduced by Karp and Orlin \cite{ko-pspaa-81, yto-fpspm-91}.

A key property shown in our analysis is that if we move the source vertex along a single edge of $G$, any directed edge of $G$ pivots into the shortest-path tree at most once; thus, the number of pivots is equal to the symmetric difference between the initial and final shortest-path trees.  Moreover, as the source vertex moves around the boundary of $f$, each directed edge pivots into $T$ at most $O(g)$ times, assuming unique shortest paths.  Our algorithm finds and executes each pivot in $O(g\log n)$ time, using a complex data structure composed of $O(g)$ dynamic trees \cite{st-dsdt-83,tw-satt-05}.  These two bounds immediately imply that we can move the source completely around $f$ in $O(g^2n\log n)$ time; a more refined approach improves this time bound to $O(gn\log n)$.  We describe our algorithm in the simpler setting of planar graphs in Section \ref{S:planar} and in full generality in Section \ref{S:surfaces}.

In Section \ref{S:cycles}, as applications of our shortest-path algorithm, we describe algorithms to compute a shortest non-contractible cycle and a shortest non-separating cycle in a combinatorial surface in $O(g^2 n\log n)$ time.  Thomassen developed the first algorithm to compute shortest non-contractible or non-separating cycles, by exploiting the so-called \emph{3-path condition} in $O(n^3)$ time~\cite{t-egnsn-90}; see also Mohar and Thomassen~\cite[Sect.~4.3]{mt-gs-01}.  Erickson and Har-Peled described a faster algorithm that runs in $O(n^2 \log n)$ time \cite{schema}.  Cabello and Mohar~\cite{cm-fsnnc-07} gave an algorithm which runs in time $g^{O(g)}n^{3/2} \log n$, the first algorithm for this problem to have a running time which was bounded by a function of the genus.  Cabello \cite{c-mdpg-12} later improved the running time using separators to $g^{O(g)}n^{4/3}$.  Finally, for \emph{orientable} surfaces Kutz \cite{k-csntc-06} developed an algorithm with a running time of $g^{O(g)} n \log n$ which computed a finite portion of the universal cover and then did standard shortest-path computations in the resulting planar graph.  Our algorithm has a better dependence on the genus and, for nonorientable surfaces of bounded genus, it is the first one to achieve a running time of $O(n\log n)$.

The initial analysis of our algorithms assumes that edge weights are \emph{generic}.  Specifically, we assume that there is always a unique shortest path from the source to each vertex, except at critical events, where exactly one vertex has exactly two shortest paths from the source.   In Section~\ref{S:perturb}, we describe two different perturbation schemes that remove this genericity assumption.  The first is a simple randomized scheme based on the Isolation Lemma of Mulmuley \etal~\cite{mvv-memi-87} that increases the running times of our algorithm at most a constant factor with high probability.  The second is an efficient implementation of lexicographic perturbation \cite{c-odlp-52, hm-apmcp-94} that increases the worst-case running times of our algorithms by a factor of $O(\log n)$.  Klein's planar multiple-source shortest-path algorithm \cite{ek-lafms-13,k-msspp-05} avoids degeneracy issues by maintaining a \emph{rightmost} shortest-path tree; however, it is not clear how to extend his strategy to graphs on higher-genus surfaces.

Since the preliminary version~\cite{cc-07} of this work appeared, our algorithms have been applied by several authors \cite{bateni-steiner-jacm,bdt-09,wobble,homcover,essential,f-fsncs-11,kt-11,kks-11}, and our parametric shortest-path techniques have been applied to maximum flows in planar graphs \cite{parshort} and replacement paths in surface graphs \cite{repath}.  Most recently, Eisenstat and Klein have used our methods to develop $O(n)$-time algorithms for maximum flows and multiple-source shortest paths in unweighted planar graphs \cite{ek-lafms-13}.

%%%%%%%%%%%%%%%%%%%%%%%%%%%%%%%%%%%%%%%%%%%%%%%%%%%%%%%%%%%%%%%%%%%%%%%%%%
\section{Background}
\label{S:background}

\subsection{Surfaces, Embeddings, and Duality}

\paragraph{Surfaces.~}
A \EMPH{surface} (alternatively, a \emph{topological 2-manifold with boundary}) is a Hausdorff topological space in which every point has an open neighborhood homeomorphic to either the plane $\Real^2$ or the closed upper half-plane.  The points that do not have neighborhoods homeomorphic to $\Real^2$ constitute the \EMPH{boundary} of the surface.  A \EMPH{cycle} in a surface $\Sigma$ is (the image of) a continuous map $\gamma\colon S^1\to\Sigma$; a cycle is \EMPH{simple} if this map is injective.  The \EMPH{genus} of a surface $\Sigma$ is the maximum number of simple disjoint cycles $\gamma_1, \gamma_2, \dots, \gamma_g$ such that the subspace $\Sigma\setminus(\gamma_1\cup\cdots\cup\gamma_g)$ is connected.  A surface is \EMPH{orientable} if it contains no subspace homeomorphic to the Möbius band.  We consider only compact, connected surfaces in this paper; up to homeomorphism, such surfaces are uniquely determined by genus, orientability, and the number of boundaries.  When dealing with an orientable surface (which will be the majority of the time), we also fix an orientation on the surface, so that terms like `left', `right', `clockwise', and `counterclockwise' are well-defined.

\paragraph{Curves and Homotopy.}
A \EMPH{path} is (the image of) a continuous map $p\colon [0,1]\to\Sigma$; 
it is \EMPH{simple} if this map is injective. 
The \EMPH{endpoints} of a path $p$ are the points $p(0)$ and $p(1)$.
A \EMPH{loop} is a path $p$ with $p(0)=p(1)$.
An \EMPH{arc} is a path whose endpoints are on the boundary of $\Sigma$.
The term \EMPH{curve} refers to a cycle, path, or arc.
If two paths $p$ and $q$ satisfy $p(1)=q(0)$, their \EMPH{concatenation} \EMPH{$p\concat q$}
is the path defined by $(p\concat q)(t)=p(2t)$, if $t\le 1/2$, and $(p\concat q)(t)=q(2t-1)$, if $t>1/2$.

A \EMPH{homotopy} between two paths $p$ and $q$ is a continuous map
$H: [0,1] \times [0,1] \rightarrow \Sigma$ such that $H(0, \cdot) = p$,
$H(1, \cdot) = q$, $H(\cdot,0) = p(0) = q(0)$, and $H(\cdot,1) = p(1) = q(1)$.
A homotopy between two arcs $\alpha$ and $\beta$ is a continuous map
$H: [0,1] \times [0,1] \rightarrow \Sigma$ such that $H(0, \cdot) = \alpha$,
$H(1, \cdot) = \beta$, $H(\cdot,0)$ is always in the same boundary component, 
and $H(\cdot,1)$ is always in the same boundary component.
A homotopy between two cycles $\gamma$ and $\beta$ is a continuous map
$H:[0,1] \times S^1$ such that $H(0, \cdot) = \gamma$ and $H(1, \cdot) = \beta$.  
Two curves $\beta,\gamma$ are \EMPH{homotopic} when there
is some homotopy between them, and we denote it by \EMPH{$\beta\sim\gamma$}. 
Being homotopic is an equivalence relation.
Up to homotopy, each boundary component of $\Sigma$ admits two parametrizations as simple cycles, one being the reverse of the other.
We say that a cycle is homotopic to a boundary component of $\Sigma$ when it is homotopic to either of these parametrizations. 
If $\gamma$ is a simple cycle homotopic to a boundary $\delta$ of $\Sigma$,
and $\gamma$ and $\delta$ are disjoint,
then $\Sigma \setminus \gamma$ has two connected components, one of them
a topological annulus with $\gamma$ and $\delta$ as boundaries.

A curve is \EMPH{contractible} if it is homotopic to a constant map. In particular, a contractible arc must have its endpoints in the same boundary component $\delta$ and, after contracting $\delta$ to a point, the obtained loop must be contractible.  A simple curve is \EMPH{separating} if $\Sigma \setminus \gamma$ has two connected components. (This concept is related to that of $\mathbb{Z}_2$-homology, but we will not use homology in this paper.)  A simple cycle $\gamma$ is contractible if and only if it bounds a disk, that is, $\Sigma\setminus \gamma$ has two connected components, one of them being a topological disk.

\paragraph{Darts and Embeddings.}
Let $G = (V, \vec{E})$ be a directed graph.  Following Borradaile and Klein \cite{b-epnfc-08, bk-amfdp-09}, we refer to the directed edges in $\vec{E}$ as the \EMPH{darts} of $G$.  (The term \emph{arc} has been used for something else above.)  Each dart connects two vertices, called its \EMPH{tail} and its \EMPH{head}; we will write \EMPH{$\dart{u}{v}$} to denote a dart with tail $u$ and head $v$.  Each dart~$\dart{u}{v}$ has a unique \EMPH{reversal}, defined by swapping its endpoints; thus, the reversal of $\dart{u}{v}$ is $\dart{v}{u}$.  We will assume that whenever $G$ contains a dart $\dart{u}{v}$, then it also contains its reversal $\dart{u}{v}$.  This can be enforced adding any missing darts with a large enough weight, like for example twice the sum of the weights of all original darts.  (In our presentation we need the weights to be finite.) 

Each dart $\dart{u}{v}$ defines an associated (undirected) edge $uv$, which does not carry any orientation.  An edge $uv$ represents the pair of darts $\{ \dart{u}{v}, \dart{v}{u}\}$.  For any directed graph $G = (V, \vec{E})$, we define its associated undirected graph $\Gun=(V,E)$ by replacing each dart with the associated edge.  In most cases, there is no need to distinguish between $G$ and $\Gun$ and we will use $G$ to denote either graph.  

Informally, an \EMPH{embedding} of a directed graph~$G$ on a surface $\Sigma$ is an embedding of its associated undirected graph: vertices are mapped to distinct points and edges are mapped to simple paths on $\Sigma$ that connect the corresponding vertices.  The darts $\dart{u}{v}$ and $\dart{v}{u}$ are embedded using the path of $uv$, but with different orientation.  A \EMPH{face} of an embedding is a maximal connected subset of~$\Sigma$ that does not intersect the image of any edge or vertex.  An embedding is \EMPH{cellular} (or \emph{2-cell} \cite{mt-gs-01}) if every face is an open topological disk; in particular, if the surface has boundary, every boundary cycle must be covered by edges of the graph.  Any cellular embedding can be represented combinatorially by a rotation system and a signature. A \EMPH{rotation system} is a permutation $\pi$ of the darts of~$G$, where $\pi(\vec{e})$ is the dart that appears immediately after $\vec{e}$ in the  circular ordering of darts leaving $\tail(\vec{e})$.  A \EMPH{signature} is a mapping $i\colon E(\Gun)\to\{0,1\}$ assigning a bit to each edge that indicates whether the cyclic ordering at its endpoints are in the same direction or the opposite direction.

When the surface is orientable, the signature is not needed, and we can assume that the rotation system gives a counterclockwise ordering of the darts emanating from each vertex. Every dart in an embedded graph $G$ separates two (possibly equal) faces. For orientable surfaces we may talk of the \EMPH{left shore} and \EMPH{right shore}, respectively denoted \EMPH{$\lsh(\vec{e})$} and \EMPH{$\rsh(\vec{e})$}.  Reversing any dart swaps its shores: $\lsh(\dart{v}{u}) =\rsh(\dart{u}{v})$ and $\rsh(\dart{v}{u}) = \lsh(\dart{u}{v})$.

If $G$ is embedded on an orientable surface of genus $g$,  Euler's formula $\abs{V}-\abs{E}+\abs{F} = 2-2g$ implies that $G$ has at most $3n-6+6g$ edges and at most $2n-4+4g$ faces, with equality if every face of the embedding is a triangle.  If the surface is non-orientable, then Euler's formula $\abs{V}-\abs{E}+\abs{F} = 2-g$ implies that $G$ has at most $3n-6+3g$ edges and at most $2n-4+2g$ faces.  We consider only graphs and surfaces with $g = o(n)$, so in either case the overall complexity of any embedding is $O(n)$.

Our algorithms require an explicit cellular embedding as part of the input.  If no embedding is provided, it is easy to construct an embedding of any given graph on \emph{some} surface, but the genus of this surface may be $\Omega(n)$ even when the minimum possible genus is constant~\cite{ckk-nagg-97}.  Determining whether a given graph $G$ can be embedded on an surface of genus $g$ is {NP}-complete~\cite{t-ggpnp-89}, although there is an embedding algorithm that runs in $g^{O(g)}n\log n$ time \cite{kmr-sltae-08}.  Moreover, for \emph{any} fixed $\e>0$, approximating the genus of an $n$-vertex graph within an additive error of $O(n^{1-\e})$ is {NP}-hard~\cite{ckk-nagg-97}.

\paragraph{Combinatorial Surfaces.~}
A \EMPH{combinatorial surface} $M$ is a surface $\Sigma$ together with an \emph{undirected} multigraph $G$ with non-negative edge weights embedded cellularly on $\Sigma$.  This concept, which we will use in Section~\ref{S:cycles}, was introduced by Colin de Verdi\`ere \cite{c-rcds-03} and it is equivalent to cellular embeddings of graphs.  The combinatorial surface model is dual to the cross-metric surface model; see~\cite{octagons} for more discussion on this.  The \EMPH{complexity} of a combinatorial surface is the size of the multigraph that describes it.

In combinatorial surfaces, all curves are restricted to lie on $G$, possibly with repeated edges or vertices.  Thus, a curve is a walk in $G$. We say a curve is \emph{simple} if it can be infinitesimally perturbed off of $G$ to be simple in $\Sigma$; note that this allows a simple curve to follow the same edge multiple times.  Similarly, two curves in a combinatorial surface have \EMPH{$k$ crossings} if they can be infinitesimally perturbed to generic position with $k$ crossing, and any infinitesimal perturbation of the curves to generic position has at least $k$ crossings.  In all cases, an infinitesimal perturbation of an arc should keep being an arc; the endpoints cannot go away from the boundary of the surface. The \EMPH{multiplicity} of a curve is the maximum number of times that the same edge appears in the curve.  The \EMPH{length} of a curve $\gamma$, denoted by \EMPH{$|\gamma|$}, is the sum of the edge weights of its edges, where each edge weight is counted as many times as the corresponding edge appears in the curve.

\paragraph{Duality.~}
The \EMPH{dual} of a surface-embedded graph $G$ is another graph \EMPH{$G^*$} embedded on the same surface, whose vertices correspond to faces of~$G$ and vice versa.  Two vertices in $G^*$ are joined by an edge if and only if the corresponding faces of $G$ are separated by an edge of $G$; thus, every edge~$e$ in~$G$ has a corresponding dual edge \EMPH{$e^*$} in~$G^*$.  Similarly, every vertex $v$ of $G$ corresponds to a face \EMPH{$v^*$} of $G^*$, and every face $f$ of $G$ corresponds to a vertex \EMPH{$f^*$} of $G^*$.  Each dual edge~$e^*$ is embedded so that it crosses the corresponding primal edge $e$ exactly once and intersects no other primal edge.  Duality is an involution---the dual of $G^*$ is isomorphic to the original graph~$G$.  We will use duality only in orientable surfaces. In this case we will use a duality for darts defined by $\dvec{e} := \dart{(\rsh(\vec{e}))^*}{(\lsh(\vec{e}))^*}$.  (For darts this is not an involution because $((\dart{u}{v})^*)^*=\dart{v}{u}$.)

For any subgraph~$H$ of $G$, let~\EMPH{$H^*$} denote the corresponding subgraph of $G^*$. 

To simplify notation, we will consistently use the letters $s,u,v,w,x,y,z$ to denote vertices in the primal graph $G$, and the letters $a,b,c,d$ to denote vertices in the dual graph $G^*$.  Thus, $a^*$ will always denote a face of $G$.  (However, $e$ is always an edge of $G$, $f$ is always a face of $G$, and $g$ is always the genus of the underlying surface.)  

\paragraph{Tree and Cotrees.~}
Let $G$ be an undirected graph embedded on a surface of genus $g$ without boundary.  A~\emph{spanning tree} of $G$ is just a tree formed by some subset of the edges of $G$ that includes every vertex of~$G$.  If $C^*$ is a tree contained in the dual graph $G^*$, we call the corresponding primal subgraph $C$ a \EMPH{cotree} of~$G$; if moreover $C^*$ is a spanning tree of $G^*$, we call $C$ a \emph{spanning cotree}.  A \EMPH{tree-cotree decomposition} of~$G$ is a partition of $G$ into three edge-disjoint subgraphs: a spanning tree~$T$, a spanning cotree~$C$, and the leftover edges $L = G\setminus (T\cup C)$ \cite{e-dgteg-03}.  For \emph{any} spanning tree $T$, we can complete a tree-cotree decomposition by choosing any spanning tree $C^*$ of the dual subgraph $(G\setminus T)^*$.  Euler's formula implies that $\abs{L} = 2g$; in particular, when the underlying surface is the sphere, $L$ is empty.  

\subsection{Shortest-Path Trees}

\paragraph{Definitions.~}
Let $G = (V,\vec{E})$ be a directed graph, and let $w\colon \vec{E}\to \Real_{\ge 0}$ be a non-negative weight function on the edges.  To shorten some expressions we  define for every edge $uv$ the weight
\[
	\EMPH{$\hw(uv)$} := w(\dart{u}{v}) + w(\dart{v}{u}).
\]
A \EMPH{shortest-path tree} is a spanning tree of $G$ that contains shortest paths from a \EMPH{source} vertex $s$ to every other vertex of $G$.  Shortest-path trees are typically represented by storing two values at every vertex: \EMPH{$\dist(v)$} is the shortest-path distance from $s$ to $v$, and \EMPH{$\pred(v)$} is the predecessor of $v$ in the shortest path from $s$ to $v$, or equivalently, the parent of $v$ if we regard the tree as rooted at~$s$.  In particular, $\dist(s) = 0$ and $\pred(s) = \textsc{Null}$.  

Consider an arbitrary function $\dist\colon V \to \Real_{\ge 0}$, not necessarily the distance from $s$.
We define the \EMPH{slack} of any dart $\dart{u}{v}$ as follows:
\[
	\EMPH{$\slack(\dart{u}{v})$} := \dist(u) + w(\dart{u}{v}) - \dist(v).
\]
So for any edge $uv$, we have 
\[
	\slack(\dart{v}{u}) + \slack(\dart{v}{u}) = w(\dart{u}{v}) + w(\dart{v}{u}) = \hw(uv).
\]
A dart is \EMPH{tense} if its slack is negative, and an undirected edge is tense if either of its darts is tense.  A classical result of Ford \cite{f-nft-56}, exploited by almost all shortest-path algorithms, states that a collection of distances and predecessors describes a shortest-path tree if and only if there are no tense edges, $\dist(s) = 0$, and $\slack(\dart{\pred(v)}{v}) = 0$ for every vertex $v\ne s$. 

It is helpful to view the graph $G$ as a continuous space, so that distances and shortest paths between points in the interior of edges are also well-defined.  Specifically, each edge $uv$ is homeomorphic to an interval~$[0,1]$; each value $\lambda\in[0,1]$ represents a point $s_\lambda$ on $uv$ such that the distance from $s_\lambda$ to $u$ is $\lambda w(\dart{v}{u})$, the distance from $u$ to $s_\lambda$ is $\lambda w(\dart{u}{v})$, the distance from $s_\lambda$ to $v$ is $(1-\lambda) w(\dart{u}{v})$, and the distance from $v$ to $s_\lambda$ is $(1-\lambda) w(\dart{v}{u})$.  In particular, $s_0=u$ and $s_1=v$.  

For simplicity of analysis, we initially assume that all vertex-to-vertex shortest paths are unique, and thus that the union of all shortest paths from a common source vertex define a unique shortest-path tree.  This assumption can be enforced with high probability by adding random infinitesimal weights to each edge~\cite{mvv-memi-87}; we explain this randomized perturbation in detail in Section \ref{S:perturb}, where we also describe a slightly slower deterministic perturbation scheme.

\paragraph{Parametric Shortest Paths.~}

Our algorithm can be viewed as a special case of the \emph{parametric shortest-path} problem introduced by Karp and Orlin \cite{ko-pspaa-81}; see also \cite{ew-oesm-09, parshort, gt-fmccc-89, yto-fpspm-91}.  Suppose we are given a graph whose dart weights are linear functions of a parameter $\lambda$; specifically, let $w_\lambda(\vec{e}) = w(\vec{e}) + \lambda \cdot w'(\vec{e})$, for given functions $w, w'\colon \vec{E}\to \Real_{\ge 0}$.  (In Karp and Orlin's original formulation, $w'(\vec{e})\in\set{0,1}$ for every dart $\vec{e}$.)  Let~$T_\lambda$ denote the shortest-path tree rooted at some fixed source vertex $s$ with respect to the weights~$w_\lambda$.  The parametric shortest-path problem asks us to compute $T_\lambda$ for all $\lambda$ in a certain range.

To solve this problem, Karp and Orlin \cite{ko-pspaa-81} propose maintaining $T_\lambda$ while \emph{continuously} increasing the parameter $\lambda$.  Although the shortest-path distances vary continuously as a function of $\lambda$, the combinatorial structure of $T_\lambda$ changes only at certain \emph{critical values} of $\lambda$.  A critical value occurs when the slack of some non-tree dart $\dart{y}{z}$ decreases to zero; thus, $\dart{y}{z}$ is just about to become tense.  At this critical value, the dart $\dart{y}{z}$ \EMPH{pivots} into the shortest-path tree, replacing some other dart $\dart{x}{z}$; in other words, $y$ becomes the new predecessor of $z$.\footnote{If $z$ is an ancestor of $y$, this pivot introduces a cycle into $T_\lambda$; for all larger values of $\lambda$, the total weight of this cycle is negative, and the shortest-path tree $T_\lambda$ is not well-defined.  In our algorithm, darts will \emph{always} have non-negative weight, so this condition never arises.}

The running time of any parametric shortest-path algorithm clearly depends on two factors: (1) the amortized time required to identify the next tense dart and pivot it into the tree and (2) the number of pivots.  Our initial analysis assumes that exactly one dart becomes tense at each critical value.  Again, this assumption can be enforced with high probability by random infinitesimal perturbation~\cite{mvv-memi-87}, or at the cost of a $O(\log n)$ factor in the worst-case running time; see Section \ref{S:perturb}.

Karp and Orlin's parametric shortest-path algorithm is an early prototypical example of the \emph{kinetic data structure} paradigm of Basch, Guibas, and Hershberger \cite{bgh-dsmd-99, g-kdssar-98}.  Their algorithms (and ours) can also be interpreted as a special case of the parametric objective simplex algorithm of Gass and Saaty \cite{gs-capof-55, kk-ggspc-90}; similar approaches have been applied to several other classical parametric optimization problems \cite{aegh-pkmst-98, es-mters-76, g-pccpp-83}. 

\subsection{Dynamic Forest Data Structures}
\label{SS:dyn-trees}

Our algorithms require dynamic forest data structures that implicitly maintain dart or vertex values under edge insertions, edge deletions, and updates to the values in certain substructures.  We require two different data structures, which we call the \emph{primal tree structure} and the \emph{dual forest structure}, for reasons that will become clear later.

\paragraph{Primal tree structure.~}
Our first data structure maintains a dynamic rooted forest with real values associated with the vertices; in our application, these values are shortest-path distances.  The data structure supports the following operations:
\begin{itemize}
\item
\EMPH{$\create(\val)$:} Return a new tree containing a single vertex with value $\val$.

\item
\EMPH{$\cut(e)$:} Remove the edge $e$ from the tree $T$ that contains it.  The subtree that contains the root of~$T$ keeps that node as its root; the other subtree takes as root the endpoint of $e$ that it contains.

\item
\EMPH{$\join(u,v)$:} Add the edge $uv$ to the forest; the root of the subtree containing $u$ becomes the root of the new larger tree.  This operation assumes that $u$ and $v$ are initially in different trees.

\item
\EMPH{$\getnodevalue(v)$:} Return the value associated with vertex $v$.

\item
\EMPH{$\addsubtree(\Delta, v)$:} Add the real value $\Delta$ to the value of every vertex in the subtree rooted at $v$.

\end{itemize}

We emphasize that the operations $\join(u,v)$ and $\join(v,u)$ give rise to the same undirected tree, but with different roots.  Several dynamic forest data structures, including Euler-tour trees \cite{hk-rfdga-99, tarjan-eulertrees} and self-adjusting top trees~\cite{tw-satt-05}, support each of these operations in $O(\log n)$ amortized time, where $n$ is the number of vertices in the forest.

\paragraph{Dual forest structure.~}
Our second data structure maintains a dynamic undirected, unrooted forest, with two values $\val(\dart{u}{v})$ and $\val(\dart{v}{u})$ associated with every edge $uv$; that is, we associate separate values with each \emph{dart} of the forest.  In our application, we will have an orientable surface, each tree in the dynamic forest is a subgraph of the dual graph~$G^*$, and the value associated with any dart in the forest is equal to the slack of the corresponding primal dart.  The data structure supports the following operations:

\begin{itemize}
\item
\EMPH{$\create(\,)$:} Return a new one-vertex tree.

\item
\EMPH{$\cut(uv)$:} Remove the edge $uv$ from the forest.
 
\item
\EMPH{$\join(\dart{u}{v}, \alpha, \beta)$:} Add the edge $uv$ to the forest and set $\val(\dart{u}{v}) = \alpha$ and $\val(\dart{v}{u}) = \beta$.  This operation assumes that $u$ and $v$ are initially in different trees.  The operations $\join(u, v, \alpha, \beta)$ and $\join(v, u, \beta, \alpha)$ yield identical results.

\item
\EMPH{$\getdartvalue(\dart{u}{v})$:} Returns the value $\val(\dart{u}{v})$.  This operation assumes that $uv$ is an edge in the forest.

\item
\EMPH{$\addpath(\Delta, u, v)$:} For each dart $\dart{x}{y}$ on the (unique) directed path from $u$ to $v$, add~$\Delta$ to $\val(\dart{x}{y})$ and subtract $\Delta$ from $\val(\dart{y}{x})$.  This operation assumes that $u$ and $v$ lie in the same tree. 

\item
\EMPH{$\minpath(u, v)$:} Return a dart $\dart{x}{y}$ on the (unique) directed path from $u$ to $v$, such that $\val(\dart{x}{y})$ is minimized.  This operation assumes that $u$ and $v$ lie in the same tree.

\item
\EMPH{$\junction(t,u,v)$:} Return the unique node that lies on the paths from $t$ to $u$, from $u$ to $v$, and from~$v$ to $t$ in the forest.  This operation assumes that the nodes $t$, $u$, and $v$ lie in the same component of the forest.
\end{itemize}

Several dynamic forest data structures, including link-cut trees \cite{st-dsdt-83}  and self-adjusting top trees~\cite{tw-satt-05}, can be adapted to support each of these operations in $O(\log n)$ amortized time.  In their original formulations, these data structures maintain only a single value at each undirected edge.  However, Goldberg, Grigoriadis, and Tarjan \cite{ggt-udtns-91, t-epnsa-91} describe the necessary modifications to link-cut trees~\cite{st-dsdt-83} to support values on directed edges, specifically for use in network-simplex algorithms, of which our algorithm is an example.  Similar modifications can be applied to any other dynamic tree structure.

The only operation we require that is not part of the standard dynamic-tree literature is \junction.  Here we describe how to implement this operation using link-cut trees; similar methods can be used with more recent dynamic tree data structures.  Internally, link-cut trees actually represent each component of the forest as a \emph{rooted} tree.  Link-cut trees support two additional operations:
\begin{itemize}
\item 
\EMPH{$\evert(v)$:}  Make vertex $v$ the root of the tree that contains it.
\item
\EMPH{$\lca(u,v)$:} Return the least common ancestor of vertices $u$ and $v$.  This operation assumes $u$ and $v$ lie in the same tree \cite[p.~387]{st-dsdt-83}.
\end{itemize}
We can implement $\junction(t,u,v)$ simply by calling $\evert(t)$ and then returning $\lca(u,v)$.

%%%%%%%%%%%%%%%%%%%%%%%%%%%%%%%%%%%%%%%%%%%%%%%%%%%%%%%%%%%%%%%%%%%%%%%%%%
\section{Algorithm for Planar Graphs}
\label{S:planar}

Before describing our algorithm in full generality, we first consider the special case $g=0$.  Given a planar graph $G$ and one of its faces $f$, our algorithm computes an implicit representation of the shortest-path tree rooted at every vertex on the boundary of $f$.  

Our high-level approach is similar to Klein's algorithm \cite{k-msspp-05}.  We begin by computing a shortest-path tree $T$ rooted at an arbitrary vertex on the boundary of $f$.  We maintain $T$ and the complementary dual spanning tree $C^* = (G\setminus T)^*$ in appropriate dynamic forest data structures.  Then, for each edge $uv$ in order around the boundary of $f$, we modify the shortest-path tree rooted at $u$ until it becomes the shortest-path tree rooted at $v$.  However, our algorithm uses a different pivoting strategy to update the tree.  Klein's algorithm moves the source directly from $u$ to $v$ and then performs a sequence of pivots to repair the tree; our algorithm maintains a parametric shortest-path tree as the source point moves \emph{continuously} from $u$ to $v$.

\subsection{Moving Along an Edge}
\label{SS:one-edge}

Consider a single edge $uv$ in $G$.  Suppose we have already computed the shortest-path tree $T_u$ rooted at~$u$.  We transform $T_u$ into the shortest-path tree $T_v$ rooted at $v$ using a parametric shortest-path algorithm as follows.  First, we insert a new vertex $s$ in the interior of $uv$, bisecting it into two edges $su$ and $sv$ with the parametric weights
\begin{alignat*}{2}
	w_\lambda(\dart{u}{s}) &:= \lambda w(\dart{u}{v}),\qquad
	&
	w_\lambda(\dart{s}{v}) &:= (1-\lambda) w(\dart{u}{v}).
	\\
	w_\lambda(\dart{s}{u}) &:= \lambda w(\dart{v}{u}),
	&
%	\quad\text{and}\quad
	w_\lambda(\dart{v}{s}) &:= (1-\lambda)w(\dart{v}{u}).
\end{alignat*}
Every other dart $\dart{x}{y}$ has constant parametric weight $w_\lambda(\dart{x}{y}) := w(\dart{x}{y})$.  We then maintain the shortest-path tree \EMPH{$T_\lambda$} rooted at $s$, with respect to the weight function $w_\lambda$, as the parameter $\lambda$ increases continuously from $0$ to $1$.  The initial shortest-path tree $T_0$ is equal to $T_u$, and the final tree~$T_{1}$ is equal to~$T_v$.

For any vertex $x$ and any parameter value $\lambda\in [0,1]$, let \EMPH{$\dist_\lambda(x)$} denote the distance from $s$ to $x$ with respect to the weight function $w_\lambda$.  We color each vertex $x$ \textcolor{Red}{\EMPH{red}} if $\dist_\lambda(x)$ is an increasing function of $\lambda$, or \textcolor{Blue}{\EMPH{blue}} if $\dist_\lambda(x)$ is a decreasing function of $\lambda$.\footnote{Readers familiar with astronomy may recall that visible light from objects moving away from the earth is shifted toward red, and visible light from objects moving toward the earth is shifted toward blue.}  The color of a vertex depends on the parameter $\lambda$.  Every vertex except $s$ is either red or blue, for generic values of $\lambda$.  If $su$ is an edge in~$T_\lambda$, then every vertex in the subtree rooted at $u$ is red; symmetrically, if $sv$ is an edge in~$T_\lambda$, then every vertex in the subtree rooted at $v$ is blue.  $T_0$ may contain only red vertices, and $T_1$ may contain only blue vertices, but there is always an interval of values of $\lambda$ such that $T_\lambda$ contains vertices of both colors.  For example it holds $w_\lambda(\dart{s}{u})=w_\lambda(\dart{s}{v})$ when $\lambda = w(\dart{u}{v})/\hw(uv)$.  We color an edge of $G$ \textcolor{Green}{\EMPH{green}} if it has one red endpoint and one blue endpoint.

Similarly, let \EMPH{$\slack_\lambda(\dart{x}{y})$} denote the slack of $\dart{x}{y}$ with respect to the weight function $w_\lambda$:
\[
	\slack_\lambda(\dart{x}{y}) := \dist_\lambda(x) + w_\lambda(\dart{x}{y}) - \dist_\lambda(y).
\]
We call a dart $\dart{x}{y}$ \EMPH{active} if $\slack_\lambda(\dart{x}{y})$ is a decreasing function of $\lambda$; only active darts can become tense.  

The following lemma applies to \emph{arbitrary} graphs, not just planar graphs or even graphs on surfaces.

\begin{lemma}
\label{L:active}
The following claims hold for all real $\lambda \in [0, 1]$.
\begin{itemize}\cramped
\item[(a)] If $su \not\in T_\lambda$, there are no active darts.
\item[(b)] If $sv \not\in T_\lambda$, the only active dart is $\dart{s}{v}$.
\item[(c)] Otherwise, $\dart{x}{y}$ is active if and only if $x$ is blue and $y$ is red.
\end{itemize}
\end{lemma}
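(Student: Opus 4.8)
The plan is to reduce each claim to a short computation of the slope $\frac{d}{d\lambda}\dist_\lambda(x)$ at each vertex, which turns out to take only three possible values. Fix a non-critical $\lambda\in(0,1)$, so that the combinatorial structure of $T_\lambda$ is constant near $\lambda$; the claims at critical values and at the endpoints follow by one-sided limits. The key observation is that $s$ has exactly two outgoing darts, $\dart{s}{u}$ and $\dart{s}{v}$, so at least one of the edges $su,sv$ lies in $T_\lambda$, and the path from $s$ to any vertex $x\ne s$ in $T_\lambda$ begins with one of these two darts. Since this path meets the root $s$ only at its start, every dart on it except the first has constant weight; hence if the path begins with $\dart{s}{u}$ then $\dist_\lambda(x)=\lambda\, w(\dart{v}{u})+c_x$ for a constant $c_x$, so $\dist_\lambda(x)$ is locally affine with slope $w(\dart{v}{u})>0$ and $x$ is red, whereas if it begins with $\dart{s}{v}$ then $\dist_\lambda(x)=(1-\lambda)\, w(\dart{u}{v})+c'_x$ has slope $-w(\dart{u}{v})<0$ and $x$ is blue. (Here we use that all dart weights are positive, by our general-position assumption.) This confirms the red/blue dichotomy, identifies the red vertices with the subtree of $T_\lambda$ hanging below $su$ and the blue vertices with the subtree hanging below $sv$, and of course $\dist_\lambda(s)\equiv0$ has slope $0$.

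Next I would set aside the darts directed into the root: any dart $\dart{x}{s}$ has $\slack_\lambda(\dart{x}{s})=\dist_\lambda(x)+w_\lambda(\dart{x}{s})\ge 0$ for all $\lambda\in[0,1]$, so it never becomes tense and may be disregarded. For a dart $\dart{x}{y}$ with neither endpoint equal to $s$, the weight $w_\lambda(\dart{x}{y})$ is constant, so $\frac{d}{d\lambda}\slack_\lambda(\dart{x}{y})$ equals the slope at $x$ minus the slope at $y$; running through the four color combinations, this difference is $-\hw(uv)<0$ when $x$ is blue and $y$ is red, and $\ge 0$ in every other case. So, among darts avoiding $s$, the active ones are exactly those from a blue vertex to a red vertex. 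For the darts $\dart{s}{u}$ and $\dart{s}{v}$ leaving the root: if such a dart lies in $T_\lambda$, its slack is identically $0$; if $\dart{s}{u}\notin T_\lambda$ then $u$ is blue and $\frac{d}{d\lambda}\slack_\lambda(\dart{s}{u})=\hw(uv)>0$; and if $\dart{s}{v}\notin T_\lambda$ then $v$ is red and $\frac{d}{d\lambda}\slack_\lambda(\dart{s}{v})=-\hw(uv)<0$. Hence $\dart{s}{u}$ is never active, and $\dart{s}{v}$ is active precisely when $sv\notin T_\lambda$.

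Assembling the three cases is then immediate. If $su\notin T_\lambda$, then $sv\in T_\lambda$ and, since every $s$-to-$x$ path must begin with $\dart{s}{v}$, every vertex other than $s$ is blue; so there are no blue-to-red darts, $\dart{s}{v}$ is inactive (it lies in $T_\lambda$), and $\dart{s}{u}$ is never active, proving (a). If $sv\notin T_\lambda$, then symmetrically every vertex other than $s$ is red and the only active dart is $\dart{s}{v}$, proving (b). Otherwise $su,sv\in T_\lambda$, so both darts leaving the root are inactive and the active darts are exactly the darts $\dart{x}{y}$ with $x$ blue and $y$ red, proving (c).

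I expect the slope table to be entirely routine; the only point requiring care is the bookkeeping around the root $s$. In particular, the dart $\dart{v}{s}$ has strictly decreasing slack whenever $v$ is blue (cases (a) and (c)), yet its slack equals $(1-\lambda)\hw(uv)\ge 0$ throughout, so it never pivots; this is exactly why one should observe at the outset that darts into the root are irrelevant and restrict the characterization to the remaining darts.
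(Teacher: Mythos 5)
Your proof is correct and follows essentially the same route as the paper's: both arguments reduce to determining the rate of change of $\dist_\lambda$ at red and blue vertices (namely $+w(\dart{v}{u})$ and $-w(\dart{u}{v})$) and then the resulting rate of change of the slack for each color combination, with the darts incident to $s$ handled separately. Your added observation that $\dart{v}{s}$ has decreasing slack $(1-\lambda)\hw(uv)\ge 0$ yet can never become tense --- so that darts into the root must be explicitly excluded from the characterization in parts (a) and (c) --- is a small point of care that the paper's proof elides.
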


\begin{proof}
Consider an arbitrary dart $\dart{x}{y}$.  If $x$ is blue and $y$~is red then, for some constants $\sigma_0,\sigma_1,\sigma_2\ge 0$, 
\begin{align*}
	\slack_\lambda(\dart{x}{y})
	&= \dist_\lambda(x) + w_\lambda(\dart{x}{y}) - \dist_\lambda(y)
	\\
	&= w_\lambda(\dart{s}{v}) + \sigma_1 + w(\dart{x}{y}) - w_\lambda(\dart{s}{u}) - \sigma_2\\
	&= (1-\lambda) w(\dart{u}{v}) + \sigma_1 + w(\dart{x}{y}) -\lambda w(\dart{v}{u}) - \sigma_2\\
	&= \sigma_0 - \lambda \hw(uv),
\end{align*}
and therefore $\dart{x}{y}$ is active.  Symmetrically, if $x$ is red and $y$ is blue, then $\slack_\lambda(\dart{x}{y})$ is an increasing function of $\lambda$, and so $\dart{x}{y}$ is not active.  Finally, if $x$ and $y$ are both red or both blue, then $\slack_\lambda(\dart{x}{y})$ is constant, so $\dart{x}{y}$ is not active.

If $su$ is not an edge in $T_\lambda$, then $sv \in T_\lambda$, and thus every vertex except $s$ is blue.  In this case, $\slack_\lambda(\dart{u}{s})$ is constant and $\slack_\lambda(\dart{s}{u})$ is increasing with $\lambda$, so no dart is active.

Similarly, if $sv$ is not an edge in $T_\lambda$, then $su \in T_\lambda$, and thus every vertex except $s$ is red.  In this case, $\slack_\lambda(\dart{v}{s})$ is constant and $\slack_\lambda(\dart{s}{v})$ is decreasing with $\lambda$, so $\dart{s}{v}$ is the only active dart.
\end{proof}

The proof of Lemma \ref{L:active} has two immediate corollaries, which also apply to arbitrary graphs:

\begin{corollary}
\label{C:next-pivot}
Fix a real value $\lambda$ such that $T_\lambda$ contains both $su$ and $sv$.  As $\lambda$ increases, the next dart to become tense (if any) is the active dart $\dart{x}{y}$ such that $\slack_\lambda(\dart{x}{y})$ is minimized.
\end{corollary}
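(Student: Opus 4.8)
The plan is to combine Lemma~\ref{L:active}(c) with the explicit slack formula derived in its proof and with the fact that a shortest-path tree has no tense darts. Recall that a pivot occurs precisely when a non-tree dart becomes tense, that is, when its slack decreases to zero; so ``the next dart to become tense'' means the non-tree dart whose slack reaches zero first as the running parameter, which I will call $\mu$, increases past $\lambda$.

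First I would discard every dart that is not active. Since $T_\lambda$ is a shortest-path tree, Ford's criterion gives $\slack_\lambda(\dart{x}{y}) \ge 0$ for every dart $\dart{x}{y}$, and the standing uniqueness assumption makes the inequality strict for every non-tree dart. A dart that is not active has, by definition, non-decreasing slack, so its slack stays positive and it never becomes tense; this is exactly the remark ``only active darts can become tense'' made in the text. Because $T_\lambda$ contains both $su$ and $sv$, Lemma~\ref{L:active}(c) identifies the active darts as exactly those $\dart{x}{y}$ with $x$ blue and $y$ red.

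The crux is that all active darts' slacks decrease at one common rate, so their firing order is governed entirely by their current values. This is the computation inside the proof of Lemma~\ref{L:active}: for an active dart $\dart{x}{y}$ one has $\slack_\mu(\dart{x}{y}) = \sigma_0 - \mu\,\hw(uv)$ for a constant $\sigma_0 \ge 0$ depending on the dart, valid for as long as $T_\mu$ is combinatorially equal to $T_\lambda$ --- equivalently, for as long as the red/blue partition, and hence the formula, is unchanged. Since $\hw(uv) = w(\dart{u}{v}) + w(\dart{v}{u}) > 0$, each such slack is a strictly decreasing affine function of $\mu$ with the common slope $-\hw(uv)$, so the active dart $\dart{x}{y}$ would reach slack zero at $\mu = \lambda + \slack_\lambda(\dart{x}{y})/\hw(uv)$, which is minimized exactly when $\slack_\lambda(\dart{x}{y})$ is minimized. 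To turn this into a rigorous argument I would set $\mu^\star$ to be the infimum of the (finitely many) parameter values $\mu > \lambda$ at which some dart becomes tense: on the interval $[\lambda, \mu^\star)$ no pivot has occurred, so $T_\mu = T_\lambda$ throughout and every affine slack formula above is valid there; hence the active dart attaining the minimum of $\slack_\lambda(\dart{x}{y})$ is precisely the dart (unique, by our genericity assumption) that becomes tense at $\mu^\star$. If no active dart exists, Lemma~\ref{L:active} shows that $T_\mu = T_\lambda$ remains a shortest-path tree and no dart ever becomes tense, which is what the parenthetical ``if any'' allows for. The only step requiring genuine care --- everything else being substitution into the lemma and its proof --- is this last one, where ``$T_\mu$ unchanged'' and ``nothing yet tense'' must be bootstrapped together on the half-open interval rather than assumed separately.
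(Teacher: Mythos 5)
Your proof is correct and rests on exactly the same observation as the paper's: by the computation in the proof of Lemma~\ref{L:active}, every active dart's slack is an affine function of the parameter with the common slope $-\hw(uv)$, so the active dart with minimum current slack reaches zero first. The paper's proof is just that one sentence; your additional bookkeeping (ruling out inactive darts via nonnegativity of slacks in a shortest-path tree, and validating the affine formulas on the half-open interval before the first pivot) is a sound elaboration of the same argument.
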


\begin{proof}
If $\dart{x}{y}$ is active, then $\slack_\lambda(\dart{x}{y}) = \sigma_0 - \lambda \hw(uv)$ for some constant $\sigma_0 \ge 0$.  Thus, the slack of every active dart decreases at exactly the same rate.
\end{proof}

\begin{corollary}
As $\lambda$ increases continuously from $0$ to $1$, the number of pivots in $T_\lambda$ is equal to the number of edges in $T_u\setminus T_v$.
\end{corollary}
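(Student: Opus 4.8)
The plan is to prove a strong monotonicity property of the parametric process and then simply count. Precisely, I will establish two facts: (i) as $\lambda$ runs from $0$ to $1$, every dart of the subdivided graph pivots into $T_\lambda$ at most once; and (ii) once a dart has pivoted in, it stays in $T_\lambda$ for all larger $\lambda$. Granting (i) and (ii), the corollary is immediate. A dart participates in a pivot during the process exactly when it is absent from $T_0 = T_u$ but present in $T_1 = T_v$: necessity holds because, by (ii), an inserted dart never leaves; sufficiency because such a dart is missing at $\lambda=0$, present at $\lambda=1$, and the only mechanism that can insert it is a pivot. Since every pivot inserts exactly one dart, and by (i) no dart is inserted twice, the number of pivots equals the number of darts of $T_v$ not in $T_u$, which equals $|T_u\setminus T_v|$ because $T_u$ and $T_v$ have the same number of edges.

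To prove (i) and (ii) I would first split $[0,1]$ according to which of $su,sv$ belongs to $T_\lambda$. By Lemma~\ref{L:active}(a) there are no pivots while $su\notin T_\lambda$, and by Lemma~\ref{L:active}(b) the only pivot possible while $sv\notin T_\lambda$ is the one making $s$ the predecessor of $v$. Since $\dist_0(u)=0$ forces $su\in T_0$ and $\dist_1(v)=0$ forces $sv\in T_1$, the process passes (monotonically) through at most three phases: first $su\in T_\lambda$ and $sv\notin T_\lambda$; then both $su,sv\in T_\lambda$; then $sv\in T_\lambda$ and $su\notin T_\lambda$. The heart of the matter is the middle phase. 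There, color a vertex \emph{red} if it lies in the subtree of $u$ in $T_\lambda$ and \emph{blue} if it lies in the subtree of $v$; this matches the red/blue coloring defined via the monotonicity of $\dist_\lambda$. By Lemma~\ref{L:active}(c), every pivot in this phase replaces a dart $\dart{x}{z}$ by an active dart $\dart{y}{z}$ with $y$ blue and $z$ red. As $z$ is red, every vertex in the subtree rooted at $z$ is red; the pivot makes $z$ (hence that whole subtree) a descendant of $v$, so all of them turn blue, and since the pivot changes no other predecessor pointer, no blue vertex ever turns red. Thus the set of blue vertices only grows during the middle phase.

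From ``blue stays blue'' both claims follow. For (i): after a vertex's predecessor changes it is blue, and a dart into a blue vertex is never active again by Lemma~\ref{L:active}(c), so such a vertex's predecessor never changes again; hence each vertex's predecessor, and in particular each dart, can pivot in at most once. For (ii): if $\dart{y}{z}$ pivots in, then $z$ is blue forever after, so $z$'s predecessor is frozen and $\dart{y}{z}$ remains in $T_\lambda$. It remains to stitch the phases together and check that a dart is never inserted in two different phases: the single pivot of the first phase inserts $\dart{s}{v}$, which then persists because $v$ is blue throughout the middle phase; the middle phase ends precisely when $u$ acquires a blue predecessor, so $su$ leaves and (since $u$ is blue thereafter) never returns, and by Lemma~\ref{L:active}(a) there are no further pivots. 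Hence nothing that was removed is reinserted later.

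The main obstacle is exactly the middle-phase monotonicity --- that the blue set only grows, equivalently that no dart pivots in twice and none leaves after entering. Lemma~\ref{L:active}(c) is tailor-made for this, but care is needed because a single pivot relocates an entire subtree at once, so the argument must be phrased in terms of subtrees rather than individual vertices, and one must treat the degenerate regimes $su\notin T_\lambda$ and $sv\notin T_\lambda$ and the transitions between phases carefully, to be certain that no dart removed in one phase is reinserted in a later one. Once this monotonicity is in hand, the pivot count is pure bookkeeping.
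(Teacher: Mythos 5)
Your proof is correct and rests on exactly the same key observation as the paper's: by Lemma~\ref{L:active}, a pivoting dart goes from a blue tail to a red head, and after the pivot both endpoints (indeed the whole relocated subtree) are blue, so the blue set only grows and no dart that pivots in ever pivots out. The paper states this in two sentences; your version adds the phase decomposition and the explicit bookkeeping, but the argument is the same.
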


\begin{proof}
Just before any dart $\dart{x}{y}$ pivots into the shortest-path tree, $x$ is blue and $y$ is red; after the pivot, both $x$ and $y$ are blue.  Thus, any dart that pivots into $T_\lambda$ never pivots out again.
\end{proof}

\subsection{Pivoting Quickly}

To identify the next pivot quickly when $G$ is planar, we maintain both the shortest-path tree $T_\lambda$ and its complementary spanning cotree $C_\lambda = G \setminus T_\lambda$ in dynamic tree structures.  Specifically, we store $T_\lambda$ in a primal tree structure, where the value of any node $v$ is $\dist_\lambda(v)$, and we store $C_\lambda^*$ in a dual forest structure, where the value associated with any dual dart is the slack of the corresponding primal dart: $\val((\dart{x}{y})^*) := \slack_\lambda(\dart{x}{y})$.

To simplify notation related to the dart $\dart{u}{v}$, let \EMPH{$a$} be the face $\tail((\dart{u}{v})^*) = \rsh(\dart{u}{v})^*$, let \EMPH{$b$} be the face $\head((\dart{u}{v})^*) = \lsh(\dart{u}{v})^*$, and let \EMPH{$\pi_\lambda$} denote the unique directed path from $a$ to~$b$ in the the dual spanning tree $C_\lambda^*$.  See Figure \ref{F:planar}.

\begin{figure}
\centering
\includegraphics[height=1.75in]{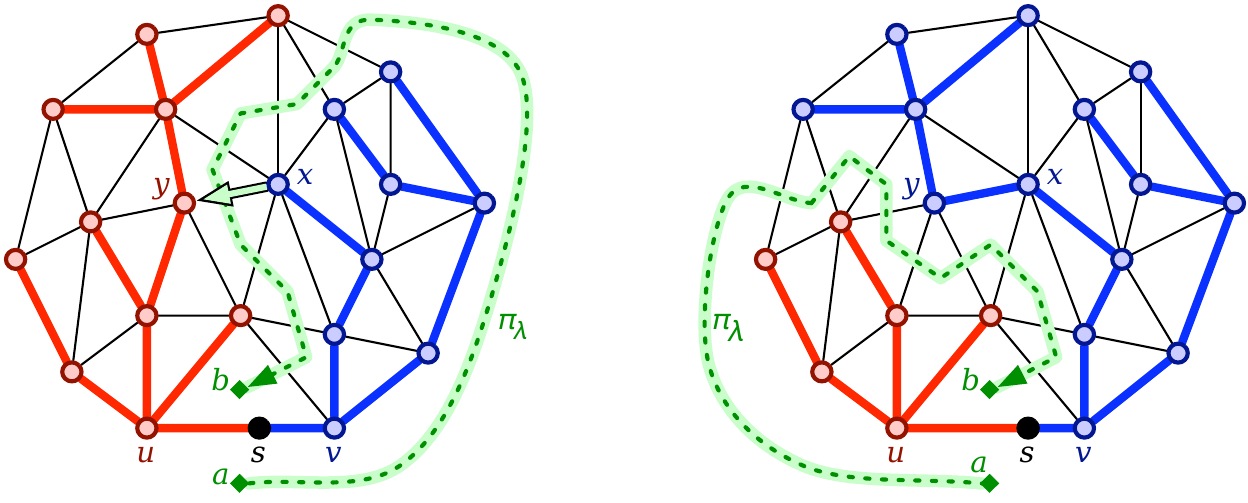}
\caption{A single pivot in a planar shortest-path tree.  Thick (red and blue) lines indicate the shortest-path tree $T_\lambda$; the dotted (green) path is $\pi_\lambda$; the hollow arrow indicates the pivoting dart $\dart{x}{y}$.}
\label{F:planar}
\end{figure}

\begin{lemma}
\label{L:planar-active}
If $T_\lambda$ contains both $su$ and $sv$, then the next dart to become tense (if any) is the dart with minimum slack whose dual lies in the directed path $\pi_\lambda$.
\end{lemma}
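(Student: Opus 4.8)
The plan is to reduce the statement to Corollary~\ref{C:next-pivot} by proving the combinatorial equivalence
\[
\text{$\dart{x}{y}$ is active}\iff\text{$(\dart{x}{y})^*$ is a dart of the directed path $\pi_\lambda$.}
\]
Granting this, Corollary~\ref{C:next-pivot} says that the next dart to become tense is the active dart of minimum slack, which is then exactly the dart of minimum slack whose dual lies on $\pi_\lambda$, as claimed. Since $T_\lambda$ contains both $su$ and $sv$, Lemma~\ref{L:active}(c) tells us that the active darts are precisely the darts $\dart{x}{y}$ with $x$ blue and $y$ red, so the real task is to match these ``blue-to-red'' darts with the darts of $\pi_\lambda$ under duality.

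First I would pin down the colour classes. The vertex $s$ was inserted in the interior of the edge $uv$, so its only neighbours in $G$ are $u$ and $v$; since $su$ and $sv$ both lie in $T_\lambda$ and $T_\lambda$ is rooted at $s$, the children of $s$ are exactly $u$ and $v$, and hence every vertex other than $s$ lies either in the subtree of $T_\lambda$ rooted at $u$ or in the subtree rooted at $v$. Combining this with the observations made just before the lemma---the subtree rooted at $u$ is entirely red and the one rooted at $v$ is entirely blue---the red set $R$ is exactly the subtree rooted at $u$ and the blue set $B$ is exactly the subtree rooted at $v$. It follows that the edge cut $\delta(R)$ consists of the tree edge $su$ together with the green edges (those with one red and one blue endpoint), and that $\delta(R)$ is a \emph{bond}: both $R$ and its complement $\{s\}\cup B$ induce connected subgraphs of the connected graph $G$ (the former contains the subtree rooted at $u$; the latter contains $s$, the edge $sv$, and the subtree rooted at $v$).

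Next I would invoke the classical duality between bonds of a plane graph and cycles of its dual: the duals of the edges of $\delta(R)$ form a simple cycle $Z$ in $G^*$, and if we orient each dart of $\delta(R)$ so that it points from $\{s\}\cup B$ into $R$, the corresponding dual darts are exactly the darts obtained by traversing $Z$ in one direction. The darts of $\delta(R)$ pointing into $R$ are precisely $\dart{s}{u}$ together with the blue-to-red darts---that is, $\dart{s}{u}$ together with the active darts (the only $T_\lambda$-edge in the cut is $su$, and it contributes exactly one dart in this direction, while each green edge contributes exactly its unique blue-to-red dart). A short computation with shores---using that bisecting $uv$ leaves the shores of the $u$-side dart unchanged, so $\rsh(\dart{u}{s})=\rsh(\dart{u}{v})$ and $\lsh(\dart{u}{s})=\lsh(\dart{u}{v})$---gives $(\dart{s}{u})^* = \dart{b}{a}$. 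Hence the directed traversal of $Z$ uses the dart $\dart{b}{a}$, and deleting this single dart leaves a simple directed path from $a$ to $b$ whose darts are precisely the duals of the active darts. Every such dual dart is the dual of a green edge, hence the dual of an edge not in $T_\lambda$, hence an edge of the spanning tree $C_\lambda^*$; and a simple directed path from $a$ to $b$ lying inside the tree $C_\lambda^*$ can only be the unique $a$-to-$b$ path in $C_\lambda^*$, namely $\pi_\lambda$. This establishes the correspondence above, and the lemma follows from Corollary~\ref{C:next-pivot}.

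I expect the orientation bookkeeping to be the only delicate point: one must keep careful track of the conventions for $\lsh$, $\rsh$, the dart-duality $(\cdot)^*$, and the naming of $a$ and $b$ to be certain that the active-dart duals run from $a$ to $b$ (and not from $b$ to $a$), and to confirm that $su$ is the unique $T_\lambda$-edge crossing $\delta(R)$. Everything else is either set up already in the excerpt (Lemma~\ref{L:active}, Corollary~\ref{C:next-pivot}, and the colouring of vertices) or a standard fact about planar graph duality.
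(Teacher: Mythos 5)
Your proof is correct and follows essentially the same route as the paper's: reduce via Lemma~\ref{L:active}(c) and Corollary~\ref{C:next-pivot} to showing that active darts are exactly those whose duals lie on $\pi_\lambda$, then invoke planar duality for the red/blue partition. The paper phrases the duality step topologically (the red and blue dual faces each form a disk, and their common boundary is a cycle in $G^*$ made up of $ab$ and $\pi_\lambda$), whereas you phrase it combinatorially via the bond--cycle duality for $\delta(R)$ and then pin down orientations explicitly, including the check $(\dart{s}{u})^*=\dart{b}{a}$; this makes your version slightly more detailed than the paper's but it is the same underlying argument, and the orientation bookkeeping you flag as delicate does in fact work out under the paper's conventions for $\lsh$, $\rsh$, and dart duality.
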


\begin{proof}
Lemma \ref{L:active} and Corollary \ref{C:next-pivot} imply that the next dart to become tense is the active dart $\dart{x}{y}$ with minimum slack.  Thus, it suffices to prove that a dart is active if and only if its dual lies in $\pi_\lambda$.

The coloring of the vertices of $G$ extends by duality to a coloring of the faces of $G^*$.  Because the red and blue vertices of $G$ define subtrees of $T_\lambda$, the union of the red dual faces and the union of the blue dual faces are both connected.  Thus, both of these unions are topological disks whose common boundary is a cycle in $G^*$, composed of the edge $ab$ and the unique (undirected) path from~$a$ to~$b$ in~$C^*_\lambda$.  Any active dart must cross this cycle in the opposite direction as the dart $\dart{u}{v}$.
\end{proof}

\begin{lemma}
We can perform the next pivot into $T_\lambda$ in $O(\log n)$ amortized time.
\end{lemma}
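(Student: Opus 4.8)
The plan is to reduce a single pivot to a constant number of operations on the two dynamic forest data structures, each costing $O(\log n)$ amortized time. By Lemma~\ref{L:planar-active}, when $T_\lambda$ contains both $su$ and $sv$, the next dart to become tense is the dart of minimum slack whose dual lies on the directed path $\pi_\lambda$ from $a$ to $b$ in $C^*_\lambda$. Since $C^*_\lambda$ is stored in the dual forest structure with $\val((\dart{x}{y})^*) = \slack_\lambda(\dart{x}{y})$, the call $\minpath(a,b)$ returns exactly the dual $(\dart{x}{y})^*$ of the pivoting dart $\dart{x}{y}$ in $O(\log n)$ amortized time; if $a$ and $b$ are not in the same tree (i.e.\ $su$ or $sv$ has already left $T_\lambda$), Lemma~\ref{L:active}(a,b) tells us there is nothing left to do for this edge. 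So identifying the pivot is one dynamic-tree query.

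Next I would handle the bookkeeping of advancing $\lambda$ and updating distances and slacks. Before the pivot, $\dart{x}{y}$ is active, so $\slack_\lambda(\dart{x}{y}) = \sigma_0 - \lambda\hw(uv)$; the critical value $\lambda^*$ at which this reaches zero is recovered from $\getdartvalue((\dart{x}{y})^*)$ and the known rate $\hw(uv)$. As $\lambda$ moves to $\lambda^*$, exactly the red vertices change distance: $\dist_\lambda(\cdot)$ decreases on the blue subtree and increases on the red subtree, but by Corollary~\ref{C:next-pivot} only the relative offset matters, so it suffices to add the appropriate increment to the red subtree of $T_\lambda$ — which is the subtree rooted at $u$ — via a single $\addsubtree$ call on the primal tree structure. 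Symmetrically, every active dart's slack drops by the same amount, i.e.\ every dart whose dual lies on $\pi_\lambda$; this is precisely one $\addpath(\Delta, a, b)$ on the dual forest structure, which also correctly updates the reversed darts' slacks. Thus updating all distances and slacks to the critical value costs $O(\log n)$ amortized.

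Finally I would execute the combinatorial pivot itself. The dart $\dart{x}{y}$ enters $T_\lambda$ and the old tree dart $\dart{\pred(y)}{y}$ leaves; dually, $(\dart{x}{y})^*$ leaves $C^*_\lambda$ and $(\dart{\pred(y)}{y})^*$ enters. In the primal tree structure this is a $\cut$ of the edge into $y$ followed by a $\join$ adding the edge $xy$ (rooting so that $x$'s side keeps the root $s$); in the dual forest structure it is a $\cut$ of $(\dart{x}{y})^*$ followed by a $\join$ inserting $(\dart{\pred(y)}{y})^*$ with its two slack values, which we read off with $\getdartvalue$ before deleting anything. Each of these is $O(1)$ dynamic-tree operations, so the total is $O(\log n)$ amortized.

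The main obstacle is not the data-structure mechanics but arguing correctness of the two batched updates: that moving $\lambda$ to the critical value changes distances only on the subtree rooted at $u$ and changes slacks only on the darts dual to $\pi_\lambda$, and — crucially — that after the single $\addsubtree$/$\addpath$ pair, the stored values again equal $\dist_{\lambda^*}$ and $\slack_{\lambda^*}$ for the (new) tree and cotree. The first part follows from the red/blue structure established in Section~\ref{SS:one-edge} (the red vertices form the subtree at $u$, and by Lemma~\ref{L:planar-active} the active darts are exactly those whose dual lies on $\pi_\lambda$). For the second part, after the pivot $y$ flips from red to blue along with its subtree, so the invariant $\slack_{\lambda^*}(\dart{\pred(y)}{y}) = 0$ for the new tree dart holds automatically; one must check that the newly inserted dual dart receives consistent values and that no not-yet-seen dart was disturbed, which is exactly what $\addpath$ guarantees since it touches only $\pi_\lambda$. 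I would spell this invariant-maintenance argument out carefully and treat everything else as a routine translation into the operations of Section~\ref{SS:dyn-trees}.
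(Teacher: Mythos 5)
Your approach is the same as the paper's: use $\minpath(a,b)$ on the dual forest structure to identify the pivoting dart (correct by Lemma~\ref{L:planar-active}), advance $\lambda$ to the critical value by a batch update of distances via $\addsubtree$ and of slacks via $\addpath$, and then fix the tree--cotree decomposition with a constant number of $\cut$/$\join$ operations; each step is $O(1)$ dynamic-forest operations, so the $O(\log n)$ amortized bound follows. Two details deserve correction, though neither affects the asymptotics. First, you claim a single $\addsubtree$ call on the red subtree suffices because ``only the relative offset matters,'' citing Corollary~\ref{C:next-pivot}. That corollary is about slacks, which indeed depend only on distance differences; but the primal tree structure is later queried for \emph{absolute} shortest-path distances (the first operation of the theorem that follows), so the blue subtree's distances must also be decremented, as the paper does with a second $\addsubtree(-\Delta_B, v)$. (You could instead keep a single global distance offset, but then you should say so explicitly.) Second, you handle the degenerate cases of Lemma~\ref{L:active}(a,b), but you omit the third boundary case the paper handles: even when both $su$ and $sv$ are in $T_\lambda$ and $\pi_\lambda$ is nonempty, the minimum slack may be large enough that the candidate pivot would occur only at some $\lambda' > 1$, in which case \textsc{FindNextPivot} must return \textsc{Null} and the traversal of this edge ends. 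Without that check the algorithm could execute a pivot that does not belong to any $T_\lambda$ with $\lambda \in [0,1]$.
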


\begin{proof}
Our algorithms for finding the next dart to pivot and executing the pivot are shown in Figure \ref{F:planar-pivot}.  Each algorithm performs a constant number of dynamic forest operations, plus a constant amount of additional work.

Under most circumstances, calling $\minpath(a, b)$ gives us (the dual of) the next dart $\dart{x}{y}$ to pivot into the tree, but there are several boundary cases.  Two such cases are described by Lemma \ref{L:active}(a) and~(b); another arises when there is too much slack for $\dart{x}{y}$ to pivot before $\lambda$ reaches $1$.  To detect this latter case, we compute the value $\lambda'$ that would make the slack of $\dart{x}{y}$ zero and check whether it is below $1$.  All these cases are handled by our algorithm \textsc{FindNextPivot}.

If \textsc{FindNextPivot} returns a dart $\dart{x}{y}$, we perform the necessary data structure updates by calling $\textsc{Pivot}(\dart{x}{y})$.  If $\Delta$ is the current slack of $\dart{x}{y}$, the parameter $\lambda$ must increase by $\Delta/\hw(uv)$ before $\dart{x}{y}$ actually pivots into $T_\lambda$.  We increase the distances of the red vertices by $\Delta_R=\Delta w(\dart{v}{u})/\hw(uv)$ by calling $\addsubtree(\Delta_R, u)$, decrease the distances at the blue vertices by $\Delta_B=\Delta w(\dart{u}{v})/\hw(uv)$ by calling $\addsubtree(-\Delta_B, v)$, and adjust the slacks of all the active darts and their reversals by calling $\addpath(-\Delta, a, b)$.  Finally, we fix the underlying tree-cotree decomposition using two \cut\ and two \join\ operations.  
\end{proof}

When there are no edges left to pivot into the tree $T_\lambda$, the algorithm $\textsc{FindNextPivot}$ returns \textsc{Null}.
However, we still have to slide the source $s$ by $1-\lambda$ to reach $v$. Thus, the distances in the primal tree have to be updated by calling $\addsubtree(- (1-\lambda) w(\dart{u}{v})/\hw(uv), v)$ and, if $\pred(u) \ne s$, also $\addsubtree((1-\lambda) w(\dart{v}{u})/\hw(uv), u)$. 
We have shown the following result, already proved by Klein~\cite{k-msspp-05}.

\begin{figure}
\centering\small
\begin{algorithm}
	\textul{\textsc{FindNextPivot}:}\+
\\	if $\pred(v) \ne s$ then return $\dart{s}{v}$
\\	if $\pred(u) \ne s$ then return \textsc{Null}
\\[1ex]
	$(\dart{x}{y})^* \gets \minpath(a, b)$
\\	$\Delta \gets \getdartvalue((\dart{x}{y})^*)$
\\[1ex]
    $\lambda' \gets \lambda+ \Delta/\hw(uv)$
\\[1ex]
	if $\lambda' <1$\+
\\		return $\dart{x}{y}$\-
\\	else\+
\\		return \textsc{Null}\-
\end{algorithm}
\qquad
\begin{algorithm}
	\textul{$\textsc{Pivot}(\dart{x}{y})$:}\+
\\	$\Delta \gets \getdartvalue((\dart{x}{y})^*)$
\\	$\lambda' \gets \lambda+ \Delta/\hw(uv)$ 
\\	$\Delta_B\gets  \Delta w(\dart{u}{v})/ \hw(uv)$
\\  $\Delta_R \gets \Delta w(\dart{v}{u})/\hw(uv)$
\\[1ex]
    if $\pred(u) = s$ then $\addsubtree(\Delta_R, u)$
\\	if $\pred(v) = s$ then $\addsubtree(-\Delta_B, v)$
\\	$\addpath(-\Delta, a, b)$
\\[1ex]
	$z \gets \pred(y)$; ~$\pred(y) \gets x$
\\  $\cut(zy)$;~ $\join(x, y)$
\\	$\cut((xy)^*)$;~ $\join((\dart{z}{y})^*, 0, \hw(yz))$
\end{algorithm}
\caption{Algorithms to find and execute the next pivot when $G$ is planar.}
\label{F:planar-pivot}
\end{figure}

\begin{theorem}
Let $G$ be a directed plane graph with $n$ vertices, and let $s$ be any vertex in $G$.  After $O(n\log n)$ preprocessing time, we can maintain a representation of the shortest-path tree $T_s$ that supports the following operations:
\begin{itemize}\cramped
\item
Given any vertex $v$, return the shortest-path distance from $s$ to $v$ in $O(\log n)$ time.
\item
Given any vertex $v$, return the last edge on the shortest-path from $s$ to $v$ in $O(1)$ time.
\item
For any edge $sv$, change the source vertex from $s$ to $v$ in $O(k\log n)$ amortized time, where $k$ is the number of edges in $T_s\setminus T_v$.
\end{itemize}
\end{theorem}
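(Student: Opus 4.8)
The plan is to combine the parametric-pivoting scheme of Section~\ref{SS:one-edge}, the fast-pivot algorithms \textsc{FindNextPivot} and \textsc{Pivot} of the preceding subsection, and the two dynamic forest data structures of Section~\ref{SS:dyn-trees}; with those lemmas in hand, the theorem reduces to bookkeeping.

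For preprocessing, I would first compute an initial shortest-path tree $T_s$ rooted at $s$ with Dijkstra's algorithm in $O(n\log n)$ time, and complete it to a tree-cotree decomposition by setting $C_s := G\setminus T_s$; since $G$ is planar, $C_s^* = (G\setminus T_s)^*$ is automatically a spanning tree of the dual. I then load $T_s$ into a primal tree structure, storing $\dist(v)$ as the value of each vertex $v$, and load $C_s^*$ into a dual forest structure, storing $\slack(\dart{x}{y})$ as the value of each dual dart $(\dart{x}{y})^*$; this takes $O(n)$ \create\ and \join\ operations, hence $O(n\log n)$ time. I also maintain the array $\pred(\cdot)$ explicitly. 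A distance query is then a single \getnodevalue\ call, in $O(\log n)$ time, and the last edge on the shortest path from $s$ to $v$ is $\pred(v)\,v$, returned in $O(1)$ time from that array.

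To change the source from $s$ to $v$ along the edge $sv$, I would instantiate the construction of Section~\ref{SS:one-edge} on this edge, with the current source $s$ playing the role of the endpoint ``$u$'' there and a fresh moving vertex inserted into $sv$; then set the parametric weights on the two halves and sweep $\lambda$ from $0$ to $1$. Operationally this is the loop ``call \textsc{FindNextPivot}; while it returns a dart $\dart{x}{y}$, call $\textsc{Pivot}(\dart{x}{y})$''; when \textsc{FindNextPivot} returns \textsc{Null}, I perform the final ``slide'' of the source by $1-\lambda$ (the one or two extra \addsubtree\ calls described just after the pivot lemma), then identify the moving vertex with $v$ --- whose incident auxiliary edge now has weight zero --- and declare $v$ the new source. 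Correctness of each iteration is exactly Lemma~\ref{L:planar-active}: \textsc{FindNextPivot} returns the next dart to become tense, with the degenerate cases of Lemma~\ref{L:active}(a) and~(b) and the case where too much slack remains for a pivot before $\lambda$ reaches $1$ all tested explicitly; and $\textsc{Pivot}$ performs exactly the distance, slack, and tree-cotree updates justified in the proof of the pivot lemma. The loop invariant is that the two data structures always encode the parametric shortest-path tree $T_\lambda$ and its complementary cotree $(G\setminus T_\lambda)^*$, and it is preserved by each \textsc{Pivot} because the update restores Ford's optimality conditions.

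For the running time, the pivot lemma gives $O(\log n)$ amortized per iteration (a constant number of dynamic-forest operations plus $O(1)$ extra work), the counting corollary of Section~\ref{SS:one-edge} bounds the number of iterations by $O(k)$ with $k := |T_s\setminus T_v|$, and the setup, teardown, and final slide contribute only $O(\log n)$ more; so the amortized cost of a source change is $O(k\log n)$, with the amortization inherited from the dynamic forests over the whole sequence of operations. The one genuinely delicate point in writing the theorem itself --- as opposed to the lemmas it rests on --- is checking that the loop invariant really does survive every \textsc{Pivot}: in particular that, immediately after the swap, the dual forest structure still represents $(G\setminus T_\lambda)^*$ with the correct slack on every dart, including the departing tree dart $\dart{z}{y}$ reinserted with value $0$ (and its reversal with value $\hw(yz)$), which is correct because $\dist_\lambda(y)$ is continuous through the pivot and so $\dart{z}{y}$ still has zero slack at that instant. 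Everything else is routine.
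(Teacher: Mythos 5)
Your proof is correct and takes essentially the same approach as the paper. The paper itself states this theorem without an explicit proof, treating it as a direct consequence of the preceding lemmas (the pivot-counting corollary in Section~\ref{SS:one-edge}, Lemma~\ref{L:planar-active}, and the $O(\log n)$ amortized pivot lemma, together with the remark about the final ``slide''), and your write-up reconstructs exactly that chain, correctly handling the boundary cases from Lemma~\ref{L:active}(a,b), the final slide by $1-\lambda$, and the slack bookkeeping for the departing dart $\dart{z}{y}$.
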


A nearly identical algorithm updates the shortest-path tree after changing the weight of any single edge, or deleting an edge, or inserting an edge (provided the graph remains planar), in $O(k\log n)$ amortized time, where $k$ is the number of edges added to or deleted from the shortest-path tree.

\subsection{Moving Around a Face}

Finally, we bound the running time of our algorithm as the source vertex moves all the way around the boundary of a given face $f$.  Klein~\cite{k-msspp-05} noted that if one maintains the so-called \emph{leftmost shortest-path tree}, each edge enters and leaves the shortest-path tree at most a constant number of times.  (For graphs with unique shortest-paths, the leftmost shortest-path tree is the unique shortest-path tree.)  This property also holds for our algorithm when the edge weights
are generic.

\begin{lemma}\label{le:edge}
Assume that the edge weights are generic. As the source point $s$ moves around the boundary of $f$, each dart of $G$ pivots into (or out of) the shortest-path tree rooted at $s$ at most once.
\end{lemma}

\begin{proof}
Without loss of generality, assume $f$ is the outer face of $G$.  Let $T$ be any spanning tree $T$ of $G$, let $e$ be any edge of $T$, and let $S$ be either component of $T\setminus e$.  The Jordan curve theorem implies that the subtree~$S$ contains either no vertices of $f$, every vertex of $f$, or a contiguous interval of vertices of $f$.

Consider an arbitrary dart $\dart{x}{y}$.  Let $A$ be the set of vertices of $f$ whose shortest-path trees contain $\dart{x}{y}$.  Equivalently, $A$ is the set of vertices of $f$ that lie in the subtree rooted at $x$ in the shortest-path tree rooted at $y$, if all edges of $G$ are reversed.  Thus, either $A$ is empty, $A$ is the complete set of vertices in $f$, or $A$ is a contiguous interval of vertices along the facial walk defining $f$.  In the first case,  $\dart{x}{y}$ is never in the shortest-path tree; in the second case, $\dart{x}{y}$ is always in the shortest-path tree; in the third case, $\dart{x}{y}$ pivots in exactly once and pivots out exactly once.
\end{proof}

Like Klein \cite{k-msspp-05}, we can assume that the input graph $G$ has bounded degree by replacing every vertex of degree $d$ with a small tree of degree-3 vertices and diameter $O(\log d)$  This assumption allows us to maintain the primal dynamic tree structure and the  predecessor pointers at every node in a persistent data structure~\cite{dsst-mdsp-89}, so that we can access any version of the shortest-path tree in $O(\log n)$ time.  (It is not necessary to make the dual forest structure persistent.)  The resulting persistent data structure requires $O(\log n)$ space per pivot, or $O(n \log n)$ space overall.  When converting the input graph to a graph of bounded degree, any shortest-path with $k$ edges becomes a shortest-path with $O(k\log \Delta)$ edges, where $\Delta$ is the maximum degree in $G$.  We have shown the following result, already proved by Klein~\cite{k-msspp-05} without the assumption on generic weights.

\begin{theorem}
\label{Th:planar}
Let $G$ be a directed plane graph with $n$ vertices and generic edge weights, and let $f$ be any face of $G$.  In $O(n\log n)$ time and space, we can construct a data structure that supports the following operations:
\begin{itemize}\cramped
\item
Given any vertex $u$ on the boundary of $f$ and any other vertex $v$, return the shortest-path distance from $u$ to $v$ in $O(\log n)$ time.
\item
Given any vertex $u$ on the boundary of $f$ and any other vertex $v$, return the shortest-path from $u$ to $v$ in $O(\log n + k \log \Delta)$ time, where $k$ is the number of edges in the path and $\Delta$ is the maximum degree of $G$. 
\end{itemize}
\end{theorem}

%%%%%%%%%%%%%%%%%%%%%%%%%%%%%%%%%%%%%%%%%%%%%%%%%%%%%%%%%%%%%%%%%%%%%%%%%%
\section{Algorithm for Higher-Genus Graphs}
\label{S:surfaces}

Our algorithm for planar graphs does not immediately extend to graphs of higher genus, primarily because the complement of a spanning tree is no longer a cotree, and therefore the active darts can have much more complex structure than a single dual path.  Even when the red and blue subtrees are separated by a single dual cycle at one value of $\lambda$, a single pivot can split the red-blue boundary into multiple dual cycles; see Figure \ref{F:torus}.  Also, the analysis of the planar algorithm (implicitly) relies on the Jordan Curve Theorem, which does not hold on surfaces of higher genus.

We first restrict our attention to \emph{orientable} surfaces. In Section~\ref{S:nonorientable} we show how to reduce the case of non-orientable surfaces to orientable ones.

\begin{figure}
\centering
\includegraphics[height=1.25in]{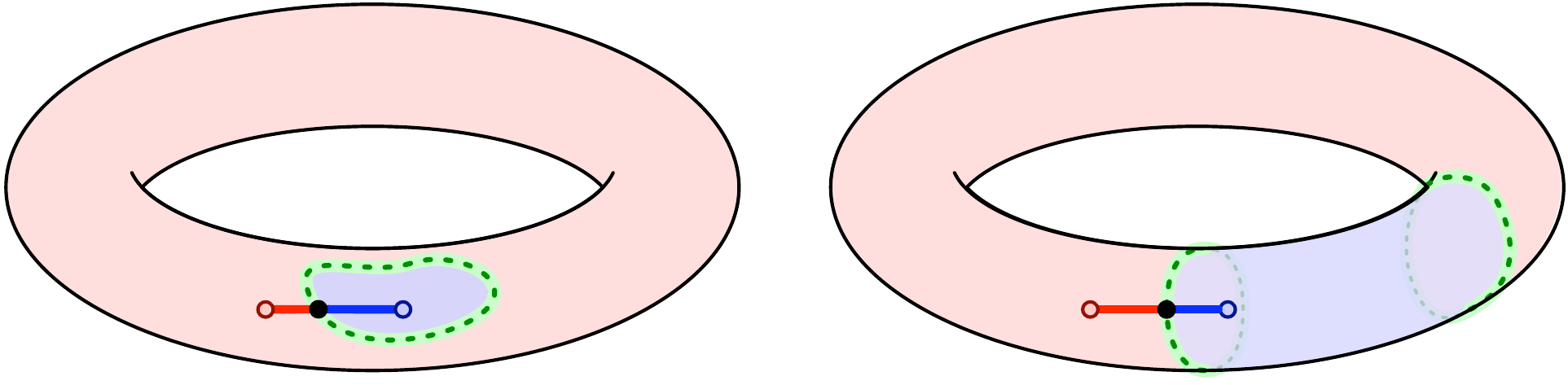}
\caption{On a higher-genus surface, the boundary between the red and blue dual faces may consist of multiple cycles.}
\label{F:torus}
\end{figure}

\subsection{Maintaining a Grove}
\label{SS:grove}

Let $G$ be a cellularly embedded graph on an orientable surface $\Sigma$ of genus $g>0$.  Without loss of generality, we assume that $G$ is a triangulation, so that every vertex of the dual graph $G^*$ has degree $3$.  Let $F$ be a spanning forest of $G$ with $\kappa$ components (that is, a spanning tree of $G$ minus $\kappa-1$ edges) for some constant~$\kappa$.  (In our application, we always have $1\le \kappa \le 3$.)  Euler's formula implies that the complementary dual subgraph $X = (G\setminus F)^*$ is a tree plus $2g + \kappa - 1$ extra edges.  Following Erickson and Har-Peled \cite{schema}, we refer to $X$ as a \EMPH{cut graph}; removing $X$ cuts the underlying surface $\Sigma$ into $\kappa$ topological disks.

Our parametric shortest-path algorithm requires us to quickly identify active darts in this cut graph, and to maintain it as edges are inserted and deleted from $F$.  To support these operations quickly, we decompose $X$ into $O(g)$ edge-disjoint subtrees as follows.

Let $\overline{X}$ be the subgraph of $X$ obtained by repeatedly removing vertices of degree~$1$ until no more remain; $\overline{X}$ is sometimes called the 2-core of $X$.  The subgraph of removed edges is a forest, which we denote $H$ (for `hair').  Equivalently, a dual edge~$e^*$ is in $\overline{X}$ if and only if the endpoints of the primal edge $e$ lie in different trees in $F$, or if both endpoints are in the same tree, but adding $e$ to that tree creates a non-contractible cycle. 

Euler's formula implies that $\overline{X}$ consists of $6g+3\kappa-3 = O(g)$ paths $\pi_1, \pi_2, \pi_3, \dots$, which meet at $4g+2\kappa-2 = O(g)$ vertices of degree~$3$ \cite[Lemma 4.2]{schema}.  We refer to each path~$\pi_i$ as a \EMPH{cut path}, and each degree-3 vertex a \EMPH{branch point}.  For each index $i$, let \emph{$C_i$} denote the union of $\pi_i$ with all trees in $H$ that share a vertex with $\pi_i$.  We call each cut path $\pi_i$ the \EMPH{anchor path} and its endpoints $a_i$ and~$b_i$ the \EMPH{anchor vertices} of the corresponding subtree~$C_i$.  We call the set of $O(g)$ edge-disjoint subtrees $\set{C_1, C_2, C_3, \dots}$ a \EMPH{grove}.\footnote{The Oxford English Dictionary defines \emph{grove} as ``A small wood; a group of trees affording shade or forming avenues or walks, occurring naturally or planted for a special purpose.''}

We maintain the grove by storing the subtrees $C_i$ in the dual forest data structure described in Section \ref{SS:dyn-trees}.  We maintain three separate copies of each branch point in this data structure, one for each subtree~$C_i$ that contains it.  We also separately record (the correct copies of) the anchor vertices of each subtree~$C_i$.  Our grove data structure supports two restructuring operations:
\begin{itemize}
\item
\EMPH{$\GroveJoin(\dart{u}{v}, \alpha, \beta)$:} Add edge $uv$ to the cut graph, and set $\val(\dart{u}{v}) = \alpha$ and ${\val(\dart{v}{u}) = \beta}$.  This operation assumes that $u$ and $v$ are not in the same subtree $C_i$.

\item
\EMPH{$\GroveCut(uv)$:} Remove edge $uv$ from the cut graph.
\end{itemize}

To insert an edge $uv$ into $X$, we first determine the subtrees $C_i$ and $C_j$ that contain $u$ and $v$, respectively.  We then find the vertices $\hat{a} = \junction(u, a_i, b_i)$ and $\hat{b} = \junction(v, a_j, b_j)$.  Next we insert the edge into the dynamic forest by calling $\join(\dart{u}{v}, \alpha, \beta)$, which merges the two subtrees $C_i$ and~$C_j$ into a single subtree $\hat{C}$.  We split $\hat{C}$ into five smaller subtrees by splitting $\hat{a}$ and $\hat{b}$ into three copies, each carrying one outgoing edge, using a constant number of $\cut$ and $\join$ operations.  Finally, we store the anchor vertices of each of the five new subtrees; in particular $\hat{a}$ and $\hat{b}$ are the anchor vertices for the subtree containing $uv$.  See Figure~\ref{F:grove}.  The entire procedure takes $O(\log n)$ amortized time.

To remove the edge $uv$, we follow the insertion procedure in reverse.  We begin by determining which tree $C_i$ contains $uv$.  We then merge the three copies of the anchors $a_i$ and $b_i$ into single vertices, thereby merging five trees into a single tree $\hat{C}$.  We remove the edge from the dynamic forest by calling $\cut(uv)$, splitting $\hat{C}$ into two subtrees, one containing $u$ and the other containing $v$.  Finally, we update the anchor vertices of these two subtrees.  Again, the entire procedure takes $O(\log n)$ amortized time.

\begin{figure}
\centering
\includegraphics[width=\hsize]{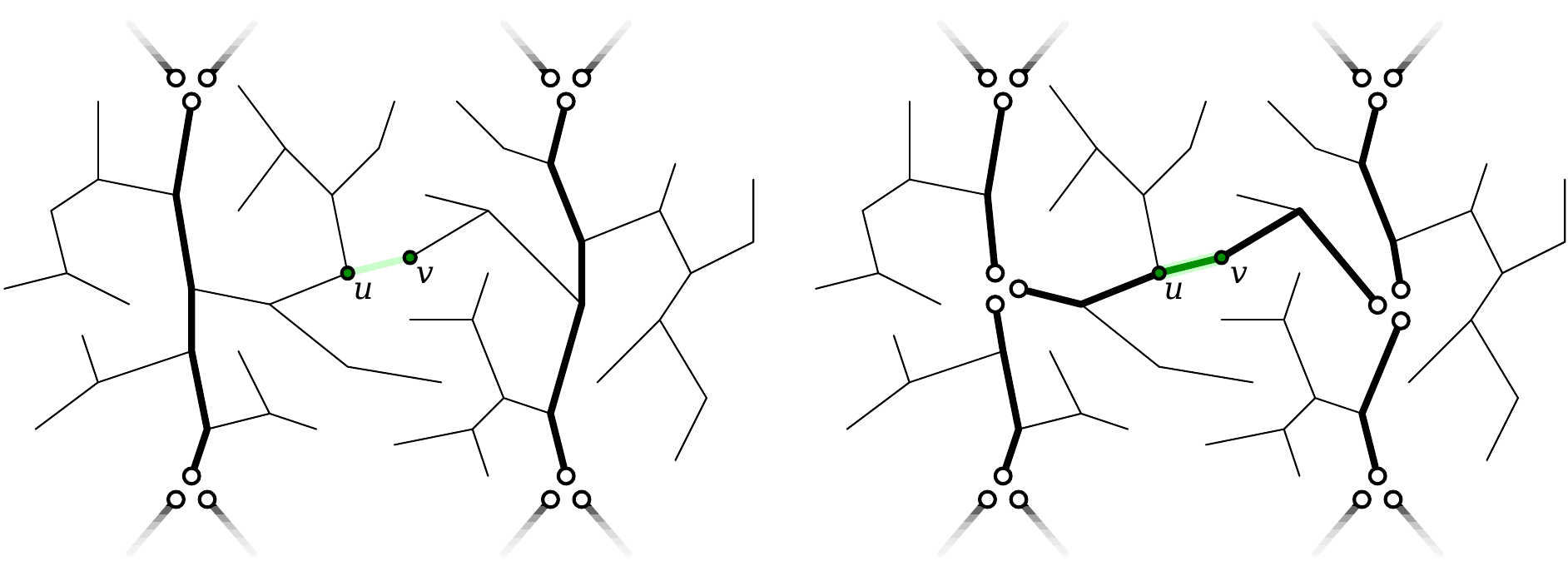}
\caption{Maintaining a grove.  Left to right: Inserting an edge.  Right to left: Deleting an edge.}
\label{F:grove}
\end{figure}

\subsection{Pivoting Quickly}
\label{SS:fast-pivot}

Now we describe how to efficiently move the source of a shortest-path tree in $G$ along a single edge~$uv$, using the parametric shortest-path approach described in Section \ref{SS:one-edge}.  We cannot apply the planar algorithm directly, because the set of active darts is no longer dual to a directed path in $G^*$.  However, the grove structure we just described lets us quickly identify a superset of the active darts.

Suppose $su$ and $sv$ are both edges in the current shortest-path tree $T_\lambda$.  Let $R$ and $B$ be the induced red and blue subtrees of $T_\lambda$, and let $X = (G \setminus (R \cup B))^*$.  We maintain the cut graph $X$ in a grove, as described above.  For each $i$, all the primal edges dual to the edges on the anchor path $\pi_i$ have the same color: the paths $\pi_1,\pi_2,\dots$ cut the surface into two topological disks, one of them containing the blue vertices and the other containing the red vertices, and thus each path $\pi_i$ is on the boundary between disks of different colors or disks of the same color.  We can then associate a color to each subtree $C_i$ in the grove according to the colors of primal edges that cross the anchor path $\pi_i$.  If the crossing edges have two red endpoints, we color $C_i$ \emph{red}; if the crossing edges have two blue endpoints, we color $C_i$ \emph{blue}; if the crossing edges have one red endpoint and one blue endpoint, we color $C_i$ \emph{green}.  Observe that the green edges of $G$ are precisely the edges that cross green anchor paths.

We direct the green anchor paths so that they contain the duals of active darts; that is, for any dart $\dart{x}{y}$ where $x$ is blue and $y$ is red, the dual dart $(\dart{x}{y})^*$ lies on some directed green anchor path.  For each directed green anchor path $\pi_i$, let $a_i$ denote its start vertex and $b_i$ its end vertex.  The following lemma is now almost immediate.

\begin{lemma}
We can perform the next pivot into $T_\lambda$ in $O(g \log n)$ amortized time.
\end{lemma}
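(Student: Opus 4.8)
The plan is to transplant the planar pivoting routine of Figure~\ref{F:planar-pivot} onto the grove, replacing the single dual path $\pi_\lambda$ of Lemma~\ref{L:planar-active} by the $O(g)$ directed green anchor paths maintained in Section~\ref{SS:grove}. Since Lemma~\ref{L:active} and Corollary~\ref{C:next-pivot} are stated for arbitrary graphs, the only genuinely new issues are how we locate the minimum-slack active dart and how we update the reversals of the active darts.

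First I would dispose of the boundary cases exactly as in \textsc{FindNextPivot}. If $\pred(v)\ne s$, then $sv\notin T_\lambda$, so by Lemma~\ref{L:active}(b) the only active dart is $\dart{s}{v}$ and we pivot it; if $\pred(u)\ne s$, then by Lemma~\ref{L:active}(a) there are no active darts and the source simply slides the rest of the way along $uv$, adjusting the primal distances with one or two $\addsubtree$ calls as in the planar case. Otherwise both $su$ and $sv$ lie in $T_\lambda$, so by Lemma~\ref{L:active}(c) the active darts are exactly the $\dart{x}{y}$ with $x$ blue and $y$ red; by construction of the grove, the dual of every such dart lies on one of the $O(g)$ directed green anchor paths $\pi_i$ with endpoints $a_i,b_i$. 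As in Corollary~\ref{C:next-pivot}, every active dart's slack is an affine function of $\lambda$ with common slope $-\hw(uv)$, so the next dart to become tense is the active dart of smallest current slack, which I would find by calling $\minpath(a_i,b_i)$ on each of the $O(g)$ green anchor paths and taking the overall minimum; each call is $O(\log n)$ amortized, so this step costs $O(g\log n)$. If the winning dart $\dart{x}{y}$ has current slack $\Delta$ and $\lambda+\Delta/\hw(uv)\ge 1$, the source reaches $v$ before any further pivot and we stop.

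To execute the pivot I would mimic \textsc{Pivot}$(\dart{x}{y})$: increase the distances of the red subtree by $\Delta_R=\Delta\,w(\dart{v}{u})/\hw(uv)$ with $\addsubtree(\Delta_R,u)$, decrease those of the blue subtree by $\Delta_B=\Delta\,w(\dart{u}{v})/\hw(uv)$ with $\addsubtree(-\Delta_B,v)$, and then adjust the slacks of all active darts by calling $\addpath(-\Delta,a_i,b_i)$ on \emph{each} of the $O(g)$ green anchor paths; because these paths are edge-disjoint and their shared branch points are stored as separate copies, this correctly subtracts $\Delta$ from every active dart's value and adds $\Delta$ to its reversal's. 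Finally, with $z=\pred(y)$, set $\pred(y)\gets x$, update the primal tree by $\cut(zy)$ and $\join(x,y)$, and update the cut graph by $\GroveCut((xy)^*)$ and $\GroveJoin((\dart{z}{y})^*,0,\hw(yz))$. Each $\minpath$, $\addpath$, $\addsubtree$, $\cut$, $\join$ is $O(\log n)$ amortized, and $\GroveCut$/$\GroveJoin$ are $O(\log n)$ amortized by Section~\ref{SS:grove}; with $O(g)$ of the path operations, the total is $O(g\log n)$.

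The step I expect to be the main obstacle is re-establishing the grove invariant after the pivot. When $\dart{x}{y}$ enters the tree, the entire former red subtree hanging below $y$ flips to blue, so the red dual region shrinks and the blue one grows, and — as Figure~\ref{F:torus} illustrates — their common boundary need not remain a single cycle; consequently some cut paths switch color (red$\to$green, green$\to$blue, or green$\to$red) and some green anchor paths must be re-oriented. However, the grove contains only $O(g)$ cut paths and branch points in total; the color of a vertex is simply whether it lies in the primal subtree rooted at $u$ or the one rooted at $v$, which the $\cut(zy)$/$\join(x,y)$ above already maintains; and the color and orientation of a cut path are determined by a single primal edge crossing it. Hence the relabeling touches $O(g)$ paths at $O(\log n)$ amortized cost apiece, and the amortized cost of one pivot remains $O(g\log n)$, as claimed.
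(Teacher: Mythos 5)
Your proposal is correct and follows essentially the same approach as the paper: enumerate the $O(g)$ green anchor paths with $\minpath$, take the minimum, update distances with $\addsubtree$, update slacks with one $\addpath$ per green anchor path, and repair the tree-cotree structure via $\cut$, $\join$, $\GroveCut$, $\GroveJoin$, for $O(g)$ dynamic-forest operations in total. The one place you go beyond the paper's proof is in explicitly flagging the need to recolor and reorient cut paths after a pivot; the paper's statement of this lemma absorbs that bookkeeping into "repair the grove data structure" and only develops the recoloring machinery carefully later (the reduced cut graph and purple face, in service of the sharper $O((g+k)\log n)$ bound), so your extra paragraph is a reasonable fill-in of a detail the paper leaves implicit rather than a different argument.
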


\begin{proof}
The modified algorithms to find the next edge to become tense and pivot it into the shortest-path tree are shown in Figure \ref{F:g-pivot}.  There are only a few differences from the planar algorithms.  First, we must examine every directed green anchor path to find the next edge to pivot.  Second, we must call \addpath\ on every green anchor path to update the slacks of the green edges.  Finally, we must call \GroveCut\ and \GroveJoin\ to repair the grove data structure.  Since there are $O(g)$ subtrees $C_i$, these algorithms perform $O(g)$ dynamic forest operations, plus constant additional work, so their amortized running time is $O(g\log n)$.
\end{proof}

\begin{figure}
\centering\small
\begin{algorithm}
	\textul{\textsc{FindNextPivot}:}\+
\\	if $\pred(v) \ne s$ then return $\dart{s}{v}$
\\	if $\pred(u) \ne s$ then return \textsc{Null}
\\[1ex]
	$\emph{minStep} \gets 1-\lambda$
\\	$\emph{nextPivot} \gets \textsc{Null}$
\\[1ex]
	for each subtree $C_i$\+
\\		if $C_i$ is green\+
\\			$(\dart{x_i}{y_i})^* \gets \minpath(a_i, b_i)$
\\			$\Delta \gets \getdartvalue((\dart{x_i}{y_i})^*)$
\\			if $\Delta /\hw(uv) < \emph{minStep}$\+
\\				$\emph{nextPivot} \gets (\dart{x_i}{y_i})$
\\				$\emph{minStep} \gets \Delta /\hw(uv)$\-\-\-
\\[1ex]
	return \emph{nextPivot}\-
\end{algorithm}
\qquad
\begin{algorithm}
	\textul{$\textsc{Pivot}(\dart{x}{y})$:}\+
\\	$\Delta \gets \getdartvalue((\dart{x}{y})^*)$
\\	$\lambda \gets \lambda + \Delta/\hw(uv)$ 
\\	$\Delta_B\gets  \Delta w(\dart{u}{v})/ \hw(uv)$
\\  $\Delta_R \gets \Delta w(\dart{v}{u})/\hw(uv)$
\\[1ex]
	if $\pred(u) = s$ then $\addsubtree(\Delta_R, u)$
\\	if $\pred(v) = s$ then $\addsubtree(-\Delta_B, v)$
\\[1ex]
	for each subtree $C_i$\+
\\		if $C_i$ is green\+
\\			$\addpath(-\Delta, a_i, b_i)$\-\-
\\[1ex]
	$z \gets \pred(y)$; $\pred(y) \gets x$
\\  $\cut(zy)$;\quad $\GroveJoin((\dart{z}{y})^*, 0, \hw(yz))$
\\  $\join(x, y)$;~ $\GroveCut((xy)^*)$
\end{algorithm}
\caption{Naive algorithms to find and execute the next pivot when $G$ has positive genus.}
\label{F:g-pivot}
\end{figure}

With a little more effort, we can improve the running time slightly.  In addition to our other data structures, we maintain a priority queue of the green subtrees in the grove, where the priority of any subtree $C_i$ is the minimum slack among all darts whose duals lie in $\pi_i$.  Because the grove never contains more than $O(g)$ subtrees, each priority queue operation takes $O(\log g)$ time.

To find the next dart to pivot, instead of looping over every subtree in the grove, we simply extract the minimum element from the priority queue.  Within each call to \GroveCut\ or \GroveJoin, we perform a constant number of priority queue operations as subtrees are created and destroyed.

Finally, instead of calling $\addpath$ once for each green subtree in \textsc{Pivot}, we maintain a global offset for each green subtree in its priority queue record.  This allows us to call \addpath\ just once on each green subtree, just before it is removed from the priority queue---when the subtree changes color, when the subtree is destroyed by \GroveCut\ or \GroveJoin, or when the moving source point $s$ reaches $v$.  Thus, crudely, the number of calls to \addpath\ is less than the number of priority queue operations.

If we ignore priority queue operations and calls to \addpath, the remainder of \textsc{FindNextPivot} and \textsc{Pivot} requires $O(\log n)$ amortized time.  Thus, to complete our analysis, we need only an upper bound on the number of priority queue operations.

A single pivot can change the colors of several subtrees in the grove.  When a subtree changes from red to green, we must insert it into the priority queue; when it changes from red to blue or from green to blue, we must delete it.  To detect these color changes quickly, we separately maintain a \EMPH{reduced cut graph $\widetilde{X}$}, an abstract graph with a vertex for every branch point of $X$ and an edge $\widetilde{e}_i$ for every cut path~$\pi_i$, with a cellular embedding on $\Sigma$ consistent with the embedding of $\overline{X}$.  Each face of the reduced cut graph corresponds to a component of the primal forest $F$.  We can easily update the reduced cut graph in $O(1)$ time within each call to \GroveCut\ or \GroveJoin. 

Just after removing the old edge $\dart{z}{y}$ from the shortest-path tree $T_\lambda$ and adding its dual to the grove, color the subtree rooted at $y$ \textcolor{Purple}{\EMPH{purple}}.  Now the reduced cut graph $\widetilde{X}$ has exactly three faces: one red, one blue, and one purple.  Edges $\widetilde{e}_i$ on the boundary of the purple face correspond to subtrees $C_i$ that are changing color; specifically:
\begin{itemize}
\item
If $\widetilde{e}_i$ also bounds the red face, the corresponding subtree $C_i$ is changing from red to green; we insert it into the priority queue.
\item
If $\widetilde{e}_i$ also bounds the blue face, the corresponding subtree $C_i$ is changing from green to blue; we delete it from the priority queue.
\item
If $\widetilde{e}_i$ has the purple face on both sides, the corresponding subtree $C_i$ is changing directly from red to blue; we ignore it.  (However, for purposes of analysis, we pretend to execute two priority queue operations.)
\end{itemize}
During a single pivot, the time to update $\widetilde{X}$ and traverse the purple face is less than the time spent maintaining the priority queue, so we can ignore it.

Now recall that the grove always contains $O(g)$ subtrees.  Let $n_{\text{red}}$ and $n_{\text{blue}}$ denote the number of red and blue subtrees, respectively.  The number of priority queue operations for a single pivot is a constant (less than 20) plus the \emph{decrease} in $n_{\text{red}}$ plus the \emph{increase} in $n_{\text{blue}}$.  Over the entire parametric shortest-path algorithm, the total decrease in $n_{\text{red}}$ and the total increase in $n_{\text{red}}$ is at most $O(g)$.

Thus, if moving the source point across $uv$ requires $k$ pivots, the total number of priority queue operations is $O(g+k)$.  Because we perform at most one call to \addpath\ for each priority queue operation, the number of calls to \addpath\ is also at most $O(g+k)$.  We conclude:

\begin{theorem}
\label{Th:g-edge}
Let $G$ be a directed graph with $n$ vertices, cellularly embedded on an orientable surface of genus~$g$, and let $s$ be any vertex in $G$.  After $O(n\log n)$ preprocessing time, we can maintain a representation of the shortest-path tree $T_s$ that supports the following operations:
\begin{itemize}\cramped
\item
Given any vertex $v$, return the shortest-path distance from $s$ to $v$ in $O(\log n)$ time.
\item
Given any vertex $v$, return the last edge on the shortest path from $s$ to $v$ in $O(1)$ time.
\item
For any edge $sv$, change the source vertex from $s$ to $v$ in $O((g+k)\log n)$ amortized time, where $k$ is the number of edges in $T_s\setminus T_v$.
\end{itemize}
\end{theorem}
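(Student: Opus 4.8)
The plan is to bolt together the components built in Sections~\ref{SS:one-edge}--\ref{SS:grove} and the ``Pivoting Quickly'' discussion above, so the argument is largely a matter of assembly and bookkeeping. For preprocessing, since $g=o(n)$ the graph has $O(n)$ darts, so a single run of Dijkstra's algorithm with Fibonacci heaps computes the shortest-path tree $T_s$ together with all distances and predecessors in $O(n\log n)$ time. I store $T_s$ in a primal tree structure whose node values are the shortest-path distances, keep an explicit array of predecessor pointers, and store the complementary cut graph $(G\setminus T_s)^*$ as a grove in a dual forest structure whose dart values are the corresponding primal slacks; decomposing the cut graph into cut paths, hair, and branch points and inserting the $O(n)$ dual edges again costs $O(n\log n)$. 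With this representation in place, a distance query is a single $\getnodevalue(v)$ call, costing $O(\log n)$, and the last edge of the shortest path to $v$ is read directly from the predecessor array in $O(1)$ time.

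For a source-change along an edge $sv$, I would run the parametric procedure of Section~\ref{SS:one-edge} with $u:=s$: conceptually bisect $sv$ with a temporary root, assign the linear weights $w_\lambda$, and raise $\lambda$ from $0$ to $1$, so that $T_0=T_s$ and $T_1=T_v$. By Lemma~\ref{L:active} and its corollaries, the combinatorial structure of $T_\lambda$ changes only at the discrete critical values of $\lambda$ where the active dart of minimum slack becomes tense, at each such value exactly one dart pivots in, and the total number of pivots equals $\abs{T_s\setminus T_v}=k$ (Section~\ref{SS:one-edge}). Correctness of the update then reduces to the genus-$g$ analogue of Lemma~\ref{L:planar-active}: since the primal-vertex colors are fixed throughout the sweep, the red and blue vertices of $T_\lambda$ induce two connected unions of dual faces whose common boundary in $G^*$ is carried by cut paths of $X$, so a dart with blue tail and red head can cross that boundary only along a green anchor path, oriented from the blue side to the red side. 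Given this, the algorithms \textsc{FindNextPivot} and \textsc{Pivot} of Figure~\ref{F:g-pivot}, equipped with the priority-queue refinement, maintain all four invariants --- the primal tree with distances, the predecessor array, the grove with slacks, and the reduced cut graph $\widetilde{X}$ --- across each pivot. When \textsc{FindNextPivot} finally returns \textsc{Null}, I slide $\lambda$ up to $1$ with two $\addsubtree$ calls and contract the temporary root into $v$, which restores the original edge $sv$ and leaves $T_v$ stored exactly as $T_s$ was, ready for the next operation.

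For the cost of one source-change I would charge, over the whole $\lambda$-sweep: $O(k\log n)$ for the core work of the $k$ pivots, since each pivot performs $O(1)$ primal-tree, grove, and $\widetilde{X}$ operations at $O(\log n)$ amortized cost once the priority queue removes the per-pivot loop over green subtrees; plus $O(g+k)$ priority-queue operations at $O(\log g)$ each; plus $O(g+k)$ calls to \addpath\ at $O(\log n)$ each. The two $O(g+k)$ bounds are exactly the amortization from the preceding subsection: the grove always holds $O(g)$ subtrees, each pivot creates or destroys only $O(1)$ of them, and a subtree's color evolves monotonically from red through green to blue, so the number of color transitions --- hence of priority-queue insertions and deletions, hence of \addpath\ calls --- is $O(g)$ overall plus $O(1)$ per pivot. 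Summing gives $O((g+k)\log n)$ for the source-change, and since the underlying dynamic forests answer every operation in amortized $O(\log n)$, this is an amortized bound over any sequence of such operations.

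The step needing the most care is the structural claim underlying the update rule --- the analogue of Lemma~\ref{L:planar-active} in positive genus. The red--blue boundary in $G^*$ need no longer be a single cycle: a single pivot can split it into several dual cycles (Figure~\ref{F:torus}), so one must verify that every such cycle is composed entirely of cut paths of the grove, that precisely the green cut paths carry active darts, and that the chosen orientation of the green anchor paths agrees with every active dart; the bookkeeping that keeps $\widetilde{X}$, the subtree colors, and the priority queue mutually consistent through a pivot then rides on top of this fact. The only other nonobvious point is confirming that the amortized count of subtree color changes is genuinely $O(g+k)$ rather than $O(gk)$, which is precisely the monotonicity-plus-constant-per-pivot argument above.
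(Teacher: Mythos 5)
Your proposal tracks the paper's argument very closely: one Dijkstra run for preprocessing, the primal tree and dual grove as the representation, the parametric sweep of Section~\ref{SS:one-edge}, pivot detection via green anchor paths, and the priority-queue plus reduced-cut-graph $\widetilde X$ amortization to turn $O(gk\log n)$ into $O((g+k)\log n)$. The assembly and the cost accounting both match the paper's proof.

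One substantive error to flag: the clause \emph{``since the primal-vertex colors are fixed throughout the sweep.''} They are not. The proof of the second corollary after Lemma~\ref{L:active} states explicitly that just before $\dart{x}{y}$ pivots, $y$ is red, and just after, $y$ (and its entire subtree) is blue; that monotone red-to-blue flip of vertices is the whole mechanism of a pivot. It is also precisely what your later sentence (``a subtree's color evolves monotonically from red through green to blue'') depends on: the grove subtrees recolor monotonically only because vertices recolor monotonically. If vertex colors truly never changed, subtree colors would never change and there would be no priority-queue operations to amortize, but there would also be no pivots. The two statements cannot both be right, and it is the ``fixed colors'' one that is wrong. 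Fortunately the structural claim you draw from it --- that at a given $\lambda$ the red and blue regions are connected and their common dual boundary lies on green anchor paths oriented blue-to-red --- is a statement about a single $\lambda$ and does not rely on the colors being time-invariant, so the downstream argument survives. Correct the premise, keep the conclusion.
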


Again, nearly identical algorithms update the shortest-path tree after changing the weight of any single edge, or deleting an edge, or inserting an edge (provided the graph remains embedded), in $O((g+k)\log n)$ amortized time, where $k$ is the number of edges added to or deleted from the shortest-path tree.

\subsection{Moving Around a Face}

We next bound the running time of our algorithm as the source vertex moves all the way around the boundary of a given face $f$.  We first show a bound on the number of times that an edge can enter or leave the shortest-path tree during the entire traversal, assuming generic edge weights. 

\begin{lemma}
\label{L:edge-g} Assume that the edge weights are generic.
As the source point $s$ moves around the boundary of $f$, each dart of $G$ pivots into (or out of) the shortest-path tree rooted at $s$ at most $O(g)$ times.
\end{lemma}

\begin{proof}
Fix an arbitrary dart $\dart{x}{y}$.  Consider the facial walk $W_f$ defining $f$ and assume that the source is moved along $W_f$. Thus, some edges may be traversed twice by the source of the shortest-path tree, once in each direction. Let $A$ be the set of points $s$ (not just vertices) along $W_f$ such that the shortest-path tree rooted at $s$ contains the dart $\dart{x}{y}$.  $A$ is the union of a set of disjoint, maximal segments $A_1, \dots, A_r$ on $W_f$.  Dart $\dart{x}{y}$ enters the shortest-path tree exactly $r$ times, once at the initial endpoint of each segment $A_i$.

Fix a point $v_i$ in each segment $A_i$, and let $p_i$ be the shortest path from $v_i$ to $y$.  By construction, $p_i$ uses the dart $\dart{x}{y}$, and for all $i \ne j$, paths $p_i$ and $p_j$ do not cross.

Let $\Sigma/f$ be the surface obtained by contracting the face $f$ to a point.  Suppose $p_i$ and $p_j$ are homotopic in $\Sigma/f$.  Then in $\Sigma$, there is a subwalk $f'$ of $f$ that together with $p_i$ and $p_j$ bound a disk (and thus a planar graph).  This implies that the shortest path to $y$ from every vertex of $f'$ must contain the dart $\dart{x}{y}$, because of Lemma~\ref{le:edge}.  It follows that $v_i$ and $v_j$ lie in the same segment of $A$, which is only possible if $i$ and $j$ must be equal.  We conclude that the paths $p_1, p_2, \dots, p_r$ are pairwise non-homotopic.

Finally, a double-counting argument using Euler's formula implies that the maximum possible number of pairwise non-crossing, non-homotopic paths in $\Sigma/f$ is $O(g)$ \cite[Lemma 2.1]{splitting}.  Thus, $r = O(g)$, and the proof is complete.
\end{proof}

Recall from Theorem \ref{Th:g-edge} that the time to move the shortest-path tree across a single edge is $O((g+k)\log n)$, where $k$ is the number of pivots.  The previous lemma implies that the total number of pivots is $O(gn)$, and the total number of edges in $f$ is trivially $O(n)$.  Thus, the total running time to move the source of the shortest-path tree along the boundary of a face is $O(g n \log n)$.

Finally, in order to make our primal data structure persistent, we require the graph to have bounded degree; however, our grove data structure requires the graph to be a triangulation.  We can escape this seeming contradiction as follows.  Given an \emph{arbitrary} embedded graph $G$, we first replace every high-degree vertex with a small tree of degree-3 vertices, each consisting of darts with infinitesimal weight, and then we triangulate each face with edges whose darts have large enough weight.  The edges whose darts have large weight never participate in any shortest path, so the maximum degree of any node in the shortest-path path tree is always at most $3$, and so the primal tree structure can be made persistent efficiently.  We conclude:

\begin{theorem}
\label{Th:genus}
Let $G$ be an directed graph with $n$ vertices and generic edge weights, cellularly embedded in an orientable surface of genus~$g$, and let $f$ be any face of $G$.  In $O(g n\log n)$ time and space, we can construct a data structure that supports the following operations:
\begin{itemize}\cramped
\item
Given any vertex $u$ on the boundary of $f$ and any other vertex $v$, return the shortest-path distance from $u$ to $v$ in $O(\log n)$ time.
\item
Given any vertex $u$ on the boundary of $f$ and any other vertex $v$, return the shortest path from $u$ to $v$ in $O(\log n + k \log \Delta)$ time, where $k$ is the number of edges in the path and $\Delta$ is the maximum degree of $G$. 
\end{itemize}
\end{theorem}

\subsection{Non-Orientable Surfaces}
\label{S:nonorientable}

We can extend Theorem~\ref{Th:genus} to non-orientable surfaces working in the \emph{orientable double cover} of the surface, which is an orientable covering space of $\Sigma$.  (Our construction is essentially the same as used in~\cite{cdem-fotc-10,cm-fsnnc-07}, but is presented here for completeness.)

Recall that a cellular graph embedding of a graph $G$ on a non-orientable surface consists of a rotation system, encoding the cyclic sequence of outgoing darts from each vertex, and a signature, consisting of a bit $i(e)$ for each edge $e$. The signature indicates whether the cyclic sequences at its endpoints are in the same direction or the opposite direction.  

The \EMPH{orientable double cover} $D$ of $G$ is an embedded graph constructed as follows.  For each vertex $v$ in the original graph, create two vertices $(v,0)$ and $(v,1)$ in $D$.  For each dart $\dart{u}{v}$ in the original graph, create two darts $(\dart{u}{v},0) = \dart{(u,0)}{(v,i(uv))}$ and $(\dart{u}{v},1) = \dart{(u,1)}{(v,1-i(uv))}$ in $D$.  The rotation system at $(v,0)$ and $(v,1)$ is just a copy of the rotation system of $v$.  That is, if $\pi_G$ is the original rotation system and $\pi_D$ the rotation system in $D$, then for any dart $\dart{u}{v}$ of $G$ it holds $\pi_D(\dart{u}{v},j)= (\pi_G(\dart{u}{v}),j)$.  Finally, darts $(\dart{u}{v},0)$ and $(\dart{u}{v},1)$ have the same dart weight and the same signature bit as $\dart{u}{v}$.  This cover is orientable because, for any cycle, the sum of the signatures of its edges is $0$ modulo 2.  Note that this cover $D$ has twice as many edges as $G$ and can be constructed in linear time.  

Any walk $(u_0,j_0)(u_1,j_1)(u_2,j_2)\dots(u_t,j_t)$ in $D$ naturally \EMPH{projects} to the walk $u_0u_1u_2\dots u_t$ in $G$.  A \EMPH{lift} of a walk $u_0u_1u_2\dots u_t$ in $G$ is a walk $(u_0,j_0)(u_1,j_1)(u_2,j_2)\dots(u_t,j_t)$ in $D$.  That is, the projection of a lift recovers the original walk.  A lift is uniquely determined by the value $j_0$.  Indeed, since $(u_\ell,j_\ell)(u_{\ell+1},j_{\ell+1})$ must be an edge of $D$, we have 
$$
	j_{\ell+1}=j_{\ell}+i(u_\ell u_{\ell+1})=j_0 + \sum_{\ell'\le \ell} i(u_{\ell'} u_{\ell'+1}) \qquad \mbox{(modulo $2$)}.
$$
This implies that a shortest path from $u$ to $v$ in $G$ corresponds to either the shortest path from $(u,j)$ to $(v,0)$ or the shortest path from $(u,j)$ to $(v,1)$, where $j\in \set{0,1}$ is arbitrary.  Therefore, the distance in $G$ from $u$ to $v$ is the minimum between the distance from $(u,j)$ to $(v,0)$ and the distance from $(u,j)$ to $(v,1)$. 

It is easy to see that a cycle in $D$ is a facial cycle of $D$ if and only it projects to a facial cycle of $G$.  It follows that $D$ has twice as many faces as $G$. 
Euler's formula then implies that the genus of $D$ is $g-1$.  

We can now extend Theorem~\ref{Th:genus} to non-orientable surfaces. Given the graph $G$ and a face $f$, we construct the orientable double cover $D$, choose one of the lifts $f'$ of $f$ in the double cover, and store the data structure of Theorem~\ref{Th:genus} for $D$ and $f'$.  This construction takes $O(gn\log n)$ time because $D$ has size $O(n)$ and smaller genus.  Whenever we want to compute a distance in $G$ from a vertex $u$, on the boundary of $f$, to a vertex $v$, we take the copy $(u,j)$ of $s$ on the boundary of $f'$, query for the distances in $D$ from $(u,j)$ to $(v,0)$ and from $(u,j)$ to $(v,1)$, and return the smallest.  This takes $O(\log n)$ time.  The shortest path from $s$ to $v$ can also be recovered using the projection of the shortest path in $D$ from $(u,j)$ to $(v,0)$ or $(v,1)$, whichever is closer.  We summarize:

\begin{corollary}
Theorem~\ref{Th:genus} also holds for non-orientable surfaces.
\end{corollary}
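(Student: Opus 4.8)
The plan is to reduce the non-orientable case to the orientable case already handled by Theorem~\ref{Th:genus}, via the orientable double cover; the reduction is essentially the construction sketched above, so the argument is short. First I would build the double cover $D$ of the given non-orientable embedding: two copies $(v,0),(v,1)$ of each vertex, darts $(\dart{u}{v},j) = \dart{(u,j)}{(v,\,j+i(uv))}$ for $j\in\set{0,1}$, with rotation system, signature, and weights inherited coordinatewise. This takes $O(n)$ time and produces a graph with $2n$ vertices and $O(n)$ darts.

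Next I would check the three structural facts needed to invoke Theorem~\ref{Th:genus} on $D$. (i)~$D$ is orientable: along any closed walk of $D$ the second coordinate returns to its starting value, so the signature bits of its edges sum to $0$ modulo~$2$. (ii)~$D$ is connected, hence a single combinatorial surface: since $\Sigma$ is non-orientable, $G$ carries a closed walk whose signature bits sum to $1$ modulo~$2$, and lifting such a walk joins the two sheets; together with connectivity of $G$ this makes $D$ connected. (iii)~The genus of $D$ is $g-1$: a cycle of $D$ is facial exactly when its projection is facial, so $D$ has $2\abs{F}$ faces, $2\abs{E}$ edges, and $2\abs{V}$ vertices; comparing $\abs{V_D}-\abs{E_D}+\abs{F_D}=2-2g_D$ with the non-orientable formula $\abs{V}-\abs{E}+\abs{F}=2-g$ gives $g_D=g-1$. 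Thus $D$ is an orientable surface of genus $O(g)$ and complexity $O(n)$, and Theorem~\ref{Th:genus} preprocesses $(D,f')$ in $O(gn\log n)$ time and space for any fixed lift $f'$ of $f$.

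Then, having picked one lift $f'$ of $f$ (so every vertex $u$ on $\partial f$ has a lift $(u,j)$ on $\partial f'$), I would answer a distance query from $u$ to an arbitrary vertex $v$ by querying the Theorem~\ref{Th:genus} structure for $\dist_D((u,j),(v,0))$ and $\dist_D((u,j),(v,1))$ and returning the smaller, in $O(\log n)$ time; a path query is answered by projecting the shorter of the two corresponding shortest paths in $D$, in $O(\log n + k)$ time. Correctness rests on the walk--lift correspondence: projection sends every walk of $D$ to a walk of $G$ of equal length, and every walk of $G$ from $u$ to $v$ has a unique lift starting at $(u,j)$, whose other endpoint is $(v,0)$ or $(v,1)$ according to the parity of the walk's total signature (the displayed formula for $j_\ell$). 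Hence $\dist_G(u,v)=\min\set{\dist_D((u,j),(v,0)),\,\dist_D((u,j),(v,1))}$, independently of the choice of $j$, and the realizing path downstairs is the projection of a realizing path upstairs.

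The only mildly delicate point is item~(iii), the Euler-characteristic bookkeeping that pins the genus of $D$ at exactly $g-1$; this also covers the degenerate case $g=1$ (projective plane), where $D$ is the sphere and one may invoke Theorem~\ref{Th:planar} directly, or simply note that Theorem~\ref{Th:genus} specializes to $g_D=0$ without change. Everything else is routine verification of covering-space properties, so I expect the write-up to be brief.
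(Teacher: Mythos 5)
Your proposal is correct and follows essentially the same route as the paper: build the orientable double cover $D$ with the coordinatewise rotation system and signature, observe that cycles of $D$ have even total signature so $D$ is orientable, use Euler's formula to show $D$ has genus $g-1$, lift a face $f$ to $f'$, and answer each query by taking the minimum of the two distances $\dist_D((u,j),(v,0))$ and $\dist_D((u,j),(v,1))$ via the walk--lift correspondence. The only substantive thing you add beyond the paper's write-up is the explicit check that $D$ is connected (item~(ii), using an odd-signature closed walk to join the two sheets), which the paper leaves implicit; this is a small gain in rigor but not a different argument.
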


%%%%%%%%%%%%%%%%%%%%%%%%%%%%%%%%%%%%%%%%%%%%%%%%%%%%%%%%%%%%%%%%%%%%%%%%%
\section{Computing Shortest Non-Separating and Non-Contractible Cycles}
\label{S:cycles}

Let $\Sigma$ be a combinatorial surface with complexity $n$ and genus $g$; this surface may or may not be orientable.  In this section, we describe algorithms to find the shortest non-separating and shortest non-contractible cycles in $\Sigma$.  Throughout this section we assume generic edge weights. Arbitrary weights are treated in Section~\ref{S:perturb}.  Our use of the technique for maintaining shortest-path trees is condensed in the following lemma.

\begin{lemma}\label{le:arc}
Let $\alpha$ be an arbitrary simple cycle or arc in $\Sigma$.  A shortest cycle crossing $\alpha$ exactly once can be obtained in $O(g n\log n)$ time.
\end{lemma}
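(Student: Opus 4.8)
The plan is to \emph{cut} the combinatorial surface $\Sigma$ along $\alpha$ and reduce the problem to a single multiple-source shortest-path computation, handled by Theorem~\ref{Th:genus}. Let $\Sigma'$ be the surface obtained by cutting $\Sigma$ along $\alpha$. Since $\alpha$ is simple, cutting replaces $\alpha$ by one or two boundary walks whose union double-covers $\alpha$ (two parallel copies $\alpha_1,\alpha_2$ when $\alpha$ is two-sided; a single walk wrapping twice around $\alpha$ when $\alpha$ is a one-sided cycle), so that every vertex $v$ lying on $\alpha$ acquires two copies $v_1,v_2$; moreover $\Sigma'$ has genus at most $g$ and complexity $O(n)$. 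Next I would cap every boundary component of $\Sigma'$ with a disk, obtaining a closed (possibly non-orientable) surface $\Sigma''$ of genus at most $g$ and complexity $O(n)$. Capping only adds faces, so it leaves the underlying weighted graph -- and hence all graph distances -- unchanged, while turning the boundary walk carrying $\alpha_1$ into an honest face $f$ of $\Sigma''$. All of this takes $O(n)$ time, treating each undirected edge as a pair of equally weighted darts.

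The heart of the argument is the correspondence between cycles in $\Sigma$ that cross $\alpha$ exactly once and walks in $\Sigma'$ joining the two copies of a single vertex of $\alpha$. Given a cycle $\gamma$ crossing $\alpha$ exactly once, the crossing occurs at some vertex $v$ on $\alpha$ (subdividing an edge of $\alpha$ if necessary); cutting $\gamma$ at that point yields a walk in $\Sigma'$ from $v_1$ to $v_2$ of the same length. Conversely, a shortest path in $\Sigma'$ from $v_1$ to $v_2$ glues back to a closed walk through $v$ whose only transversal intersection with $\alpha$ is at $v$, after an infinitesimal perturbation in the combinatorial-surface sense. Thus the length of a shortest cycle crossing $\alpha$ exactly once equals $\min_{v\in V(\alpha)} \operatorname{dist}_{\Sigma'}(v_1,v_2)$; in particular, if $\alpha$ is separating then $v_1$ and $v_2$ always lie in different components of $\Sigma'$, this minimum is $+\infty$, and correctly no such cycle exists.

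To evaluate the minimum I would build the data structure of Theorem~\ref{Th:genus} (using its extension to non-orientable surfaces when $\Sigma''$ is non-orientable) for the graph $\Sigma''$ and the face $f$, in $O(gn\log n)$ time. For each of the $O(n)$ vertices $v$ of $\alpha$, the copy $v_1$ lies on $\partial f$, so a single distance query returns $\operatorname{dist}_{\Sigma''}(v_1,v_2)=\operatorname{dist}_{\Sigma'}(v_1,v_2)$ in $O(\log n)$ time; scanning all of them costs $O(n\log n)$. For the minimizing vertex I extract the corresponding shortest path in $O(\log n + k)=O(n)$ time, project it back to $\Sigma$, and identify its two endpoints to produce the desired cycle. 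The total running time is $O(gn\log n)$, as claimed.

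I expect the main obstacle to be justifying the ``$\geq$'' direction of the correspondence, i.e.\ that a shortest $v_1$-to-$v_2$ path in $\Sigma'$ glues up to a cycle crossing $\alpha$ \emph{exactly} once rather than merely an odd number of times: such a path may run along the cut boundary, and one must verify that every such contact is tangential, so that only the identification of the two endpoints creates a genuine transversal crossing. This should follow from the genericity (uniqueness) of shortest paths together with the standard perturbation conventions for curves on combinatorial surfaces. A secondary, more routine obstacle is the bookkeeping across cases -- cycle versus arc, one-sided versus two-sided, separating versus non-separating, and the effect of cutting an arc on the boundary of $\Sigma$ -- but capping all boundary components with disks uniformly reduces every case to one invocation of Theorem~\ref{Th:genus}.
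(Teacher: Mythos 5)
Your proof is correct and follows essentially the same route as the paper: cut $\Sigma$ along $\alpha$, cap the resulting boundary walk(s) with disks so that the copies $v_1$ (resp.\ $v'$) of the vertices of $\alpha$ lie on a single face $f$, invoke Theorem~\ref{Th:genus} on that face, and minimize $\dist(v_1,v_2)$ over all $v\in\alpha$.  The only differences are cosmetic: you cap \emph{all} boundary components rather than only those carrying the two copies of $\alpha$, and you spell out the separating case and the tangential-contact concern that the paper's ``and vice versa'' leaves implicit; both are harmless elaborations of the same argument.
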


\begin{proof}
Consider the surface obtained by cutting $\Sigma$ along $\alpha$: each vertex $v$ in $\alpha$ gives rise to two vertices~$v'$ and $v''$, and two boundary arcs or cycles $\alpha'$ and $\alpha''$.  Let $\Sigma'$ be the surface obtained by gluing disks to the boundary components that contain $\alpha'$ and $\alpha''$.  (If $\alpha$ is an arc or a 1-sided cycle, then $\alpha'$ and $\alpha''$ can be contained in a single boundary.)  A cycle in $\Sigma$ that crosses $\alpha$ once at a point $v$ becomes a path in~$\Sigma'$ between $v'$ and $v''$.  Thus, a shortest cycle that crosses $\alpha$ once at $v$ is a shortest path that connects~$v'$ to~$v''$ in $\Sigma'$, and vice versa.  All the points $v'$ with $v\in \alpha$ belong to a face of $\Sigma'$.  Thus, Theorem~\ref{Th:genus} implies that  we can find a closest pair $(v_0', v_0'')$ in $O(g n\log n)$ time.  Computing the shortest path from $v'_0$ to $v''_0$ gives the desired path.
\end{proof}

For any simple arc or cycle $\alpha$, let $\mathit{Cross}(\alpha)$ denote 
the set of cycles that cross $\alpha$ exactly once.  If $\alpha$ is separating, then $\mathit{Cross}(\alpha)$ is empty, because every cycle crosses $\alpha$ an even number of times.  Also, every cycle in $\mathit{Cross}(\alpha)$ is non-contractible, because contractible cycles are also separating, and therefore any cycle must cross them an even number of times.

\subsection{Shortest Non-Separating Cycle}

Consider a surface $\Sigma$. It suffices to consider surfaces without boundary, because any shortest non-separating cycle in $\Sigma$ is also non-separating in the surface obtained by attaching disks to the boundaries.

Cabello and Mohar~\cite{cm-fsnnc-07} describe how to construct in $O(gn\log n)$ time a set $S$ of $O(g)$ simple cycles such that the shortest cycle in $\bigcup_{\ell\in S} \mathit{Cross}(\ell)$ is a shortest non-separating cycle.  This set is constructed by fixing a shortest-path tree $T_x$ from any vertex $x \in G$,  fixing a tree-cotree decomposition $(T_x, C_x, L_x)$, and then setting $S $ to be the set of cycles formed by adding each of the edges $e \in L_x$ to $T_x$.  Cabello and Mohar use $\mathbb{Z}_2$ homology and the fact that the cycles are formed from 2 shortest paths to prove that the shortest non-separating cycle must cross some cycle of $S$ exactly once~\cite{cm-fsnnc-07}.

Once we have the set $S$, we compute the shortest non-separating cycle by applying Lemma~\ref{le:arc} once for each simple cycle in $S$, and taking the globally shortest cycle.

\begin{theorem}\label{th:non-sep}
Let $\Sigma$ be a surface of complexity $n$ and genus~$g$, with generic weights.  We can find a shortest non-separating cycle in $\Sigma$ in $O(g^2 n\log n)$ time.
\end{theorem}

%%%%%%%%%%%%%%%%%%%%%%%%%%%%%%%%%%%%%%%%%%%%%%%%%%%%%%%%%%%%%%%%%%%%%%%%%%%%%%%%%%%
\subsection{Shortest Non-Contractible Cycle}

The main technique for non-contractible cycles is to find a curve that intersects a shortest non-contractible cycle at most once and whose removal decreases either the genus or the number of boundaries of $\Sigma$.  Our main tool to prove that such a curve exists is the following exchange argument.

\begin{lemma}\label{le:crossings}
Let $\Sigma$ be a combinatorial surface and let $\ell_x$ be a shortest non-contractible loop with given basepoint $x\in \Sigma$.  There is a shortest non-contractible cycle in $\Sigma$ that crosses $\ell_x$ at most once.
\end{lemma}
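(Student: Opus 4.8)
The plan is to argue by an exchange argument: take any shortest non-contractible cycle $\gamma$, suppose it crosses $\ell_x$ at least twice, and produce another shortest non-contractible cycle crossing $\ell_x$ strictly fewer times; iterating, we reach one that crosses at most once. First I would reduce to the case where $\gamma$ and $\ell_x$ are in general position, so their intersection is a finite set of transversal crossings (and possibly shared subpaths, which the combinatorial-surface perturbation convention from Section~\ref{S:background} lets us treat as crossings or not; I would count actual crossings after perturbation). Write $\ell_x = x \cdot \delta \cdot \delta' \cdot x$ where we think of $\ell_x$ as a loop based at $x$; the key structural fact I want is that $\ell_x$, being a \emph{shortest} non-contractible loop at $x$, splits at its basepoint into two paths from $x$ to any crossing point, each of length at most $|\ell_x|/2$ — more precisely, for any point $p$ on $\ell_x$, the two subpaths of $\ell_x$ from $x$ to $p$ have lengths summing to $|\ell_x|$, and the shorter one has length $\le |\ell_x|/2$.

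Now suppose $\gamma$ crosses $\ell_x$ at points $p_1, p_2$ (consecutive along $\gamma$ in one direction). The segment of $\gamma$ between $p_1$ and $p_2$, call it $q$, together with a subpath $r$ of $\ell_x$ between $p_1$ and $p_2$, forms a closed curve; choosing $r$ to be the shorter of the two $\ell_x$-subpaths between $p_1$ and $p_2$, it passes through $x$ at most once. I would then consider the two candidate replacement cycles obtained from $\gamma$ by rerouting: replace the arc $q$ of $\gamma$ by the arc of $\ell_x$ on one side, or on the other. Each replacement strictly decreases the number of crossings with $\ell_x$ (the rerouted piece now runs \emph{along} $\ell_x$, which after infinitesimal perturbation contributes no crossings, while $q$ contributed at least the crossing at an intermediate point if there was one, or at least we have removed two endpoints' worth of crossing structure — here I need to be careful and set up the count so that rerouting along $\ell_x$ genuinely reduces the crossing number, which is standard but needs the perturbation convention). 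The main point is a length/topology dichotomy: by the minimality of $\gamma$, neither replacement can be both non-contractible and strictly shorter; and by the minimality of $\ell_x$ as a non-contractible \emph{loop}, the two subpaths of $\ell_x$ cannot both be "too long." A careful accounting of these inequalities forces one of the replacements to be a non-contractible cycle of length $\le |\gamma|$, hence a shortest non-contractible cycle, with strictly fewer crossings.

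The step I expect to be the main obstacle is ruling out the "bad" case where \emph{both} replacement cycles turn out to be contractible (or separating in a way that kills non-contractibility) while $\gamma$ itself is non-contractible — i.e., showing that $\gamma$ cannot be "trapped" so that every local reroute along $\ell_x$ destroys non-contractibility. The resolution should use the fact that $q$ and the two arcs of $\ell_x$ between $p_1$ and $p_2$ together bound, two at a time, three regions, and in the free group / fundamental groupoid one of the two rerouted cycles must be homotopic to $\gamma$ composed with a loop that is either trivial or a conjugate/power of $[\ell_x]$; since $[\ell_x]\ne 1$, a standard cancellation argument in $\pi_1$ (or the observation that if both reroutes were contractible then $\gamma$ itself would be, contradiction) gives the claim. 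I would also double-check the boundary case where $x$ itself is one of the crossing points, and the case where $\gamma = \ell_x$ up to basepoint, which is trivial. The length bookkeeping — that rerouting never increases length because we always pick the shorter $\ell_x$-subpath and invoke $|\gamma|$-minimality — is routine once the topological dichotomy is nailed down.
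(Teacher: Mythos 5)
Your overall strategy — an exchange argument, taking a shortest non-contractible cycle $C$ with two crossings $y,z$ on $\ell_x$, splitting $C$ into arcs $\gamma_1,\gamma_2$ and $\ell_x$ into arcs $\beta_1$ (through $x$) and $\beta_2$, and rerouting — is exactly the paper's approach. But there are two genuine gaps in the details, and both sit precisely in the places you flagged as ``the main obstacle'' and ``routine.''

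First, the length bookkeeping is \emph{not} routine, and ``pick the shorter $\ell_x$-subpath'' does not do it. There is no a priori reason that either arc of $\ell_x$ is shorter than the arc of $C$ you are removing; $C$ could be short and $\ell_x$ long. The bound comes from a different place: one must identify an index $i$ for which $\beta_1\concat\gamma_i$ is a \emph{non-contractible loop through $x$}, and then minimality of $\ell_x$ as a based loop gives $|\ell_x|=|\beta_1|+|\beta_2|\le|\beta_1|+|\gamma_i|$, hence $|\beta_2|\le|\gamma_i|$. The replacement cycle is then $\tilde C=\beta_2\concat\gamma_{3-i}$, whose length is at most $|\gamma_i|+|\gamma_{3-i}|=|C|$. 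Your sketch never forms the auxiliary loop $\beta_1\concat\gamma_i$ through $x$, so it has no mechanism to compare an $\ell_x$-arc with a $C$-arc. Relatedly, your two candidate reroutes are $\gamma_2\concat\beta_2$ and $\gamma_2\concat\beta_1$; the paper's candidate replacements are $\beta_2\concat\gamma_1$ and $\beta_2\concat\gamma_2$ (always using the $x$-avoiding arc $\beta_2$), while the loops through $x$ ($\beta_1\concat\gamma_1$, $\beta_1\concat\gamma_2$) are used only for the length inequality, never as replacements. Conflating these two roles is what makes your accounting fail.

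Second, the topological dichotomy you invoke is mis-sourced. You claim that if both of your reroutes were contractible, ``then $\gamma$ itself would be, contradiction.'' With your two candidates $\gamma_2\concat\beta_2$ and $\gamma_2\concat\beta_1$, both contractible gives $\beta_1\sim\bar\gamma_2\sim\beta_2$ rel endpoints, i.e., $\ell_x$ contractible — the contradiction is with $\ell_x$, not with $\gamma$. The actual case analysis in the paper is on whether $\beta_2\sim\gamma_1$, $\beta_2\sim\gamma_2$, or neither; in the ``neither'' case one observes that $\beta_1\concat\gamma_1$ and $\beta_1\concat\gamma_2$ cannot both be contractible (else $\gamma_1\sim\beta_1\sim\gamma_2$ forces $C$ contractible), and this is the place where non-contractibility of $C$, rather than of $\ell_x$, enters. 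Nailing down exactly which curve supplies the contradiction in which case is where your sketch, as written, would run aground.
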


\begin{proof}
Let $C$ be any non-contractible cycle that crosses $\ell_x$ at least twice, at points $y$ and $z$.  Let $\gamma_1$ and~$\gamma_2$ be the two subpaths of $C$ from $y$ to $z$, and let $\beta_1$ and $\beta_2$ be the subpaths of $\ell_x$ from $y$ to $z$ so that $\beta_1$ contains $x$. To simplify notation, we do not differentiate between a path and its reverse.  We consider two cases, and in each case find a non-contractible cycle $\tilde{C}$ that is no longer than $C$ and crosses $\ell_x$ fewer times than $C$. This implies that some shortest non-contractible cycle crosses $\ell_x$ at most once.

First, suppose $\beta_2 \sim \gamma_1$; the case $\beta_2 \sim \gamma_2$ is symmetric.  The loop $\beta_1 \concat \gamma_1$ passes through $x$, and it is non-contractible because $(\beta_1 \concat \gamma_1) \sim (\beta_1 \concat \beta_2) = \ell_x$.  We then have $|\beta_2| \le |\gamma_1|$ because $|\ell_x|\leq |\beta_1|+|\gamma_1|$.  The cycle $\tilde C=\gamma_2 \concat \beta_2$ is non-contractible because $(\gamma_2\concat\beta_2) \sim (\gamma_2 \concat \gamma_1) = C$.  The cycle $\tilde{C}$ also crosses $\ell_x$ at least two fewer times than $C$, and $|\tilde C|= |\gamma_2|+|\beta_2| \le |\gamma_2|+|\gamma_1| \leq |C|$.

Now suppose that $\beta_2$ is not homotopic to $\gamma_1$ or $\gamma_2$.  In this case, the cycles $\beta_2\concat\gamma_1$ and $\beta_2\concat\gamma_2$ are non-contractible.  The cycle $\beta_1\concat\gamma_1$ or the cycle $\beta_1\concat\gamma_2$ is non-contractible (perhaps both); otherwise $\gamma_1 \sim \beta_1 \sim \gamma_2$, which would imply that $C=\gamma_1\concat \gamma_2$ is contractible.  Let us suppose that $\beta_1\concat\gamma_1$ is non-contractible; the other case is symmetric.  Since $\beta_1 \concat\gamma_1$ is a non-contractible loop through $x$, we have $|\beta_1|+|\beta_2|= |\ell_x| \leq |\beta_1|+|\gamma_1|$, and so $| \beta_2 | \leq | \gamma_1|$.  The cycle $\tilde C = \beta_2\concat\gamma_2$ is non-contractible because $\beta_2\nsim \gamma_2$.  Cycle $\tilde C$ also crosses $\ell_x$ two fewer times than $C$, and $|\tilde C|=|\beta_2|+|\gamma_2|\leq |\gamma_1|+|\gamma_2| = |C|$.
\end{proof}

The following lemma discusses what happens with simple non-contractible cycles when pasting a disk into a boundary of the surface. It should be noted that any shortest non-contractible cycle is always simple.

\begin{lemma}\label{le:gluedisk}
    Let $\Sigma$ be a surface with boundary and let $\delta$ be one of its boundary components.
    Let $\Sigma'$ be the surface obtained by pasting a disk to $\delta$.
    A non-contractible simple cycle in $\Sigma$ is either non-contractible in $\Sigma'$ or
    homotopic to $\delta$ in $\Sigma$.
\end{lemma}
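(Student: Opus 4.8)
The plan is to argue by contradiction: suppose $\gamma$ is a non-contractible simple cycle in $\Sigma$ that becomes contractible in $\Sigma'$ and is \emph{not} homotopic to $\delta$ in $\Sigma$; I will derive a contradiction by using the disk that $\gamma$ bounds in $\Sigma'$. Since $\gamma$ is contractible in $\Sigma'$, by the characterization of contractible simple cycles quoted in the Background section, $\gamma$ bounds a disk $D'$ in $\Sigma'$, i.e.\ one of the two components of $\Sigma' \setminus \gamma$ is an open disk. The key case analysis is whether the pasted disk $\widehat{D}$ (with boundary $\delta$) lies inside $D'$ or outside it.

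First I would handle the case where $\widehat{D} \cap D' = \emptyset$ (the new disk is on the other side of $\gamma$). Then $D'$ is disjoint from the interior of the pasted disk, so $D' \subseteq \Sigma$ already; hence $\gamma$ bounds a disk in $\Sigma$, contradicting the assumption that $\gamma$ is non-contractible in $\Sigma$. Second, I would handle the case where $\widehat{D} \subseteq D'$. Then $D' \setminus \widehat{D}$ is an annulus in $\Sigma$ (a disk minus an open subdisk in its interior — note $\gamma$ and $\delta$ are disjoint since $\gamma$ is a cycle in $\Sigma$ and $\delta$ is a boundary component, and $\gamma$ is simple), one of whose boundary circles is $\gamma$ and the other is $\delta$. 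An annulus gives a homotopy from $\gamma$ to $\delta$ in $\Sigma$ (relative to an appropriate parameterization), so $\gamma \sim \delta$, again contradicting our assumption. Either way we reach a contradiction, so $\gamma$ is either non-contractible in $\Sigma'$ or homotopic to $\delta$.

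I would also include a brief remark on why these two cases are exhaustive: $\gamma$ and $\delta$ are disjoint simple closed curves, so $\delta$ (and hence all of $\widehat{D}$, since $\widehat{D}$ is connected and $\widehat{D} \setminus \delta$ is connected and disjoint from $\gamma$) lies entirely in one component of $\Sigma' \setminus \gamma$; that component is either $D'$ or its complement, giving exactly the two cases above.

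\medskip

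\noindent\textbf{Main obstacle.} The only delicate point is the topological bookkeeping when $\widehat{D} \subseteq D'$: one must verify that ``open disk minus an interior open subdisk with disjoint closures'' is genuinely an annulus and that its two boundary components are precisely $\gamma$ and $\delta$, so that it certifies $\gamma \sim \delta$ in $\Sigma$ in the sense defined earlier (homotopic to a parameterization of the boundary component $\delta$). This is standard once one notes that $\gamma$ is simple and disjoint from $\delta$, but it is the step where a careless argument could go wrong — for instance, if one forgot that $\gamma$ being a cycle in the combinatorial surface might traverse edges of $\delta$; however, a non-contractible simple cycle disjoint (after perturbation) from the boundary avoids this, and the lemma's hypothesis that $\gamma$ is simple is exactly what licenses the perturbation.
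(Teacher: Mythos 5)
Your proposal is correct and follows essentially the same approach as the paper's proof: both identify the disk $D'$ bounded by $\gamma$ in $\Sigma'$, argue that the pasted disk must lie inside $D'$ (otherwise $\gamma$ would bound a disk already in $\Sigma$, contradicting non-contractibility), and conclude that $D'$ minus the pasted disk is an annulus certifying $\gamma \sim \delta$. The only cosmetic differences are that you frame it as a proof by contradiction and make the two-case dichotomy (pasted disk inside or outside $D'$) explicit, whereas the paper handles the first case in a single ``must contain \dots because otherwise'' clause.
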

\begin{proof}

Consider a non-contractible simple cycle $C$ in $\Sigma$. Let $D_\delta$ be the disk that is attached to $\Sigma$ to obtain~$\Sigma'$.  If $C$ is contractible in $\Sigma'$, then $C$ bounds a disk $D_C$ in $\Sigma'$.  The disk $D_C$ must contain $D_\delta$, because otherwise, $C$ would also bound a disk in $\Sigma$, implying that $C$ is contractible in $\Sigma$.  $D_C\setminus D\delta$ is an annulus in $\Sigma$ with boundary cycles $C$ and $\delta$.  It follows that $C$ and $\delta$ are homotopic in $\Sigma$.
\end{proof}

We also have the following results regarding arcs in a surface with
boundary.

\begin{lemma}\label{le:crossings2}
Let $\Sigma$ be a surface with boundary, let $\delta$ be one of its boundary cycles, and let $\alpha$ be a shortest non-contractible arc with endpoints in $\delta$.  There is a shortest non-contractible cycle in $\Sigma$ that is either homotopic to $\delta$ or that crosses $\alpha$ at most once.
\end{lemma}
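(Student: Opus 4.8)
The plan is to mirror the exchange argument of Lemma~\ref{le:crossings}, now with the arc $\alpha$ playing the role of the distinguished loop $\ell_x$ and the boundary cycle $\delta$ playing the role of the basepoint $x$. Concretely, I would start with an arbitrary \emph{shortest} non-contractible cycle $C$; such a cycle is simple. If $C\sim\delta$ or $C$ crosses $\alpha$ at most once, there is nothing to prove, so assume $C\not\sim\delta$ and $C$ crosses $\alpha$ at two points $y$ and $z$ (possibly among others). I would then produce a non-contractible cycle $\tilde C$ with $|\tilde C|\le|C|$ that crosses $\alpha$ strictly fewer times than $C$; since $C$ is shortest, $|\tilde C|=|C|$, so $\tilde C$ is again a shortest non-contractible cycle, hence simple, and the argument may be repeated. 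Because the number of crossings with $\alpha$ strictly decreases at each step, this descent terminates with a shortest non-contractible cycle that is homotopic to $\delta$ or crosses $\alpha$ at most once.

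For the exchange step, let $\gamma_1,\gamma_2$ be the two subpaths of $C$ from $y$ to $z$, let $\beta$ be the subpath of $\alpha$ from $y$ to $z$, and let $\alpha_{py}$ and $\alpha_{zq}$ be the two remaining pieces of $\alpha$, whose free endpoints $p,q$ lie on $\delta$. For $i\in\{1,2\}$ set $\tilde C_i := \gamma_i\concat\bar\beta$ (a cycle) and $\alpha_i := \alpha_{py}\concat\gamma_i\concat\alpha_{zq}$ (an arc with endpoints on $\delta$). A routine surgery at $y$ and $z$ --- pushing $\beta$ slightly off $\alpha$ to an appropriately chosen side --- shows that each $\tilde C_i$ crosses $\alpha$ at most one more time than $\gamma_i$ does in its interior, hence strictly fewer times than $C$, which in addition crosses $\alpha$ at $y$ and $z$. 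It therefore suffices to exhibit an $i$ for which $\tilde C_i$ is non-contractible and $|\tilde C_i|\le|C|$; since $|\tilde C_i| = |\gamma_i| + |\beta|$ and $|C| = |\gamma_1| + |\gamma_2|$, the inequality $|\tilde C_i|\le|C|$ is equivalent to $|\beta|\le|\gamma_{3-i}|$.

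To obtain the required length bound I would invoke minimality of $\alpha$: if $\alpha_i$ is a non-contractible arc then $|\alpha|\le|\alpha_i|$, and since $\alpha$ and $\alpha_i$ differ only by replacing $\beta$ with $\gamma_i$, this gives $|\beta|\le|\gamma_i|$, hence $|\tilde C_{3-i}|\le|C|$. The case analysis then parallels Lemma~\ref{le:crossings}. If $\beta\sim\gamma_1$ rel $\{y,z\}$ (the case $\beta\sim\gamma_2$ is symmetric), then $\alpha_1\sim\alpha$, so $\alpha_1$ is non-contractible, giving $|\beta|\le|\gamma_1|$ and hence $|\tilde C_2|\le|C|$; moreover $\tilde C_2 = \gamma_2\concat\bar\beta\sim\gamma_2\concat\bar\gamma_1 = \bar C$ is non-contractible, so $\tilde C := \tilde C_2$ works. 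Otherwise $\beta$ is homotopic rel $\{y,z\}$ to neither $\gamma_1$ nor $\gamma_2$; then both $\tilde C_1$ and $\tilde C_2$ are non-contractible, and I claim at least one of $\alpha_1,\alpha_2$ is non-contractible. Indeed, if both were contractible arcs, then contracting $\delta$ to a point would show that $\gamma_1$ and $\gamma_2$ are homotopic rel $\{y,z\}$ in $\Sigma/\delta$ (both become homotopic to the same path through the contracted point), so $C$ is contractible in the surface obtained from $\Sigma$ by capping $\delta$ with a disk; since $C$ is a simple non-contractible cycle of $\Sigma$, Lemma~\ref{le:gluedisk} would then force $C\sim\delta$, contradicting our assumption. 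Hence, say, $\alpha_1$ is non-contractible, which yields $|\beta|\le|\gamma_1|$ and so $|\tilde C_2|\le|C|$; take $\tilde C := \tilde C_2$.

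The main obstacle I anticipate is the bookkeeping of crossings during the surgery: one must argue that $\beta$ can be pushed off $\alpha$ so that $\tilde C_i$ picks up at most one new crossing with $\alpha$ at the two reconnection points, which requires a little care about the two sides of $\alpha$ --- and in the non-orientable setting the local sides of $\alpha$ need not be globally consistent, so ``side'' must be interpreted locally near $y$ and $z$. The only other delicate point is the passage between ``$\alpha_i$ is contractible as an arc'' (a statement about the loop obtained after contracting $\delta$) and the behavior of $C$ in the capped surface; here Lemma~\ref{le:gluedisk} is precisely the bridge that makes the descent close up.
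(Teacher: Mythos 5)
Your proof is correct (at the same level of rigor about crossing counts that the paper itself uses), but it takes a genuinely different route from the paper's. The paper does not redo the exchange argument for arcs: it glues a disk $D$ to $\delta$ to form $\Sigma'$, places an apex vertex $p$ in $D$ joined to every vertex of $\delta$ by edges of huge weight $L$, observes that the loop $p\,\alpha(0)\concat\alpha\concat\alpha(1)\,p$ is then a shortest non-contractible loop through $p$ in $\Sigma'$, and invokes Lemma~\ref{le:crossings} in $\Sigma'$; Lemma~\ref{le:gluedisk} is used only to pass between non-contractibility in $\Sigma$ and in $\Sigma'$, which is exactly where the alternative ``homotopic to $\delta$'' enters. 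That reduction buys brevity and inherits all the surgery bookkeeping from Lemma~\ref{le:crossings}. Your direct exchange buys self-containedness, and it correctly identifies the two load-bearing points: minimality of $\alpha$ among non-contractible arcs with endpoints in $\delta$ supplies the length bound $|\beta|\le|\gamma_i|$ whenever the competitor arc $\alpha_i$ is non-contractible (legitimate, since $\alpha_1,\alpha_2$ have the same endpoints $p,q\in\delta$ --- worth saying explicitly, as $\Sigma$ may have other boundary components), and the degenerate case where both $\alpha_1$ and $\alpha_2$ are contractible forces $C$ to be contractible in the capped surface, whence $C\sim\delta$ by Lemma~\ref{le:gluedisk}; this is essentially the same use of that lemma as in the paper, just relocated to the end of the argument. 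Two minor cautions, neither fatal: a shortest non-contractible arc may have multiplicity $2$ (Lemma~\ref{le:find}(c)), so when you push $\beta$ off $\alpha$ you should note that transverse crossings can arise only at the two reconnection points and not from $\beta$ running parallel to another portion of $\alpha$; and your descent needs each intermediate cycle to be simple, which holds because it is again a shortest non-contractible cycle. Since only a strict decrease in the crossing number is needed, the descent terminates as you claim.
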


\begin{proof}
Let $C$ be a shortest non-contractible cycle in $\Sigma$. Assume $C \nsim \delta$, since otherwise the proof is complete.  Lemma~\ref{le:gluedisk} implies that $C$ is a shortest non-contractible cycle in $\Sigma'$, the surface obtained by pasting a disk $D$ at $\delta$. In the surface $\Sigma'$, place a new vertex $p$ in $D$ and connect it through edges with weight $L$ to each vertex of $\delta$.  Consider the loop $\ell$ with basepoint $p$ that follows the edge $p\, \alpha(0)$, the arc~$\alpha$, and the edge $\alpha(1)\, p$.  If we choose $L$ large enough, $\ell$ is a shortest non-contractible loop in $\Sigma'$ through $p$, so Lemma~\ref{le:crossings}(a) implies that a shortest non-contractible cycle $C'$ in $\Sigma'$ crosses $\ell$ at most once. We must have $|C'|=|C|$, and if $L$ is large enough, $C'$ must avoid the disk $D$. It follows that $C'$ is a shortest non-contractible cycle in $\Sigma$ that crosses $\alpha$ at most once.
\end{proof}

\begin{lemma}\label{le:crossings3}
Let $\Sigma$ be a surface with at least two boundary components, and let $\alpha$ be a shortest arc connecting two different boundaries of $\Sigma$. There is a shortest non-contractible cycle in $\Sigma$ that crosses $\alpha$ at most once.
\end{lemma}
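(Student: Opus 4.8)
The plan is to imitate the exchange argument of Lemma~\ref{le:crossings}, but measuring against the \emph{arc} $\alpha$ instead of a loop, and exploiting the fact that $\alpha$ is a shortest connection between two boundary components. Let $\delta_1,\delta_2$ be the two boundaries joined by $\alpha$, say $\alpha$ runs from $a=\alpha(0)\in\delta_1$ to $b=\alpha(1)\in\delta_2$; by the usual infinitesimal perturbation of edge weights we may assume all curves are in general position, so crossings are transverse and finite in number. Let $C$ be any shortest non-contractible cycle. If $C$ crosses $\alpha$ at most once we are done, so suppose it crosses $\alpha$ exactly $k\ge 2$ times. I will construct another shortest non-contractible cycle $\tilde C$ that crosses $\alpha$ \emph{strictly fewer} than $k$ times; iterating this step then forces $k$ down to at most $1$ and proves the lemma.

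Pick two crossing points $y,z$ of $C$ with $\alpha$, with $y$ closer than $z$ to $a$ along $\alpha$. These split $\alpha$ as $\alpha=\alpha_1\concat\beta\concat\alpha_2$, where $\alpha_1$ runs from $a$ to $y$, $\beta$ from $y$ to $z$, and $\alpha_2$ from $z$ to $b$; and they split $C$ into two subpaths $\gamma_1,\gamma_2$, each regarded as running from $y$ to $z$, so $|C|=|\gamma_1|+|\gamma_2|$. The key inequality is $|\beta|\le|\gamma_1|$ and $|\beta|\le|\gamma_2|$: for each $i$ the walk $\alpha_1\concat\gamma_i\concat\alpha_2$ connects $\delta_1$ to $\delta_2$, so its length is at least $|\alpha|=|\alpha_1|+|\beta|+|\alpha_2|$, and cancelling $|\alpha_1|+|\alpha_2|$ gives $|\beta|\le|\gamma_i|$. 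Now form the two loops $D_1=\gamma_1\concat\overline{\beta}$ and $D_2=\beta\concat\overline{\gamma_2}$ based at $y$. Their concatenation $D_1\concat D_2$ is homotopic to $\gamma_1\concat\overline{\gamma_2}=C$ (the $\overline\beta\concat\beta$ cancels), so since $C$ is non-contractible at least one of $D_1,D_2$ is non-contractible; call it $\tilde C$. If $\tilde C=D_1$ then $|\tilde C|=|\gamma_1|+|\beta|\le|\gamma_1|+|\gamma_2|=|C|$ using $|\beta|\le|\gamma_2|$, and symmetrically if $\tilde C=D_2$; either way $\tilde C$ is again a shortest non-contractible cycle.

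It remains to bound the number of crossings of $\tilde C$ with $\alpha$, which is the fiddly part. Take $\tilde C=D_1=\gamma_1\concat\overline{\beta}$ (the other case is symmetric). Push $\beta$ infinitesimally off $\alpha$ onto the side of $\alpha$ from which $\gamma_1$ leaves $y$; then $\overline\beta$ contributes no crossings with $\alpha$ and there is no crossing at $y$. Letting $k_1$ be the number of crossings of $C$ with $\alpha$ in the interior of $\gamma_1$, the path $\gamma_1$ arrives at $z$ on the same side of $\alpha$ it left $y$ exactly when $k_1$ is even, so $\tilde C$ acquires at most one further crossing near $z$, for a total of at most $k_1+1$. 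The $k-2$ crossings of $C$ other than $y$ and $z$ are distributed between the interiors of $\gamma_1$ and $\gamma_2$, hence $k_1\le k-2$ and $\tilde C$ crosses $\alpha$ at most $k-1<k$ times, completing the inductive step.

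The main obstacle is this last crossing-count combined with maintaining the invariant along the iteration: $\tilde C$ need not be simple even when $C$ is, so the whole argument must be phrased for arbitrary (shortest non-contractible) cycles and driven purely by the integer number of crossings decreasing; only at the very end do we invoke that a shortest non-contractible cycle is automatically simple. A second point that must be pinned down is that ``shortest arc connecting $\delta_1$ and $\delta_2$'' is to be read as shortest among all walks from $\delta_1$ to $\delta_2$ (equivalently, a shortest path), since that is exactly what makes the key inequality survive the fact that $\alpha_1\concat\gamma_i\concat\alpha_2$ need not be a simple arc. An alternative route would be to cap $\delta_1$ and $\delta_2$ with coned disks and reduce to Lemma~\ref{le:crossings}, as is done in Lemma~\ref{le:crossings2} for a single boundary; but converting an arc between two \emph{different} boundaries into one loop based at a single cone point appears to require changing the topology of $\Sigma$, so the direct exchange argument above seems cleaner.
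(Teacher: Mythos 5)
Your proof is correct and takes essentially the same route as the paper's: it is the same exchange argument, with the same subdivision of $C$ and $\alpha$ at two crossings, the same key inequality $|\beta|\le|\gamma_i|$ from the shortest-arc property, and the same observation that at least one of the two reassembled cycles is non-contractible. The only distinction is that you phrase it as an explicit descent on the crossing number and work out the parity bookkeeping showing the crossing count drops by at least one, whereas the paper fixes a $C$ minimizing crossings and asserts the decrease more tersely; your more careful count is a welcome clarification but not a different method.
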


\begin{proof}
Let $C$ be a shortest non-contractible cycle in $\Sigma$ that crosses
$\alpha$ the minimum number of times. We claim that $C$ crosses
$\alpha$ at most once. Assume for the purpose of contradiction that
$C$ crosses $\alpha$ at least twice, and let $y$ and $z$ be two such
crossings. Let $\gamma_1$ and $\gamma_2$ be the two subpaths of $C$
from $y$ to $z$, let $\tilde\alpha$ be the subpath of $\alpha$ from
$y$ to $z$.  Since $\alpha$ is a shortest arc connecting the two
specified boundaries, we have $|\tilde\alpha|\leq |\gamma_1|$ and
$|\tilde\alpha|\leq |\gamma_2|$.  If $\tilde\alpha \nsim \gamma_1$,
then we have a contradiction: the cycle $\tilde\alpha \concat \gamma_1$ is
non-contractible, crosses $\alpha$ fewer times than $C$ does, and
$|\tilde\alpha|+|\gamma_1|\leq |\gamma_2|+|\gamma_1|=|C|$.
Similarly, we must have $\tilde\alpha \sim \gamma_2$.  But then
$\gamma_1 \sim \gamma_2$, which implies that $C$ is contractible,
which is a contradiction.
\end{proof}

The next lemma summarizes the algorithmic tools that we will use.

\begin{lemma}\label{le:find}
Let $\Sigma$ be a surface of complexity $n$.
\begin{itemize}\cramped
\item[\textup{(a)}]
Given a basepoint $x$, we can find a shortest non-contractible loop with basepoint~$x$ in $O(n\log n)$ time. This loop has multiplicity 2.

\item[\textup{(b)}]
Given a boundary $\delta$, we can find in $O(n\log n)$ time a shortest cycle homotopic to $\delta$.

\item[\textup{(c)}]
Given a boundary $\delta$, we can find in $O(n\log n)$ time a shortest non-contractible arc with endpoints in $\delta$.  This arc has multiplicity at most 2 and is edge-disjoint from $\delta$.

\item[\textup{(d)}]
If $\Sigma$ has exactly two boundaries, we can find in $O(n\log n)$ time a shortest arc with endpoints in both boundaries.  This arc has multiplicity 1 and is edge-disjoint from the boundary of $\Sigma$.
\end{itemize}
\end{lemma}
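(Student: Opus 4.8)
The plan is to prove the four parts almost independently: part~(a) via a shortest-path tree plus the cut-graph structure, part~(c) by reducing to~(a) through disk pasting, part~(d) by a direct shortest-path computation, and part~(b) by a minimum-cut/shortest-path computation. For part~(a), first I would compute a shortest-path tree $T_x$ rooted at $x$ by Dijkstra's algorithm on the symmetrized darts, in $O(n\log n)$ time, recording $\dist(v)$ for every vertex. By Thomassen's 3-path condition for non-contractible cycles~\cite{t-egnsn-90} (see also~\cite{schema}), a shortest non-contractible loop based at $x$ has the form $T_x[x,u]\cdot e\cdot T_x[v,x]$ for some edge $e=uv\notin T_x$; being two tree paths and one edge, it has multiplicity at most~$2$. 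Such a loop is freely homotopic to the fundamental cycle of $e$ with respect to $T_x$, so it is non-contractible if and only if that fundamental cycle is, and by the characterization of the core cut graph in Section~\ref{SS:grove} this holds exactly when $e^*$ survives the repeated peeling of degree-$1$ vertices from $(G\setminus T_x)^*$ --- a single linear-time pass producing $\overline X$. Thus the algorithm is: build $T_x$, peel to get $\overline X$, and among the $O(n)$ edges $e=uv$ with $e^*\in\overline X$ return the one minimizing $\dist(u)+w(\dart{u}{v})+\dist(v)$; when $\Sigma$ has boundary, the same works after the usual adjustment to the notion of cut graph.

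For part~(c) I would mimic the construction in the proof of Lemma~\ref{le:crossings2}: paste a disk $D$ to $\delta$, add an apex vertex $p$ joined to every vertex of $\delta$ by edges of a common large weight $L$ (larger than the total edge weight), and triangulate, obtaining a surface $\Sigma'$ of complexity $O(n)$. Since $D$ is a disk, $\pi_1(\Sigma')=\pi_1(\Sigma)/\langle\langle\delta\rangle\rangle$, so a loop $(p\,v_a)\cdot\alpha\cdot(v_b\,p)$ through $p$ is non-contractible in $\Sigma'$ exactly when the arc $\alpha$, with endpoints $v_a,v_b$ on $\delta$, is a non-contractible arc of $\Sigma$. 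Because $L$ is large, every shortest path out of $p$ begins with one $p$-edge and revisits $p$ only at its start, so the shortest non-contractible loop based at $p$ found by part~(a) uses exactly two $p$-edges, as its first and last darts; stripping them off yields a shortest non-contractible arc of $\Sigma$. A shortest-path shortcut argument shows this arc meets $\delta$ only at its endpoints (otherwise it could be rerouted through a $p$-edge), hence it is edge-disjoint from $\delta$; its multiplicity is at most~$2$ by part~(a), and the running time is $O(n\log n)$.

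For part~(d) I would contract each boundary to a point: add an apex $p_i$ for $i\in\{1,2\}$, join $p_i$ by zero-weight edges to every vertex of $\delta_i$, paste a disk, and triangulate, keeping complexity $O(n)$. A single shortest-path computation from $p_1$ gives a shortest $p_1$-to-$p_2$ path; deleting its first and last (zero-weight) edges yields an arc with one endpoint on each original boundary. This is a shortest such arc because the zero-weight edges give a length-preserving bijection between boundary-to-boundary arcs and $p_1$-$p_2$ paths; it is simple, hence of multiplicity~$1$, by uniqueness of shortest paths; and it is edge-disjoint from both boundaries, since a shortest path revisiting $\delta_i$ after leaving $p_i$ could be shortcut through the zero-weight edges.

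For part~(b) I would again paste a disk to $\delta$ with an apex $p$. By Lemma~\ref{le:gluedisk}, the cycles of $\Sigma$ homotopic to $\delta$ are exactly the simple cycles of $\Sigma$ that are non-contractible in $\Sigma$ but null-homotopic once the disk is glued, i.e.\ the simple cycles enclosing $p$ together with a collar of $\delta$ and nothing else. Finding the shortest such cycle is a minimum-cut problem --- separate $p$ from any fixed vertex lying well outside a collar of $\delta$ --- and since the surface is effectively planar near $\delta$, this reduces to a shortest-path computation in a derived graph and runs in $O(n\log n)$ time. I expect this last case to be the main obstacle: pinning down the exact derived-graph reduction and proving its correctness (that the minimum cut is a single simple cycle of the right homotopy type, equivalently that the shortest homotopic cycle crosses the chosen cutting path exactly once) requires a careful exchange argument, whereas parts~(a), (c), and~(d) are routine given Section~\ref{SS:grove}, the 3-path condition, and uniqueness of shortest paths.
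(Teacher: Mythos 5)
The paper's proof of this lemma is almost entirely by citation: part~(a) cites Erickson and Har-Peled~\cite[Lemma 5.2]{schema}, part~(b) cites~\cite{cdem-fotc-10}, and parts~(c) and~(d) are one-sentence reductions that contract boundaries to points. Your proposal, written without access to these references, tries to prove everything from first principles; the results are mixed.

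Parts~(a) and~(d) are essentially correct. For~(a) you reconstruct the Erickson--Har-Peled argument: build $T_x$, peel $(G\setminus T_x)^*$ to the core $\overline X$, and minimize $\dist(u)+w+\dist(v)$ over the surviving non-tree edges. This is exactly what the cited lemma does, and your appeal to the characterization of $\overline X$ in Section~\ref{SS:grove} is the right justification. For~(d), adding zero-weight apex edges is operationally the same as the paper's contraction, and your shortcut argument for edge-disjointness and the simplicity argument for multiplicity~$1$ are fine.

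For part~(c), your reduction (paste a disk, add an apex $p$ via edges of large weight $L$) is a workable variant of the paper's contraction, but your edge-disjointness argument is backwards. You write that a shortest non-contractible arc touching $\delta$ at an interior vertex ``could be rerouted through a $p$-edge''; but those edges have weight $L$, so rerouting through $p$ makes the walk \emph{longer}, not shorter. The correct argument is a splitting one: if the arc visits an interior vertex $w\in\delta$, split it into two subarcs at $w$; one of them must be non-contractible (otherwise their concatenation would be a concatenation of two $\delta$-homotopic arcs, hence contractible), and that subarc is strictly shorter. The paper's contraction of $\delta$ to a point sidesteps this entirely, since the edges of $\delta$ vanish from the graph.

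The genuine gap is part~(b). You acknowledge this yourself. The minimum-cut idea does not work as stated: on a positive-genus surface, a minimum cut separating $p$ from a distant vertex need not be a single simple cycle, and even when it is, nothing forces it to lie in the homotopy class of $\delta$ rather than some other separating class. ``Effectively planar near $\delta$'' is not a reduction. Finding a shortest cycle in a prescribed homotopy class is a nontrivial problem in its own right, which is exactly why the paper defers to a separate reference. A blind reconstruction here would require a genuinely different idea (e.g.\ cutting along an arc from Lemma~\ref{le:find}(c) and reducing to a boundary-to-boundary shortest path, or the machinery of the cited paper), and you do not supply one.
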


\begin{proof}
    \begin{itemize}
        \item[(a)] See Erickson and Har-Peled~\cite[Lemma 5.2]{schema}.
        \item[(b)] See Cabello \etal~\cite{cdem-fotc-10}.
        \item[(c)] Contract $\delta$ to a point $p$ and construct a shortest non-contractible
            loop with basepoint $p$ as in part (a). This is the desired arc in $\Sigma$.
        \item[(d)] Select boundaries $\delta$ and $\delta'$ of $\Sigma$, contract them to points $p$ and $p'$,
            and construct in $O(n\log n)$ time the shortest path from $p$ to $p'$.
        \end{itemize}
\end{proof}

\begin{lemma}\label{le:killboundary}
    Let $\Sigma$ be a combinatorial surface with complexity $n$, genus $g$, and $b$ boundaries.
    Let $\Sigma'$ be the surface obtained by pasting a disk to each boundary of $\Sigma$.
    We can find a shortest non-contractible cycle in $\Sigma$ in $O(b n\log n)$ time plus
    the time used to find a shortest non-contractible cycle in $\Sigma'$.
\end{lemma}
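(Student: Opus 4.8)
The plan is to induct on the number $b$ of boundary components, pasting in one disk at a time; the only topological ingredient is Lemma~\ref{le:gluedisk}.

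For the base case $b=0$ we have $\Sigma=\Sigma'$ and we invoke the assumed subroutine for $\Sigma'$ directly. For $b\ge 1$, fix a boundary component $\delta$ and let $\Sigma_1$ be $\Sigma$ with a disk pasted to $\delta$; combinatorially we realize the disk by adding one vertex joined to every vertex of $\delta$ by edges of weight larger than the total weight of $G$, so that no shortest cycle can use a new edge. Then $\Sigma_1$ has $b-1$ boundaries, genus $g$, and complexity $O(n)$ (each pasted disk adds $O(|\delta_i|)$ edges and $\sum_i|\delta_i|=O(n)$), and pasting disks to all boundaries of $\Sigma_1$ again yields $\Sigma'$, so the inductive hypothesis applies to $\Sigma_1$ with the same $\Sigma'$-subroutine. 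Recursively we obtain a shortest non-contractible cycle $C_1$ in $\Sigma_1$; since $C_1$ avoids the heavy edges it is a cycle of $\Sigma$, and since any nullhomotopy in $\Sigma$ is also a nullhomotopy in $\Sigma_1$, being non-contractible in $\Sigma_1$ implies being non-contractible in $\Sigma$. In parallel, Lemma~\ref{le:find}(b) gives, in $O(n\log n)$ time, a shortest cycle $C_\delta$ homotopic to $\delta$ in $\Sigma$; unless $\Sigma$ is a disk (a degenerate case with no non-contractible cycles), $\delta$ bounds no disk in $\Sigma$, so $C_\delta$ is non-contractible. The algorithm returns the shorter of $C_1$ and $C_\delta$.

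Correctness uses the remark preceding Lemma~\ref{le:gluedisk} that a shortest non-contractible cycle $C^\star$ of $\Sigma$ is simple. By Lemma~\ref{le:gluedisk} applied to $C^\star$ and the disk pasted at $\delta$, either $C^\star$ is non-contractible in $\Sigma_1$, whence $|C_1|\le|C^\star|$, or $C^\star$ is homotopic to $\delta$ in $\Sigma$, whence $|C_\delta|\le|C^\star|$. In either case the returned cycle has length at most $|C^\star|$, and being itself a non-contractible cycle in $\Sigma$ it is a shortest one. The running time obeys $T(b)=T(b-1)+O(n\log n)$ with $T(0)$ the time for $\Sigma'$, giving $O(bn\log n)$ plus that time, as claimed. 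Unrolling the recursion, the candidates are a shortest cycle homotopic to $\delta_i$ computed in the intermediate surface $\Sigma_{i-1}$ for each $i$, together with a shortest non-contractible cycle of $\Sigma'$, each of which transports to a legitimate non-contractible cycle of $\Sigma$ because it avoids all heavy edges.

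The main thing to be careful about---more a bookkeeping issue than an obstacle---is keeping ``non-contractible'' under control as cycles are transported along $\Sigma\hookrightarrow\Sigma_1\hookrightarrow\cdots\hookrightarrow\Sigma'$: non-contractibility is preserved when moving to a larger surface but can be destroyed when moving back, which is precisely why Lemma~\ref{le:gluedisk} (pinning down the only two ways it can be destroyed by one disk) is exactly what is needed. One might alternatively cut along a shortest arc between two boundaries or a shortest non-contractible arc on one boundary (Lemma~\ref{le:find}(c,d)) and invoke Lemma~\ref{le:crossings2}, Lemma~\ref{le:crossings3}, and Lemma~\ref{le:arc}; but Lemma~\ref{le:arc} costs $\Theta(gn\log n)$ per call, so that route does not obviously fit the $O(bn\log n)$ budget, and I prefer the disk-pasting induction above.
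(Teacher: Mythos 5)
Your proof is correct and takes essentially the same approach as the paper: both invoke Lemma~\ref{le:gluedisk} to show that a shortest non-contractible cycle of $\Sigma_{i-1}$ is either homotopic to $\delta_i$ (captured by Lemma~\ref{le:find}(b) in $O(n\log n)$ time) or survives into $\Sigma_i$, and iterate/recurse over the $b$ boundaries. The only difference is presentational (induction vs.\ explicit iteration), plus some helpful extra bookkeeping on your part---concretizing the disk-pasting with a heavy cone vertex so complexity stays $O(n)$ and cycles transport back to $\Sigma$, and flagging the degenerate case where $\Sigma$ is a disk.
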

\begin{proof}

Number the boundaries $\delta_1,\delta_2,\dots,\delta_b$, and set
$\Sigma_0=\Sigma$. For $i \ge 1$, let $\Sigma_i$ be the surface obtained from
$\Sigma_{i-1}$ by pasting a disk to $\delta_i$.  In particular,
$\Sigma_b=\Sigma'$.  For each $i$, let $C_i$ be a shortest cycle homotopic
to $\delta_i$ in $\Sigma_{i-1}$. We can compute each cycle $C_i$ in $O(n
\log n)$ time.

Lemma~\ref{le:gluedisk} implies that a shortest non-contractible
cycle in $\Sigma_{i-1}$ is either $C_i$ or a shortest non-contractible
cycle in $\Sigma_{i-1}$.  Thus, the shortest non-contractible cycle in
$\Sigma$ is either the shortest among $C_1,\dots, C_b$ or a shortest
non-contractible cycle in $\Sigma'$.
\end{proof}

\begin{theorem}\label{th:non-con}
    Let $\Sigma$ be a combinatorial surface with complexity $n$, genus $g$, and $b$ boundaries, with generic edge weights.
    We can find a shortest non-contractible cycle of $\Sigma$ in $O((g^2+b) n\log n)$ time.
\end{theorem}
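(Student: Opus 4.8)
The plan is to reduce $\Sigma$ to a sphere by a sequence of cuts, each along a short curve that is crossed at most once by \emph{some} shortest non-contractible cycle (or has such a cycle homotopic to it), so that the cut splits off at most that one cycle, which we then recover directly with Lemma~\ref{le:arc}.

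First I would dispose of the given boundaries: Lemma~\ref{le:killboundary} reduces the problem, in $O(bn\log n)$ time, to finding a shortest non-contractible cycle in the boundaryless surface $\Sigma'$ of genus~$g$ obtained by capping every boundary of~$\Sigma$. After this step every surface occurring in the recursion has at most two boundary components, since a single cut creates at most two and I work them back down, so the recursion is essentially on the genus. For a surface $\hat\Sigma$ of genus $g'\ge 1$ I choose a curve $\alpha$ by its boundary count: if $\hat\Sigma$ is boundaryless, take the shortest non-contractible loop of Lemma~\ref{le:find}(a) and discard its doubled sub-path to obtain the \emph{simple} non-contractible cycle $\alpha$ freely homotopic to the loop, so that Lemma~\ref{le:crossings} (noting that a shortest non-contractible cycle may be assumed to avoid the discarded sub-path) supplies a shortest non-contractible cycle crossing $\alpha$ at most once; if $\hat\Sigma$ has one boundary $\delta$, let $\alpha$ be the shortest non-contractible arc with endpoints in $\delta$ from Lemma~\ref{le:find}(c) and invoke Lemma~\ref{le:crossings2}; if $\hat\Sigma$ has two boundaries, let $\alpha$ be the shortest arc joining them from Lemma~\ref{le:find}(d) and invoke Lemma~\ref{le:crossings3}. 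In every case $\alpha$ is a simple cycle or arc, so Lemma~\ref{le:arc} computes a shortest cycle $C_1$ crossing $\alpha$ exactly once in $O(gn\log n)$ time; when $\hat\Sigma$ has a boundary $\delta$ I also compute, by Lemma~\ref{le:find}(b), a shortest cycle $C_2$ homotopic to $\delta$, to cover the alternative in Lemma~\ref{le:crossings2}. I then cut $\hat\Sigma$ along $\alpha$, recurse on each resulting component to obtain a cycle $C_0$, and return the shortest of $C_1,C_2,C_0$; the base case is the sphere, which has no non-contractible cycle.

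Correctness goes by induction on the genus. Every cycle the algorithm can return is non-contractible in $\hat\Sigma$: a cycle crossing $\alpha$ exactly once is non-contractible by the remark following Lemma~\ref{le:arc}; $C_2$ is homotopic to a non-contractible boundary; and a cycle that is non-contractible in a component of $\hat\Sigma$ cut along $\alpha$ is, back in $\hat\Sigma$, either non-contractible or homotopic to $\alpha$, hence non-contractible, since $\alpha$ is non-contractible in each case (a disk-surgery argument in the spirit of Lemma~\ref{le:gluedisk}, which is also what makes the capping inside Lemma~\ref{le:killboundary} sound). Conversely, let $C^*$ be a shortest non-contractible cycle of $\hat\Sigma$ chosen, via the applicable crossing lemma, so that it crosses $\alpha$ at most once or is homotopic to $\delta$: if it crosses $\alpha$ once then $|C_1|\le|C^*|$; if it is homotopic to $\delta$ then $|C_2|\le|C^*|$; otherwise it avoids $\alpha$, so it lies in (and is non-contractible in) one component of $\hat\Sigma$ cut along $\alpha$, giving $|C_0|\le|C^*|$. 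For the running time, cutting along $\alpha$ strictly decreases the genus of some component whenever $\hat\Sigma$ has at most one boundary, and merges the two boundaries (keeping the genus) when it has two; hence the total genus drops at least once in every two recursion levels, no component ever has genus above $g$, and each call performs $O(1)$ applications of Lemma~\ref{le:arc} plus $O(n\log n)$ extra work, so the recursion costs $O(g^2n\log n)$; in the separating case the children's genera sum to the parent's, so $g_1^2+g_2^2\le(g_1+g_2)^2$ keeps the bound intact. Adding the initial $O(bn\log n)$ from Lemma~\ref{le:killboundary} gives $O((g^2+b)n\log n)$.

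I expect the main obstacle to be the topological bookkeeping around the cut: verifying, case by case and for both orientabilities, that cutting along $\alpha$ genuinely reduces $(g,b)$ within a bounded number of steps (a one-sided $\alpha$, a separating $\alpha$, and an arc with endpoints on one versus two boundary components all behave differently), and, more delicately, ensuring no shortest non-contractible cycle is lost when it becomes contractible after the cut is capped --- the place where Lemma~\ref{le:gluedisk} and the ``shortest cycle homotopic to a boundary'' side computations must be stitched in exactly right. The branching in the separating case also needs the short convexity remark above to confirm the quadratic bound survives.
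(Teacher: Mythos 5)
Your proposal is correct and takes essentially the same approach as the paper: apply Lemma~\ref{le:killboundary} to remove the $b$ boundaries in $O(bn\log n)$ time, then repeatedly cut along a short non-contractible loop or arc (Lemma~\ref{le:find}(a), (c), or (d) depending on the boundary count), use Lemmas~\ref{le:crossings}, \ref{le:crossings2}, \ref{le:crossings3} together with Lemma~\ref{le:arc} to extract the best cycle crossing the cut curve once (plus the boundary-parallel candidate in the one-boundary case), and iterate $O(g)$ times. The only deviation is cosmetic: in the boundaryless case you trim the doubled sub-path of $\ell_x$ to obtain a multiplicity-one cycle before applying Lemma~\ref{le:arc}, whereas the paper applies that lemma to $\ell_x$ directly (which is a legitimate ``simple'' cycle of multiplicity~2 in the combinatorial-surface sense); your variant is sound, but the parenthetical claim that the shortest non-contractible cycle may be assumed to avoid the doubled sub-path deserves the one-line justification that any crossing of the doubled path contributes two crossings with $\ell_x$.
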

\begin{proof}

We apply Lemma~\ref{le:killboundary} to $\Sigma$. We spend $O(bn\log n)$
time, and have reduced the problem to find a shortest
non-contractible cycle in a surface $\Sigma'$ of genus $g$ and no
boundary.  We next give an iterative algorithm that reduces either
the genus or the number of boundaries of the subproblems in each
iteration.  Through the algorithm we use a variable $\emph{minCycle}$ 
to store the shortest non-contractible cycle found so far.
In each iteration we find one new non-contractible cycle $\gamma$ whose length
is compared against the length of $\emph{minCycle}$, 
and $\emph{minCycle}$ is updated if needed. 
The algorithm stops when each remaining component is a
topological disk.  

We distinguish three cases. $\Sigma'$ denotes the
surface in the subproblem. Through the algorithm the surface $\Sigma'$ has
genus at most $g$ and each component of $\Sigma'$ has at most two boundaries.
    \begin{itemize}
        \item[(a)] If $\Sigma'$ is a surface without boundary, we choose a point $x\in \Sigma'$
            and find a shortest non-contractible loop $\ell_x$ through $x$ (Lemma~\ref{le:find}(a)).
            Lemma~\ref{le:crossings} implies that
            there is a shortest non-contractible cycle in $\Sigma'$
            crossing $\ell_x$ at most once.
            We compute the shortest cycle $\gamma$ in $\mathit{Cross}(\ell_x)$ using Lemma~\ref{le:arc} 
			and update $\emph{minCycle}$ with $\gamma$, if it is shorter.
            Then we set $\Sigma' = \Sigma' \setminus \ell_x$. Note that if $\ell_x$ is
            separating, then $\Sigma'$ has two connected components and $\mathit{Cross}(\ell_x)=\emptyset$.
            If $\Sigma'$ has complexity $m$, we spend $O(g m \log m)$ time in this iteration.
            
        \item[(b)] If $\Sigma'$ has a component with non-zero genus and exactly one boundary $\delta$, we find a shortest non-contractible
            arc $\alpha$ with endpoints in $\delta$ (Lemma~\ref{le:find}(c)).
            Lemma~\ref{le:crossings2} implies that
            there is a shortest non-contractible cycle in $\Sigma'$ that either
            crosses $\alpha$ at most once or is homotopic to $\delta$.
            We compute the shortest cycle in $\mathit{Cross}(\alpha)$ and the shortest cycle homotopic to
            $\delta$ in $O(g n \log n)$ time (Lemmas~\ref{le:arc} and~\ref{le:find}(b)).  
			The shortest of these two cycles is compared against $\emph{minCycle}$, 
			and $\emph{minCycle}$ is updated if necessary.
			We then set $\Sigma' = \Sigma' \setminus \alpha$.
            If $\Sigma'$ has complexity $m$, we spend $O(g m \log m)$ time in this iteration.

        \item[(c)] If $\Sigma'$ has one component with exactly two boundaries, 
			we find a shortest arc $\alpha$ connecting them (Lemma~\ref{le:find}(d)).  
			Lemma~\ref{le:crossings3} implies that
            there is a shortest non-contractible cycle $\ell$ in $\Sigma'$
            crossing $\alpha$ at most once.
            We compute the shortest cycle $\gamma$ in $\mathit{Cross}(\alpha)$ using Lemma~\ref{le:arc},
            and update $\emph{minCycle}$ with $\gamma$, if it is shorter.
            Then we replace $\Sigma'$ with $\Sigma' \setminus \alpha$.
            If $\Sigma'$ has complexity $m$, we spend $O(g m \log m)$ time in this iteration.
    \end{itemize}

This finishes the description of the algorithm. The algorithm has
$O(g)$ iterations; starting with no boundary, we iterate in case
(a) once and then cases (b) or (c) at most $2g$ times.  The arcs and
loops that are used to cut, which are obtained from Lemma~\ref{le:find},
have multiplicity at most two and are edge-disjoint from the
boundary.  Thus any subsurface $\Sigma'$ considered in a subproblem has
at most four copies of an edge of $\Sigma$.  This means that any surface $\Sigma'$ 
has complexity $O(n)$ and in each iteration we spend $O(g n\log n)$ time.
\end{proof}

% -------------------------------------------------------------------------------
\def\tslack{\widetilde{\smash{\slack}}}
\def\tdist{\widetilde{\smash{\dist}}}
\def\tw{\widetilde{w}}

\section{Enforcing Genericity}
\label{S:perturb}

So far we have assumed that input edge weights are \emph{generic}.  Specifically, our analysis requires that all vertex-to-vertex shortest paths are unique and that exactly one dart becomes tense at each critical value of $\lambda$.  This assumption can be enforced---or from the algorithm user's viewpoint, removed---using standard perturbation techniques.  We describe two different perturbation schemes in this section.  The first is a simple randomized method based on the Isolation Lemma of Mulmuley \etal~\cite{mvv-memi-87} that succeeds with high probability, without increasing the asymptotic running time of our algorithms.  The second is a more complex deterministic method based on lexicographic perturbation that increases our worst-case running times by a factor of $O(\log n)$.

Klein's planar multiple-source shortest-path algorithm \cite{k-msspp-05} and its recent improvement for unweighted graphs by Eisenstat and Klein \cite{ek-lafms-13} avoid degeneracy issues by pivoting the \emph{leafmost} tense dart into at every iteration, thereby maintaining the \emph{rightmost} shortest-path tree.  However, it is not clear how to extend their strategy to graphs on higher-genus surfaces, because the dual complement of a shortest-path tree is no longer a tree.

\subsection{Random Perturbation}

Our first perturbation scheme simply adds a small amount of random noise to the edge weights.  Intuitively, we modify the weight of each dart by defining
\[
	\tw(\dart{x}{y}) := w(\dart{x}{y}) + \omega(\dart{x}{y})\cdot\e
\]
for some perturbation value $\omega(\dart{x}{y})$, chosen independently and uniformly at random from the set $\set{1, 2, \dots, n^4}$, and some sufficiently small $\e>0$.  Instead of choosing an explicit numerical value $\e$, we consider the limiting behavior as $\e$ approaches zero.  Equivalently, we define the perturbed weight of each dart as a two-dimensional vector
\[
	\tw(\dart{x}{y}) := (w(\dart{x}{y}),~\omega(\dart{x}{y})).
\]
Shortest-path lengths and slacks are now also vectors.  We compare these vectors lexicographically; that is, $(d, \delta) > (d', \delta')$ if and only if either $d>d'$, or $d=d'$ and $\delta > \delta'$.  Thus, if the given edge weights happen to be generic, the perturbation parameters will never be used.  Because each perturbation value is a $O(\log n)$-bit integer, we can add, subtract, and compare length vectors in constant time. 

To prove that that this perturbation scheme yields generic edge weights with high probability, we use the following mild generalization of the Isolation Lemma of Mulmuley \etal~\cite[Lemma 1]{mvv-memi-87}; see also Spencer \cite[Theorem 3.1]{s-pm-95}.

\begin{lemma} 
\label{L:isolation}
Let $\omega = (\omega_1, \omega_2, \dots, \omega_n)$ be a vector of $n$ integers,  chosen uniformly at random from $\set{1,2,\dots,N}^n$, and let $\Pi$ be an arbitrary set of vectors in $\set{-1,0,1}^n$.  With probability at least $1-3n/N$, there is a unique vector $\pi\in \Pi$ that minimizes the inner product $\seq{\omega,\pi}$.
\end{lemma}

\begin{proof}
Fix an index $i$.  For each $\e\in\set{-1,0,1}$, let $\Pi_{i,\e}$ denote the subset of vectors in~$\Pi$ whose $i$th coefficient is $\e$, and define
\[
	m_{i,\e} :=
	\Min_{\pi\in\Pi_{i,\e}}\, \seq{\omega,\pi} - \e\omega_i.
\]
(We assume here that every set $\Pi_{i,\e}$ is non-empty.)  Observe that $m_{i,\e}$ is independent of $\omega_i$ because the terms $\e\omega_i$ cancel out.  Let $M_i = \set{m_{i,\e} + \e\omega_i \mid \e\in\set{-1,0,1}}$, and observe that $\min_{\pi\in\Pi} \seq{\omega,\pi}$ is the smallest element of $M_i$.  Call the coefficient $\omega_i$ \emph{singular} if $M_i$ has fewer than three distinct elements, or equivalently, if at least one of the following equations holds:
\[
	\omega_i = m_{i,-1} - m_{i,0},
	\qquad
	\omega_i = m_{i,0} - m_{i,1},
	\qquad
	\omega_i = (m_{i,-1} - m_{i,1})/2.
\]
Because $m_{i,\e} - m_{i,\e'}$ is independent of $\omega_i$, for any $\e,\e'\in \set{-1,0,1}$, and $\omega_i$ is drawn uniformly at random from a set of size $N$, the probability that $\omega_i$ is singular is at most $3/N$.  (In fact, if any set $\Pi_{i,\e}$ is empty, this probability is at most $1/N$.)  It follows that with probability at least $1-3n/N$, \emph{none} of the coefficients of $\omega$ is singular.

Call a vector $\pi\in\Pi$ \emph{optimal} if it minimizes the inner product $\seq{\omega,\pi}$.  If there are two optimal vectors, they differ in some index $i$, and the corresponding coefficient $\omega_i$ is singular.
\end{proof}

%\begin{lemma}
%Let $A$ and $B$ be arbitrary finite sets of integers.  Let $x$ be a vector chosen uniformly at random from $A^n$, and let $Y$ be an arbitrary set of vectors in $B^n$.  With probability at least $1-\binom{\abs{B}}{2}\abs{A}/N$, there is a unique vector $y\in Y$ such that $\seq{x,y}$ is minimized.
%\end{lemma} 

In our applications of Lemma \ref{L:isolation}, $n$ is the number of darts in the input graph, $\omega$ is the vector of perturbation values (indexed by darts), $N = n^4$, and each vector in $\Pi$ is either the indicator vector of a simple directed path in $G$ or the difference between two such vectors.

\begin{lemma}
With probability at least $1-O(1/n)$, all vertex-to-vertex shortest paths are unique.
\end{lemma}

\begin{proof}
Fix two vertices $x$ and $y$, and let $\Pi$ be the set of all indicator vectors for paths from $x$ to $y$ that are shortest with respect to the \emph{unperturbed} edge weights.  Lemma \ref{L:isolation} implies that the path in $\Pi$ with minimum \emph{perturbed} length is unique with probability at least $1-3/n^3$.  There are $O(n^2)$ pairs of vertices.
\end{proof}

\begin{lemma}
With probability at least $1-O(1/n)$, the tensest active dart is always unique.
\end{lemma}

\begin{proof}
Fix a point in our parametric shortest-path algorithm's execution between pivots, and let $\dart{u}{v}$ be the dart containing the moving source vertex.  For any other dart $\dart{x}{y}$, define
\[
	\excess(\dart{u}{v},\dart{x}{y})
	:= \dist(v,x) + w(\dart{x}{y}) - \dist(u,y),
\]
where $\dist(v,x)$ denotes the shortest-path distance from $v$ to $x$.  If $\dart{x}{y}$ is active, then 
\[
	\slack_\lambda(\dart{x}{y})
	~=~
	\excess(\dart{u}{v},\dart{x}{y}) ~+~ (1-\lambda) w(\dart{u}{v}) - \lambda w(\dart{v}{u}).
\]
It follows that the active dart with minimum slack is also the active dart with minimum excess.

For any active dart $\dart{x}{y}$, let $\pi(\arc{u}{v}, \arc{x}{y})$ denote the vector in $\set{-1,0,1}$ where each $1$ indicates either a dart on the shortest path from $v$ to $x$ or the dart $\dart{x}{y}$, and each $-1$ indicates a dart on the shortest path from $u$ to $y$.  (Because $x$ is blue and $y$ is red, these two shortest paths are disjoint.)  Let~$A_{\min}$ denote the set of active darts with minimum \emph{unperturbed} excess, and let $\Pi$ denote the set of all vectors $\pi(\arc{u}{v}, \arc{x}{y})$ such that $\dart{x}{y}\in A_{\min}$.  Lemma \ref{L:isolation} now implies that the dart in $A_{\min}$ with \emph{perturbed} excess is unique with probability at least $1-3/n^3$.

Finally, the set of active darts changes at most $O(n)$ times as we move the source point along any dart $\dart{u}{v}$, and there are $n$ possible source darts $\dart{u}{v}$.  It follows that there are at most $O(n^2)$ distinct sets $A_{\min}$ during the entire execution of the algorithm, even if we do not restrict the source dart to lie on the boundary of a particular face.
\end{proof}

We can make the collision probability as small as we like by choosing the perturbation values from a larger set of integers.  Specifically, to reduce the probability of a collision to $O(1/n^c)$, it suffices to draw perturbation values  from a set of size $\Theta(n^{c+3})$.

\begin{corollary}
Let $G$ be an directed graph with $n$ vertices and arbitrary edge weights, cellularly embedded in a surface of genus~$g$, and let $f$ be any face of $G$.  In $O(g n\log n)$ time and space \textbf{with high probability}, we can construct a data structure that supports the following operations:
\begin{itemize}\cramped
\item
Given any vertex $u$ on the boundary of $f$ and any other vertex $v$, return the shortest-path distance from $u$ to $v$ in $O(\log n)$ time.
\item
Given any vertex $u$ on the boundary of $f$ and any other vertex $v$, return the shortest path from $u$ to $v$ in $O(\log n + k \log \Delta)$ time, where $k$ is the number of edges in the path and $\Delta$ is the maximum degree of $G$. 
\end{itemize}
\end{corollary}

\begin{corollary}
Let $\Sigma$ be a combinatorial surface with complexity $n$ and genus $g$, with arbitrary edge weights.  We can find a shortest non-separating cycle of $\Sigma$ in $O(g^2 n\log n)$ time \textbf{with high probability}.
\end{corollary}

\begin{corollary}
Let $\Sigma$ be a combinatorial surface with complexity $n$, genus $g$, and $b$ boundaries,  with arbitrary edge weights.  We can find a shortest non-contractible cycle of $\Sigma$ in $O((g^2+b) n\log n)$ time \textbf{with high probability}.
\end{corollary}

\subsection{Lexicographic Perturbation}

Our second method is an efficient implementation of the \emph{lexicographic perturbation}, first proposed by Charnes \cite{c-odlp-52} and Dantzig~\etal~\cite{dow-gsmml-55}.  Hartvigsen and Mardon \cite{hm-apmcp-94} proposed a similar perturbation scheme for efficiently computing minimum-cost cycle bases in planar graphs; their algorithm (including the perturbation scheme) was more recently improved by Borradaile \etal~\cite{bsw-mscop-10, w-mcbap-09}.\footnote{Hartvigsen and Mardon’s perturbation scheme breaks ties by comparing indices of vertices instead of indices of darts.  Our implementation can be adapted to their scheme, with the same running time, with only minor modifications.}

Arbitrarily index the darts in $G$ from $1$ to $n$, and let $i(\dart{x}{y})$ denote the index of $\dart{x}{y}$.  We intuitively define the perturbed weight of $\dart{x}{y}$ as 
\[
	\tw(\dart{x}{y}) := w(\dart{x}{y}) + \e^{i(\dart{x}{y})},
\]
for some sufficiently small $\e>0$.  Again, instead of choosing an explicit numerical value for $\e$, we consider the limiting behavior as $\e$ approaches zero.  Equivalently, we define the perturbed weight of $\dart{x}{y}$ to be an $(n+1)$-dimensional vector, whose first coefficient is the unperturbed weight, whose $(i(\dart{x}{y})+1)$th coordinate is $1$, and whose other coordinates are all zero.  As in the previous scheme, path lengths and slacks are also vectors, which we compare lexicographically; the result of a comparison is determined by the first coordinate in which the two vectors differ.

Two linear combinations of lexicographically perturbed edge weights are equal if and only if they have identical coefficient vectors.  Thus, any two simple paths have distinct lengths, so shortest paths are unique, and any two dart pairs have distinct excesses, so the tensest active dart is always unique.  However, a naïve implementation of lexicographic perturbation requires $O(n)$ time for each addition, subtraction, or comparison of length vectors.

To implement lexicographic perturbation more efficiently, we store the shortest-path tree $T$ in a dynamic tree data structure that stores the \emph{indices} of all darts in $T$ and supports the following operations:
\begin{itemize}
\item
$\textsc{Create}(v)$: Create a new one-vertex tree.
\item
$\textsc{Link}(u, v)$: Add the edge $uv$ to $T$.
\item
$\textsc{Cut}(e)$: Remove edge $e$ from $T$.
\item
$\textsc{LCA}(u, v)$: Return the least-common ancestor of nodes $x$ and $y$.
\item
$\textsc{PathMinIndex}(u,v)$: Return the minimum-index dart on the directed path in $T$ from $x$ to $y$.
\end{itemize}
This new \emph{minimum-index tree structure} can be implemented using Euler-tour trees \cite{hk-rfdga-99, tarjan-eulertrees} or self-adjusting top trees~\cite{tw-satt-05}, so that each operation is supported in $O(\log n)$ amortized time.  We emphasize that the minimum-index structure is distinct from the primal tree structure that our algorithm already maintains, as described in Section~\ref{SS:dyn-trees}.

Our algorithms make only two types of comparisons that depend on the edge weights: (1)~comparisons between path-lengths in Dijkstra's algorithm when constructing the initial shortest-path tree, and (2) comparisons between slacks of active darts during the main algorithm.  Moreover, the running time of our algorithm is dominated by the total time to perform these comparisons.  Each of these comparisons can be performed in $O(\log n)$ time using the minimum-index structure, as described below.  We emphasize that our perturbation scheme does not support comparisons between \emph{arbitrary} path lengths or \emph{arbitrary} slacks, but only those comparisons necessary to support our algorithm.

\paragraph{Dijkstra's algorithm.}

We construct the initial shortest-path tree using Dijkstra's classical algorithm, which can be formulated as follows.  The algorithm maintains a tree $T$ of shortest paths from the source vertex $s$ to a subset of the vertices, initially containing only $s$.  Each vertex $x$ maintains a tentative distance $\dist(x)$ (initially $\infty$) and a tentative predecessor $\pred(x) \in T$ (initially undefined).  At each iteration, the algorithm finds the vertex $x\not\in T$ with  with minimum tentative distance, adds the dart $\dart{\pred(x)}{x}$ to~$T$, and updates the distances and predecessors of the out-neighbors of $x$.  

All comparisons made by this formulation of Dijkstra's algorithm have the following form: Given two darts $\dart{x}{y}$ and $\dart{x'}{y'}$, where $x$ and $x'$ are in $T$ (and $y$ and $y'$ are not), decide whether $\dist(x) + w(\dart{x}{y})$ is smaller or larger than $\dist(x') + w(\dart{x'}{y'})$.  (It is possible that $x=x'$ or $y=y'$, but not both.)  Moreover, the running time of the algorithm is dominated by the time of these comparisons.  Let $\hat{x}$ denote the least common ancestor of $x$ and $x'$.  The result of this comparison depends on the sign of the following expression:
\[
	\dist(\hat{x},x) + w(\dart{x}{y}) - \dist(\hat{x}, x') - w(\dart{x'}{y'}).
\]
If the two \emph{unperturbed} path lengths are equal, the sign of the \emph{perturbed} expression depends on the minimum index among $\dart{x}{y}$, $\dart{x'}{y'}$, and the darts in the paths in $T$ from $\hat{x}$ to $x$ and $x'$.

\begin{algo}
	\textul{$\textsc{SmallerLexDistance}(\dart{x}{y}, \dart{x'}{y'})$}\+
\\	if $\dist(x) + w(\dart{x}{y}) < \dist(x') + w(\dart{x'}{y'})$\+
\\		return $\dart{x}{y}$\-
\\	if $\dist(x) + w(\dart{x}{y}) > \dist(x') + w(\dart{x'}{y'})$\+
\\		return $\dart{x'}{y'}$\-
\\[0.5ex]
	$\hat{x} \gets \textsc{LCA}(x, x')$
\\	$\emph{min}^+ \gets
		\min\Set{\textsc{PathMinIndex}(\hat{x},x),~
			\;i(\dart{x}{y})}$
\\	$\emph{min}^- \gets
		\min\Set{\textsc{PathMinIndex}(\hat{x},x'),~
			i(\dart{x'}{y'})}$
\\[0.5ex]
	if $\emph{min}^+ < \emph{min}^-$\+
\\		return $\dart{x}{y}$\-
\\	else\+
\\		return $\dart{x'}{y'}$\-
\end{algo}
\textsc{SmallerLexDistance} uses a constant number of dynamic forest operations and therefore runs in $O(\log n)$ amortized time.  Replacing all simple comparisons in Dijkstra’s algorithm with \textsc{SmallerLexDistance} increases its running time by a factor of $O(\log n)$; the resulting algorithm computes a tree of lexicographically-shortest paths.  Essentially the same algorithm was previously described by Borradaile \etal\ \cite[Section 8.1]{bsw-mscop-10x}.

\paragraph{Main algorithm.}

In our main algorithm, we store only the \emph{unperturbed} shortest-path distances and slacks in the primal tree and dual forest structures, respectively; the lexicographically perturbed weights are used only to break ties in \textsc{MinPath} queries to the dual forest structure and in maintaining the priority queue of green subtrees, as described in Section~\ref{SS:fast-pivot}.  In both cases, we need to break ties between active darts with the same unperturbed slack.

Suppose we are moving the source vertex along $\dart{u}{v}$.  As in our randomized perturbation scheme, the key observation is that the \emph{difference} between slacks is independent of the parameter $\lambda$.  For any two active darts $\dart{x}{y}$ and $\dart{x'}{y'}$, we have
\begin{align*}
	\lefteqn{\slack(\dart{x}{y}) - \slack(\dart{x'}{y'})}\qquad\qquad
\\	&=
	\excess(\dart{u}{v}, \dart{x}{y}) - \excess(\dart{u}{v}, \dart{x'}{y'})
\\	&=
	\dist(v,x) - \dist(v,x') + \dist(u,y') - \dist(u,y)
	+ w(\dart{x}{y}) - w(\dart{x'}{y'}).
\end{align*}
Let $\hat{x}$ be the least common ancestor of $x$ and $x'$ in the current shortest-path tree $T_\lambda$, and let $\hat{y}$ be the least common ancestor of $y$ and $y'$ in $T_\lambda$.  Then we have 
\begin{align*}
	\lefteqn{\slack(\dart{x}{y}) - \slack(\dart{x'}{y'})}\qquad\qquad
\\	&=
	\dist(\hat{x},x) - \dist(\hat{x},x')
	+ \dist(\hat{y},y') - \dist(\hat{y},y)
	+ w(\dart{x}{y}) - w(\dart{x'}{y'}).
\end{align*}
The paths from $\hat{x}$ to $x$ and $x'$ lie in the blue subtree of $T_\lambda$; the paths from $\hat{y}$ to $y$ and $y'$ lie in the red subtree of $T_\lambda$; and the darts $\dart{x}{y}$ and $\dart{x'}{y'}$ are both green.  Thus, all six of these directed paths are edge-disjoint.  If the \emph{unperturbed} slacks of $\dart{x}{y}$ and $\dart{x'}{y'}$ are equal, then the result of the \emph{perturbed} slack comparison depends on the minimum index among all darts in these six paths.
\begin{algo}
	\textul{$\textsc{SmallerLexSlack}(\dart{x}{y}, \dart{x'}{y'})$}\+
\\	if $\slack(\dart{x}{y}) < \slack(\dart{x'}{y'})$\+
\\		return $\dart{x}{y}$\-
\\	if $\slack(\dart{x}{y}) > \slack(\dart{x'}{y'})$\+
\\		return $\dart{x'}{y'}$\-
\\[0.5ex]
	$\hat{x} \gets \textsc{LCA}(x, x')$
\\	$\hat{y} \gets \textsc{LCA}(y, y')$
\\	$\emph{min}^+ \gets
		\min\Set{\textsc{PathMinIndex}(\hat{x},x),~
			\;\textsc{PathMinIndex}(\hat{y},y'),~
			i(\dart{x}{y})}$
\\	$\emph{min}^- \gets
		\min\Set{\textsc{PathMinIndex}(\hat{x},x'),~
			\textsc{PathMinIndex}(\hat{y},y),~
			\;i(\dart{x'}{y'})}$
\\[0.5ex]
	if $\emph{min}^+ < \emph{min}^-$\+
\\		return $\dart{x}{y}$\-
\\	else\+
\\		return $\dart{x'}{y'}$\-
\end{algo}

\textsc{SmallerLexSlack} uses a constant number of dynamic forest operations and therefore runs in $O(\log n)$ amortized time.  Replacing simple slack comparisons with \textsc{SmallerLexSlack} increases the amortized time for each dual forest operation (\textsc{MinPath}, \textsc{AddPath}, \textsc{Cut}, or \textsc{Link}) from $O(\log n)$ to $O(\log^2 n)$.

\begin{corollary}
Let $G$ be an directed graph with $n$ vertices and arbitrary edge weights, cellularly embedded in a surface of genus~$g$, and let $f$ be any face of $G$.  In $O(g n\log^2 n)$ time and $O(gn\log n)$ space \textbf{in the worst case}, we can construct a data structure that supports the following operations:
\begin{itemize}\cramped
\item
Given any vertex $u$ on the boundary of $f$ and any other vertex $v$, return the shortest-path distance from $u$ to $v$ in $O(\log n)$ time.
\item
Given any vertex $u$ on the boundary of $f$ and any other vertex $v$, return the shortest path from $u$ to $v$ in $O(\log n + k \log \Delta)$ time, where $k$ is the number of edges in the path and $\Delta$ is the maximum degree of $G$. 
\end{itemize}
\end{corollary}

\begin{corollary}
Let $\Sigma$ be a combinatorial surface with complexity $n$ and genus $g$, with arbitrary edge weights.  We can find a shortest non-separating cycle of $\Sigma$ in $O(g^2 n\log^2 n)$ time \textbf{in the worst case}.
\end{corollary}

\begin{corollary}
Let $\Sigma$ be a combinatorial surface with complexity $n$, genus $g$, and $b$ boundaries, with arbitrary edge weights.  We can find a shortest non-contractible cycle of $\Sigma$ in $O((g^2+b) n\log^2 n)$ time \textbf{in the worst case}.
\end{corollary}

% --------------------------------------------------------------------------------
\newpage

\bibliographystyle{plain}

\bibliography{biblio}

\end{document}